\newtheorem{theorem}{Theorem}
\newtheorem{fact}{Fact}
\newtheorem{claim}{Claim}
\newtheorem{corollary}{Corollary}
\newtheorem{lemma}{Lemma}
\newtheorem{definition}{Definition}
\newtheorem{remark}{Remark}
\newcommand{\argmax}[1]{\underset{#1}
{\operatorname{arg}\,\operatorname{max}}\;}
\newcommand{\argmin}[1]{\underset{#1}
{\operatorname{arg}\,\operatorname{min}}\;}
\def\mcal{\mathcal}
\def\mbb{\mathbb}
\def\JB{J_{\text{b}}}
\def\JBstar{J_{\text{b}}^*}
\def\JF{J_{\text{f}}}
\def\JFstar{J_{\text{f}}^*}
\def\RB{R_{\text{b}}}
\def\RF{R_{\text{f}}}
\def\RBstar{R_{\text{b}}^*}
\def\RFstar{R_{\text{f}}^*}
\def\gammaB{\gamma_{\text{b}}}
\def\gammaF{\gamma_{\text{f}}}
\def\alphasys{\alpha_{\text{sys}}}
\title{\LARGE \bf
A risk-security tradeoff in graphical coordination games
%Adversarial influence of graphical coordination games: tradeoffs in security and risk 
%Risk and security tradeoffs in graphical coordination games
}
\author{Keith Paarporn, Mahnoosh Alizadeh, and Jason R. Marden 
\thanks{K. Paarporn ({\tt\small kpaarporn@ucsb.edu}), M. Alizadeh ({\tt\small alizadeh@ucsb.edu}), and J. R. Marden ({\tt\small jrmarden@ece.ucsb.edu}) are with the Department of Electrical and Computer Engineering at the University of California, Santa Barbara, CA.  } \thanks{A preliminary version of this paper has been submitted for conference publication (CDC 2019) and can be found at {\tt\footnotesize https://ece.ucsb.edu/\~jrmarden/files/MardenC9.pdf}. The analysis of randomized operator strategies extends the previous results. Complete proofs are also provided here. This work is supported by UCOP Grant LFR-18-548175, ONR grant \#N00014-17-1-2060, and NSF grant \#ECCS-1638214.}
}
\begin{document}

\maketitle
\thispagestyle{empty}
\pagestyle{empty}

%%%%%%%%%%%%%%%%%%%%%%%%%%%%%%%%%%%%%%%%%%%%%%%%%%%%%%%%%%%%%%%%%%%%%%%%%%%%%%%%
\begin{abstract}
	A system relying on the collective behavior of decision-makers can be vulnerable to a variety of adversarial attacks. How well can a system operator protect performance in the face of these risks? We frame this question in the context of graphical coordination games, where the agents in a network choose among two conventions and derive benefits from coordinating neighbors, and system performance is measured in terms of the agents' welfare. In this paper, we assess an operator's ability to mitigate two types of adversarial attacks - 1) {broad} attacks, where the adversary incentivizes all agents in the network and 2) {focused} attacks, where the adversary can force a selected subset of the agents to commit to a prescribed convention. As a mitigation strategy, the system operator can implement a class of distributed algorithms that govern the agents' decision-making process. Our main contribution characterizes the operator's fundamental trade-off between security against worst-case {broad} attacks and vulnerability from {focused} attacks. We show that this tradeoff significantly improves when the operator selects a decision-making process at random. Our work highlights the design challenges a system operator faces in maintaining resilience of networked distributed systems.

\end{abstract}
%%%%%%%%%%%%%%%%%%%%%%%%%%%%%%%%%%%%%%%%%%%%%%%%%%%%%%%%%%%%%%%%%%%%%%%%%%%
%%%%%%%%%%%%%%%%%%%%%%%%%%%%%  S E C T I O N 1 %%%%%%%%%%%%%%%%%%%%%%%%%%%%%%%%%%%%%%
%%%%%%%%%%%%%%%%%%%%%%%%%%%%%%%%%%%%%%%%%%%%%%%%%%%%%%%%%%%%%%%%%%%%%%%%%%%
\section{Introduction}\label{sec:intro}
% !TEX root = TAC_adversaries.tex

Networked distributed  systems typically operate without centralized planning or control, and instead rely on local interactions and communication between the comprising agents. These systems arise in a variety of engineering applications such as teams of mobile robots and sensor networks \cite{Cortes_2006,Mesbahi_2010,Akyildiz_2002}. They are also prevalent in social dynamics \cite{Young_2001,Montanari_2010} and biological populations \cite{West_2007}.  

%From an engineering perspective, a system operator able to regulate local interactions would like to steer the dynamics to a desirable outcome.

%However, the decentralized nature of these systems leave them vulnerable to adversarial influences. 

%An adversary intent on degrading the system's efficiency may be able to influence the decision-making processes of selected agents. Such perturbations to the system's dynamics can potentially lead to unwanted outcomes. For example in social networks, individual opinions can be shaped from competing external information sources, resulting in a polarized society \cite{DelVicario_2016,Mao_2018}. When feasible, a system operator takes  measures to mitigate adversarial attacks. The literature on cyber-physical system security studies many aspects of this interplay. For instance, optimal controllers are designed to mitigate denial-of-service,  estimation, and deception attacks \cite{Amin_2009, Amin_2010,Fawzi_2011,Pasqualetti_2013, Sundaram_2018}.  Alternatively, the system may be subject to  environmental conditions. For instance in social networks, stubborn individuals that have fixed, extreme opinions but still influence the opinions of others could promote polarization in society. 

The transition from a centralized to a distributed architecture may leave a system vulnerable to a variety of adversarial attacks. An adversary may be able to manipulate the decision-making processes of the agents. Such dynamical perturbations can potentially lead to unwanted outcomes. For example in social networks, individual opinions can be shaped from  external information sources, resulting in a polarized society \cite{DelVicario_2016,Mao_2018}. When feasible, a system operator takes  measures to mitigate adversarial influences. The literature on cyber-physical system security studies many aspects of this interplay. For instance, optimal controllers are designed to mitigate denial-of-service,  estimation, and deception attacks \cite{Amin_2009, Amin_2010,Fawzi_2011,Pasqualetti_2013, Sundaram_2018}. 

% Alternatively, the system may be vulnerable to damaging environmental conditions. For instance in social networks, stubborn individuals that have fixed, extreme opinions but still influence the opinions of others could promote polarization in society \cite{Acemoglu_2010,Ghaderi_2014}. % need cites here

This paper investigates security measures that a system operator can take against adversarial influences when the underlying system is a  graphical coordination game \cite{Montanari_2010,Auletta_2012}, where agents in a network decide between two choices, $x$ or $y$. One may think of these choices as two competing products, e.g. iPhone vs Android, two conflicting social norms, or two opposing political parties. Each agent derives a positive benefit from  interactions with coordinating neighbors, and zero benefits from mis-coordinating ones.  The system's efficiency is defined by the ratio of total benefits of all agents to the maximal attainable benefits over all configurations of choices. 

The goal of the system operator is to design a local decision-making rule for each agent in the system so that the emergent collective behavior optimizes system efficiency. One algorithm that achieves this goal is known as log-linear learning \cite{Blume_1995,Marden_2012,Tatarenko_2014}. More formally, the agents follow a ``perturbed" best reply dynamics where the agents' local objectives are precisely equal to their local welfare. We seek to address the question of whether this particular algorithm is robust to adversarial influences. That is, does this algorithm preserve system efficiency when the agents' decision-making processes are manipulated by an adversary? If not, can the operator alter the agents' local objectives to mitigate such attacks?

We consider two adversarial attack models - \emph{broad} and \emph{focused} attacks. In {broad} attacks, the adversary incentivizes every agent in the network (hence broad) with a convention, influencing their decision-making process.  This could depict distributing  political ads with the intention of polarizing voters. In {focused} attacks, the adversary targets a specific set of  agents in the network, forcing them to commit to $x$ or $y$. These targeted, or fixed agents consequently do not update their choices over time but still influence the decisions of others. For instance, they could portray loyal consumers of a brand or product, or staunch supporters of a political party.  Fixed agents and their effects on system performance have been extensively studied in the context of opinion dynamics and optimization algorithms \cite{Acemoglu_2010,Ghaderi_2014,Sundaram_2018}. 

%We seek to characterize the extent to which the system is susceptible to these attacks, i.e. the risks, as well as to evaluate how well the operator can mitigate their impacts.

The first contribution of this paper is a characterization of worst-case risk metrics from both adversarial attacks as a function of the operator's algorithm design parameter (Section \ref{sec:models}). We define risk in this paper as the system's distance to optimal efficiency. By worst-case here we mean the maximum risk among all connected network topologies subject to any admissible adversarial attack. Hence, our analysis identifies the network topologies on which worst-case risks are attained (Section \ref{sec:analysis}).  We extend this analysis to randomized operator strategies (Sections \ref{sec:mixing}, \ref{sec:analysis2}). 

The second contribution of this paper answers the question ``if the operator succeeds in protecting the system from one type of attack, how vulnerable does it leave the system to the other?" We identify a fundamental tradeoff between security against {broad} attacks and risks from {focused} attacks.  We then show randomized operator strategies significantly improves the set of attainable risk levels and their associated tradeoffs (Section \ref{sec:mixing}).

By characterizing this interplay, we contribute to previous work that studied the impact of adversarial influence in graphical coordination games \cite{Borowski_2015,Brown_2018,Canty_2018}. These works analyze worst-case damages that can be inflicted by varying degrees of adversarial sophistication and intelligence in the absence of a system operator. However, these results were derived only in specific graph structures, namely ring graphs, whereas our analysis considers adversarial influence in any graph topology.

%%%%%%%%%%%%%%%%%%%%%%%%%%%%%%%%%%%%%%%%%%%%%%%%%%%%%%%%%%%%%%%%%%%%%%%%%%%
%%%%%%%%%%%%%%%%%%%%%%%%%%%%%  S E C T I O N 2 %%%%%%%%%%%%%%%%%%%%%%%%%%%%%%%%%%%%%%
%%%%%%%%%%%%%%%%%%%%%%%%%%%%%%%%%%%%%%%%%%%%%%%%%%%%%%%%%%%%%%%%%%%%%%%%%%%
\section{Preliminaries}\label{sec:prelim}
% !TEX root = TAC_adversaries.tex

\subsection{Graphical coordination games}
A graphical coordination game is played between a set of agents $\mcal{N}=\{1,\ldots,N\}$ over a connected undirected network $G = (\mcal{N},\mcal{E})$ with node set $\mcal{N}$ and edge set $\mcal{E} \subset \mcal{N}\times\mcal{N}$. Agent $i$'s set of neighbors is written as $\mcal{N}_i = \{j : (i,j) \in \mcal{E} \}$. Each agent $i$ selects a choice $a_i$ from its action set $\mcal{A}_i = \{x,y\}$.  The choices of all the agents constitutes an action profile $a = (a_1,\ldots,a_N)$, and we denote the set of all action profiles as $\mcal{A} = \Pi_{i=1}^N \mcal{A}_i$. The local interaction between two agents $(i,j) \in \mcal{E}$ is based on a $2\times 2$ matrix game, described by the payoff matrix $V : \{x,y\}^2 \rightarrow \mbb{R}$,
\begin{equation}\label{eq:base_game}
    \begin{tabular}{cc|c|c|}
      & \multicolumn{1}{c}{} & \multicolumn{2}{c}{Player $j$}\\
      & \multicolumn{1}{c}{} & \multicolumn{1}{c}{$x$}  & \multicolumn{1}{c}{$y$} \\\cline{3-4}
      \multirow{2}*{Player $i$}  & $x$ & $1+\alpha_{\text{sys}},1+\alpha_{\text{sys}}$ & $0,0$ \\\cline{3-4}
      & $y$ & $0,0$ & $1,1$ \\\cline{3-4}
    \end{tabular}
\end{equation}
where $\alpha_{\text{sys}} > 0$ is the system \emph{payoff gain}. It indicates that $x$ is an inherently superior product over $y$ when users coordinate. Here, agents would rather coordinate than not, but prefer to coordinate on $x$. Agent $i$'s \emph{benefit} is the sum of payoffs derived from playing the game \eqref{eq:base_game} with each of its network neighbors:
\begin{equation}\label{eq:utility}
	W_i(a_i,a_{-i}) := \sum_{j\in \mcal{N}_i} V(a_i,a_j) \ .
\end{equation}
 A measure of system \emph{welfare} defined over $\mcal{A}$ is
\begin{equation}\label{eq:welfare}
	W(a) := \sum_{i=1}^N W_i(a_i,a_{-i}),
\end{equation}
which is simply the sum of all agent benefits. The \emph{system efficiency} for action profile $a\in\mcal{A}$ is defined as  
\begin{equation}\label{eq:efficiency}
	\frac{W(a)}{\max_{a'\in\mcal{A}} W(a')}.
\end{equation}
For $\mcal{A} = \{x,y\}^N$, the all-$x$ profile $\vec{x}$ maximizes welfare. This does not necessarily hold for arbitrary action spaces.  

\subsection{Log-linear learning algorithm}
Log-linear learning is a distributed stochastic algorithm governing how players' decisions evolve over time \cite{Marden_2012,Auletta_2012,Blume_1995}. It may be applied to any instance of a game with each player having a well-defined local utility function $U_i: \mcal{A} \rightarrow \mbb{R}$ over a set of action profiles $\mcal{A}$ with an underlying interaction graph $G$. That is, agent $i$'s local utility is a function of its action $a_i$ and actions of its neighbors in $G$. 

Agents update their decisions $a(t) \in \mcal{A}$ over discrete time steps $t=0,1,\ldots$. Assume $a(0)$ is arbitrarily determined. For step $t \geq 1$, one agent $i$ is selected uniformly at random from the population. It updates its action to $a_i(t)=z \in \mcal{A}_i$ with probability 
\begin{equation}
	%\frac{\text{exp}(\beta U_i(s,a_{-i}(k-1))}{\text{exp}(\beta U_i(s,a_{-i}(k-1)) + \text{exp}(\beta U_i(y,a_{-i}(k-1))},
	\frac{\text{exp}(\beta U_i(z,a_{-i}(t-1))}{\sum_{z' \in \mcal{A}_i} \text{exp}(\beta U_i(z',a_{-i}(t-1))},
\end{equation}
where $\beta > 0$ is the rationality parameter. All other agents repeat their previous actions: $a_{-i}(t) = a_{-i}(t-1)$.  For large values of $\beta$, $i$ selects a best-response to the previous actions of others with high probability, and for values of $\beta$ near zero, $i$ randomizes among its actions $\mcal{A}_i$ uniformly at random. This induces an irreducible Markov chain over the action space $\mcal{A}$, with a unique stationary distribution $\pi_\beta \in \Delta(\mcal{A})$. The \emph{stochastically stable states} (SSS) $a \in \mcal{A}$ are the action profiles contained in the support of the stationary distribution in the high rationality limit: they satisfy $\pi(a) = \lim_{\beta\rightarrow\infty} \pi_\beta(a) > 0$. Such a limiting distribution exists and is unique \cite{Young_1993,Blume_1995,Foster_1990}. We write the set of stochastically stable states as 
\begin{equation}\label{eq:LLL}
	\text{LLL}(\mcal{A},\{U_i\}_{i\in\mcal{N}};G).
\end{equation}

For graphical coordination games, the log-linear learning algorithm specified by the action set $\mcal{A} = \{x,y\}^N$ and utilities $\{W_i\}_{i\in\mcal{N}}$ selects the welfare-maximizing profile $\vec{x}$ as the stochastically stable state irrespective of the graph topology $G$. This can be shown using standard potential game arguments \cite{Marden_2012} (we provide these details in Section \ref{sec:analysis}). That is, $\vec{x} = \text{LLL}(\mcal{A},\{W_i\}_{i\in\mcal{N}};G)$ for all $G \in \mcal{G}_N$, where $\mcal{G}_N$ is the set of all connected undirected graphs on $N$ nodes. 

However, if an adversary is able to manipulate the agents' local decision-making rules, this statement may no longer hold true. A system operator may be able to alter the agents' local utility functions with the goal of mitigating the loss of system efficiency in the presence of adversarial influences. In particular, we consider the class of local utility functions $\{U_i^\alpha\}_{i\in\mcal{N}}$ parameterized by $\alpha>0$. Specifically, $U_i^\alpha$ takes the same form as the benefit function \eqref{eq:utility} where $\alpha_{\text{sys}}$ is replaced with a \emph{perceived gain} $\alpha$ that is under the operator's control. We next introduce models of adversarial attacks in graphical coordination games. We then evaluate the performance of this class of distributed algorithms in the face of adversarial attacks. 
%We denote the set of stochastically stable states with $\text{LLL}(G,\{U_i\}_{i=1}^n,\mcal{A})$.

\begin{figure*}[t]
	\centering
	\begin{subfigure}{.45\textwidth}
		\centering
		\includegraphics[scale=1.25]{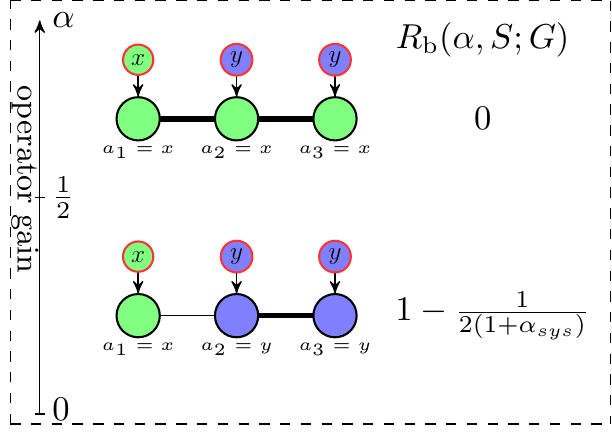}
		\caption{}
		\label{fig:GA_example}
	\end{subfigure}
	\begin{subfigure}{.45\textwidth}
		\centering
		\includegraphics[scale=1.25]{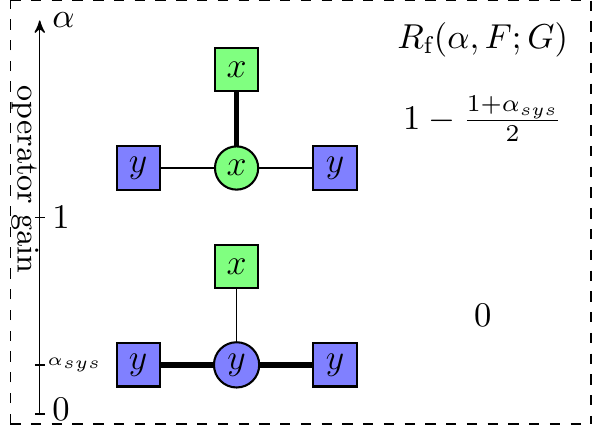}
		\caption{}
		\label{fig:GE_example}
	\end{subfigure}
	\caption{\small (Left) An example three-node line network under a {broad} adversarial attack. The imposter nodes are depicted as the labelled smaller circles and agents in the network are the bigger circles. The color of each circle indicates the node's action - green for $x$, blue for $y$. In this example, maximum welfare is $\max_{a\in\mcal{A}} W(a) = 4(1+\alphasys)$, achieved when all three agents play $x$. The adversary's target set $S$ attaches an $x$-imposter to node 1 and $y$-imposters to nodes $2$ and $3$.  For operator gains $\alpha \leq \frac{1}{2}$, $a=(a_1,a_2,a_3)=(x,y,y)$ is the welfare-minimizing SSS, i.e. it satisfies $a=\argmin{a\in\text{LLL}(\mcal{A},\alpha,S;G)} W(a)$. This gives a risk of $\RB(\alpha,S;G) = 1 - \frac{1}{2(1+\alpha_{\text{sys}}) } $.  For $\alpha > \frac{1}{2}$, the welfare-minimizing SSS is $(x,x,x)$. This gives optimal efficiency, i.e. a risk of $0$. (Right) An example of a four node star network under a {focused} attack where a subset $F$ of three nodes are targeted to be fixed (squares). Only the center node is unfixed. In this example, the maximum welfare is $\max_{a\in\mcal{A}_F} W(a) = 4$, achieved when the center plays $y$. This is because the alternative action (when center plays $x$) gives the suboptimal welfare $2(1+\alphasys) < 4$ due to $\alpha_{\text{sys}} < 1$. For operator gains $\alpha < 1$, the center node plays $y$ in the SSS. This yields optimal efficiency, i.e. the risk is $\RF(\alpha,F;G) = 0$.  For $\alpha \geq 1$, the center node plays $x$, giving a risk of $\RF(\alpha,F;G) =1 - \frac{1 + \alpha_{\text{sys}}}{2}$. The methods to calculate stochastically stable states under both types of attacks follow standard potential game arguments and are detailed in Section \ref{sec:analysis}.} 
\end{figure*}

%%%%%%%%%%%%%%%%%%%%%%%%%%%%%%%%%%%%%%%%%%%%%%%%%%%%%%%%%%%%%%%%%%%%
\section{Models of adversarial influence}\label{sec:models}

% !TEX root = TAC_adversaries.tex
%\footnote{One could consider other ways the operator can take defensive measures, e.g. adding and removing links, or targeted counter-attacks. However in this paper, we focus on the single parameter strategy $\alpha>0$.}
In this section, we outline two models of adversarial attacks in graphical coordination games - \emph{broad} and \emph{focused} attacks. The system operator  specifies the local utility functions $\{U_i^\alpha\}$ that govern the log-linear learning algorithm by selecting the perceived payoff gain $\alpha>0$. Our goal is to assess the performance of this range of algorithms on two corresponding worst-case risk metrics, which we define and characterize. We then identify fundamental tradeoff relations between these two risk metrics. 

\subsection{{Broad} attacks and worst-case risk metric}
We consider a scenario where the system is subject to broad adversarial attacks. For each agent in the network, the adversary attaches a single imposter node that acts as a  neighbor that always plays $x$ or $y$. These nodes are not members of the network but affect the decision making of agents that are. Let $S_x \subseteq \mcal{N}$ ($S_y$) be the set of agents targeted with an imposter $x$ ($y$) node.  We call the \emph{target set} $S = (S_x,S_y)$. Any target set satisfies $S_x \cap S_y = \varnothing$ and $S_x \cup S_y = \mcal{N}$. We call $\mcal{T}(G)$ the set of all possible target sets $S$ on the graph $G$. Given $\alpha>0$, the agents' \emph{perceived} utilities are 
\begin{equation}\label{eq:utilities2}
	\tilde{U}_i^{\alpha}(a_i,a_{-i}) := 
	\begin{cases} 
		U_i^{\alpha}(a_i,a_{-i}) + \mathds{1}(a_i=y) \quad & i \in S_y \\  
		U_i^{\alpha}(a_i,a_{-i}) + (1+\alpha)\mathds{1}(a_i=x)  & i \in S_x 
		%U_i^{\alpha}(a_i,a_{-i}) & i \notin S
	\end{cases}.
\end{equation}
%
%\begin{figure*}[t]
%	\centering
%	\begin{subfigure}{.4\textwidth}
%		\centering
%		\includegraphics[scale=.3]{}
%		\caption{}
%		\label{fig:RAstar}
%	\end{subfigure}  
%	\begin{subfigure}{.4\textwidth}
%		\centering
%		\includegraphics[scale=.3]{}
%		\caption{}
%		\label{fig:REstar}
%	\end{subfigure}
%	\caption{\small (a) The worst-case risk from {broad} attacks $\RBstar$ is a piecewise constant function. Its explicit form is given by eq. \eqref{eq:RA_WC}. (b) The worst-case risk from {focused} attacks $\RFstar$ (eq. \eqref{eq:REstar}). Its explicit form is given by eq. \eqref{eq:RE_WC}.  In both plots, we set  $\alpha_{\text{sys}} = 1/4$.} 
%	\label{fig:plots}
%\end{figure*}

\begin{figure*}[t]
	\centering
	\begin{subfigure}{.48\textwidth}
		\centering
		\includegraphics[scale=1.1]{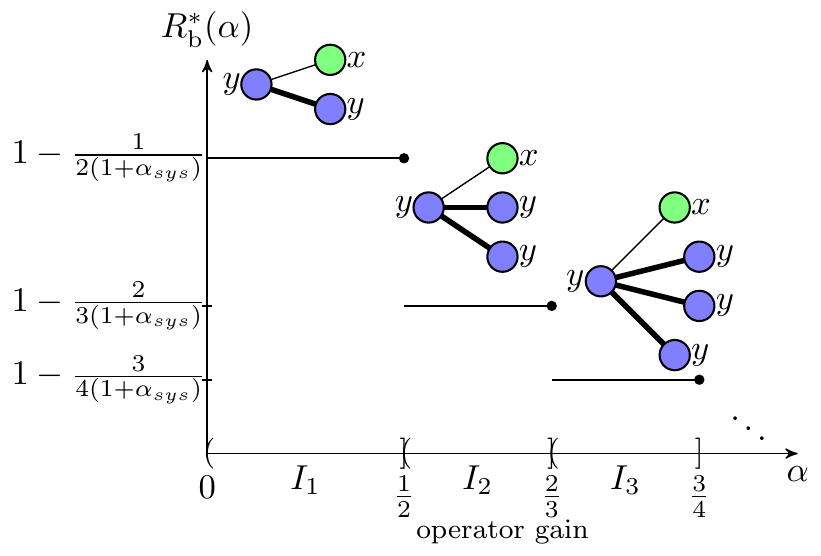}
		\caption{}
		\label{fig:RAstar}
	\end{subfigure}  
	\begin{subfigure}{.48\textwidth}
		\centering
		\includegraphics[scale=1.1]{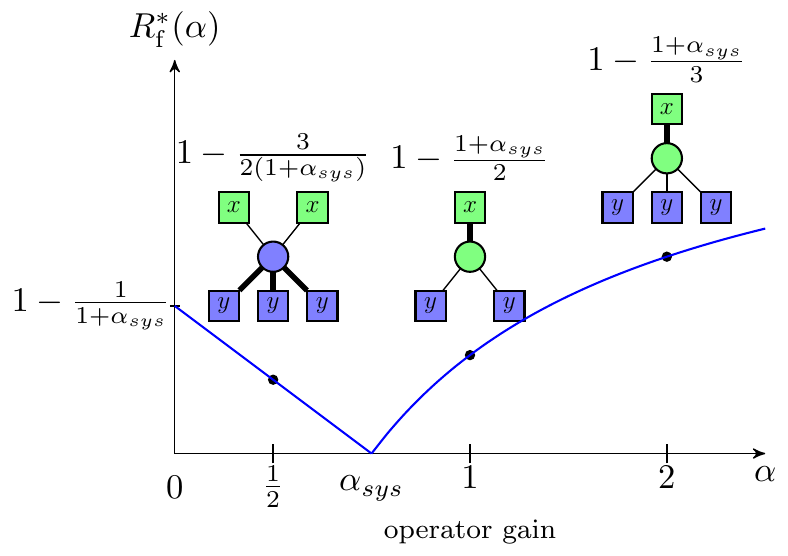}
		\caption{}
		\label{fig:REstar}
	\end{subfigure}
	\caption{\small (a) The worst-case risk from {broad} attacks $\RBstar(\alpha)$  \eqref{eq:RA_WC} is a piecewise constant function defined over countably infinite half-open intervals. The graphs and their corresponding target set which attain each level of worst-case broad risk are illustrated for $\alpha < 1$. Here, the $x,y$ labels indicate the type of imposter influence on the agents (circles) in the network, and the color of the circles depict the action played in the welfare-minimizing SSS (green=$x$, blue=$y$). If $\alpha \in I_k$, $k=1,2,\ldots$(recall \eqref{eq:Ik}), the worst-case risk is achieved on a star graph of $k+2$ nodes where all nodes but one are targeted with a $y$ imposter. The one leaf node has an $x$ imposter attached, giving a single miscoordinating link in the network.  (b) The worst-case risk from {focused} attacks $\RFstar(\alpha)$ \eqref{eq:RE_WC}. The graphs and their corresponding fixed sets which attain the worst-case focused risks are illustrated for $\alpha = \frac{1}{2}, 1$, and $2$. The nodes' color represents the worst-case SSS at $\alpha$ (blue $=y$, green $=x$). The targeted fixed agents are represented as squares and the unfixed agents as circles. Here $\frac{1}{2}< \alphasys < 1$. The proofs establishing all worst-case graphs are detailed in Section \ref{sec:analysis}.} 
	\label{fig:risk_curves}
\end{figure*}

In the notation of \eqref{eq:LLL}, the set of stochastically stable states is written $\text{LLL}(\mcal{A},\{\tilde{U}_i^\alpha\}_{i\in\mcal{N}};G)$.
However for more specificity, we will refer to it in this context as $\text{LLL}(\mcal{A},\alpha,S;G)$. The induced network \emph{efficiency} is defined as
\begin{equation}\label{eq:JA}
	\begin{aligned}
		\JB(\alpha,S;G) &:=  \frac{\min_{a\in\text{LLL}(\mcal{A},\alpha,S;G)} W(a)}{\max_{a'\in\mcal{A}} W(a')} \\ 
		&=  \frac{\min_{a\in\text{LLL}(\mcal{A},\alpha,S;G)} W(a)}{(1+\alphasys)|\mcal{E}|},
	\end{aligned}
\end{equation}
which is the ratio of the welfare induced by the welfare-minimizing SSS  to the optimal welfare. The second equality above is due to the fact that optimal welfare is attained at $\vec{x}$ (all play $x$). We re-iterate that the imposter nodes serve only to modify the stochastically stable states, and do not contribute to the system welfare $W(a)$ \eqref{eq:welfare}. The \emph{risk} from broad attacks faced by the system operator in choosing gain $\alpha$  is defined as
\begin{equation}\label{eq:imposter_risk}
	\RB(\alpha,S;G) := 1 - \JB(\alpha,S;G)  .
\end{equation}
Risk measures the distance from optimal efficiency under operating gain $\alpha$. Fig. \ref{fig:GA_example} illustrates an example of a three-node network subject to a {broad} adversarial attack. The extent to which systems are susceptible to {broad} attacks is captured by the following definition of worst-case risk.
\begin{definition}
	The \emph{worst-case risk} to  {broad} attacks is given by
    	\begin{equation}\label{eq:RAstar}
    		R_{\text{\emph{b}}}^*(\alpha) := \max_{N\geq 3} \max_{G\in\mcal{G}_N}\max_{S\in\mcal{T}(G)} R_{\text{\emph{b}}}(\alpha,S;G)  , \\
    	\end{equation}
\end{definition}
The quantity $\RBstar(\alpha)$ is the cost metric that the system operator wishes to reduce given uncertainty of the network structure and target set.
\begin{theorem}\label{thm_RA_WC}
    Let $\alpha > 0$. The worst-case broad risk is
    \begin{equation}\label{eq:RA_WC}
        R_{\text{\emph{b}}}^*(\alpha) = 
        \begin{cases}
            1 - \left(\frac{k}{k+1}\right) \left(\frac{1}{1+\alpha_{\text{sys}}}\right) \ &\text{if } \alpha \in  I_k ,\ \text{for } k=1,2,\ldots \\
            1-\frac{1}{1+\alpha_{\text{sys}}} &\text{if } \alpha \in \left[1,\frac{3}{2}\right] \\
            0 &\text{if } \alpha > \frac{3}{2}
        \end{cases}
    \end{equation}
    where
    \begin{equation}\label{eq:Ik}
    	I_k := \left(\frac{k-1}{k},\frac{k}{k+1}\right].
    \end{equation}
\end{theorem}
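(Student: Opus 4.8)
The plan is to recast the perceived game as an exact potential game, identify the stochastically stable states with the \emph{global} maximizers of that potential, and then bound the worst-case risk by controlling how large a fraction of miscoordinating edges a potential maximizer can sustain. First I would verify that, for a fixed target set $S=(S_x,S_y)$ and gain $\alpha$, the perceived utilities $\tilde U_i^\alpha$ admit the exact potential
\[
\tilde\Phi^\alpha(a)=\sum_{(i,j)\in\mcal{E}}V_\alpha(a_i,a_j)+\sum_{i\in S_y}\mathds{1}(a_i=y)+(1+\alpha)\sum_{i\in S_x}\mathds{1}(a_i=x),
\]
where $V_\alpha$ is the base game \eqref{eq:base_game} with $\alphasys$ replaced by the perceived gain $\alpha$. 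Because a change in $a_i$ only affects the edges incident to $i$ and $i$'s own imposter term, $\tilde\Phi^\alpha$ is an exact potential, so by the standard log-linear-learning result cited in the paper the set $\text{LLL}(\mcal{A},\alpha,S;G)$ equals the set of maximizers of $\tilde\Phi^\alpha$. The welfare-minimizing stochastically stable state $a^*$ is then the lowest-welfare maximizer.

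Next I would convert the risk into a purely combinatorial quantity. Partitioning the nodes by $a^*$ into $X=\{i:a^*_i=x\}$ and $Y=\{i:a^*_i=y\}$, and letting $e_X,e_Y,e_C$ denote the numbers of $x$-internal, $y$-internal, and cross (miscoordinating) edges, the factor common to $W(a^*)$ and to the optimum $W(\vec{x})$ cancels and gives $\RB(\alpha,S;G)=\frac{\alphasys e_Y+(1+\alphasys)e_C}{(1+\alphasys)(e_X+e_Y+e_C)}$. Dropping $e_X\ge0$ from the denominator and writing $t=e_C/(e_Y+e_C)$ for the miscoordinating-edge fraction reduces everything to $\RB\le\frac{\alphasys+t}{1+\alphasys}$; since $1-\frac{k}{k+1}\frac{1}{1+\alphasys}=\frac{\alphasys+1/(k+1)}{1+\alphasys}$, the whole theorem follows from sharp bounds on $t$.

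For the lower bounds I would exhibit the constructions of Fig.~\ref{fig:RAstar}. For $\alpha\in I_k$ take the $(k+2)$-node star with $y$-imposters on the center and $k$ leaves and an $x$-imposter on the last leaf; comparing the only two candidate maximizers (all-$x$ versus center-$y$ with the lone $x$-leaf miscoordinating) shows the center plays $y$ precisely when $\alpha\le\tfrac{k}{k+1}$, which holds on $I_k$ and realizes $e_Y=k$, $e_C=1$, hence $t=\tfrac{1}{k+1}$. For $\alpha\in[1,\tfrac32]$ I would use the all-$y$-imposter target on the three-node path, where all-$y$ is a maximizer iff $\alpha\le N/|\mcal{E}|$, whose extreme value $\tfrac32$ is attained on that graph, giving $t=0$ and risk $1-\tfrac{1}{1+\alphasys}$.

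The upper bound is the crux and the main obstacle. For each connected component of the $Y$-subgraph, with $n'$ nodes, $e'$ internal and $c'$ cross edges, comparing $a^*$ to the profile that flips that component to $x$ yields (in the loosest case, when all its nodes carry $y$-imposters, and only tighter otherwise) the inequality $(1+\alpha)c'+\alpha e'\le n'$; combining it with the connectivity bound $n'\le e'+1$ gives $(1+\alpha)c'\le(1-\alpha)e'+1$. For $\alpha\in I_k$, using $\alpha>\tfrac{k-1}{k}$ together with the integrality of $e',c',k$ upgrades this to $e'\ge kc'$; summing over components yields $e_Y\ge k\,e_C$, hence $t\le\tfrac{1}{k+1}$, matching the star. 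For $\alpha\ge1$ the same inequality forces $c'=0$, so every maximizer is monochromatic: all-$x$ gives zero risk, while all-$y$ attains $1-\tfrac{1}{1+\alphasys}$ but is a maximizer only when $\alpha\le N/|\mcal{E}|\le\tfrac32$, producing the last two cases and the breakpoint at $\tfrac32$. I expect the difficulty to lie in making this flip argument rigorous at the level of the global potential maximizer (not merely a Nash equilibrium), in justifying the reductions that discard $x$-internal edges and $x$-imposters on $Y$-nodes, and in the integer-optimization step that pins down the exact staircase constants $\tfrac{k}{k+1}$.
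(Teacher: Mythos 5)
Your proposal is correct, but the upper-bound argument — the crux of the theorem — follows a genuinely different route from the paper's. Both share the same foundation: the exact potential \eqref{eq:potential2}, the identification of stochastically stable states with its maximizers, and identical extremal constructions (the $(k+2)$-node star with a lone $x$-imposter leaf for $\alpha\in I_k$, and the all-$y$-imposter tree giving the threshold $N/|\mcal{E}|\le 3/2$); your treatment of $\alpha\ge 1$ essentially coincides with the paper's Lemmas \ref{threshold} and \ref{alpha_one}. The divergence is in the regime $\alpha<1$. The paper proves Lemma \ref{star_reduction}, a constructive graph surgery that re-casts the $x$- and $y$-partitions of an arbitrary $(S,G)$ as star subgraphs, converts active $x$-links to $y$-links, and verifies that stability is preserved while efficiency decreases; this reduces the search to $\mbb{S}_N$, after which the integer program of Lemma \ref{Jstar_n} is solved and minimized over $N$. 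You avoid any graph modification: for each connected $Y$-component you extract the single flip-to-$x$ potential inequality $(1+\alpha)c'+\alpha e'\le n'$, combine it with the tree bound $n'\le e'+1$ and integrality to get $e'\ge kc'$ on $I_k$, and sum over components to bound the miscoordinating-edge fraction $t\le \frac{1}{k+1}$, matching the star lower bound. The integrality step you only sketch does go through: from $(1+\alpha)c'\le(1-\alpha)e'+1$ and $\alpha>\frac{k-1}{k}$ one gets $e'>(2k-1)c'-k$, hence $e'\ge (2k-1)c'-k+1=kc'+(k-1)(c'-1)\ge kc'$ whenever $c'\ge 1$, and trivially for $c'=0$. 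What each approach buys: yours is more elementary and uniform — one local inequality applied to arbitrary graphs, with no surgery whose stability- and efficiency-preservation must be separately checked, and it needs only the single all-$x$ deviation rather than the full condition \eqref{eq:CY}; the paper's buys explicit structural information, identifying stars as the extremal topologies and producing the per-size formula $J_N^*(\alpha)$ of Lemma \ref{Jstar_n}, which underlies the illustrations in Figure \ref{fig:RAstar} and makes the worst-case instances concrete for every network size.
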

It is a piecewise constant function on half-open intervals that is monotonically decreasing in $\alpha$. An illustration is given in Figure \ref{fig:RAstar}, along with the graphs and target sets that achieve the worst-case risks. For sufficiently high gains $\alpha > 3/2$, the system is safeguarded from any broad adversarial attack, i.e. the worst-case risk is zero. By inflating the value of the $x$-convention, the adversary is unable to induce any mis-coordinating links or agents to play $y$. The technical results needed for the proof are given in Section \ref{sec:analysis}.

\subsection{{Focused} attacks and worst-case risk metric}\label{sec:environment}
An adversary is able to choose a strict subset of agents and force them to commit to prescribed choices.  This causes them to act as \emph{fixed agents}, or agents that do not update their choices over time. One could consider this  as allowing the adversary an unlimited number of imposter nodes (instead of one) at its dispatch to attach to  each agent in the subset, thereby solidifying their choices. This focused influence on a single agent is stronger than the influence a broad attack has on a single agent in the sense that the latter type does not require the agent to commit to a choice - it merely incentivizes the agent towards one particular choice.  

Let $F_x \subset \mcal{N}$ $(F_y)$ be the set of fixed $x$ ($y$) agents. We call the \emph{fixed set} $F = (F_x,F_y)$, which satisfies $F_x \cap F_y = \varnothing$ and $F_x \cup F_y \subset \mcal{N}$. We call $\mcal{F}(G)$ the set of all feasible fixed sets on a graph $G \in \mcal{G}_N$. A fixed set $F\in\mcal{F}(G)$ restricts the action space to $\mcal{A}(F)$, where $\mcal{A}_i(F) = \{x\}$ ($\{y\}$) $\forall i \in F_x$ ($F_y$) and $\mcal{A}_i(F) = \{x,y\}$ $\forall i \notin F$. We assume the adversary selects at least one fixed agent. The strict subset assumption avoids pathological cases (e.g. alternating $x$ and $y$ fixed nodes for an entire line network yields an efficiency of zero).
%$\frac{1}{2}W^{\alpha}(a)$ serves as a potential function over $a \in \mcal{A}$. 

The  set of stochastically stable states given a fixed set $F$ is written as $\text{LLL}(\mcal{A}(F),\{U_i^\alpha\}_{i\in\mcal{N}};G)$. However for brevity, we will refer to it as $\text{LLL}(\mcal{A}(F),\alpha;G)$. The induced efficiency is
\begin{equation}\label{eq:JE}
	\JF(\alpha,F;G) :=  \frac{\min_{a\in\text{LLL}(\mcal{A}(F),\alpha;G)} W(a)}{\max_{a \in \mcal{A}(F)} W(a)},
\end{equation}
which is the ratio of the welfare induced by the worst-case stable state to the optimal welfare given the fixed set $F$. The risk faced by the system operator in choosing $\alpha$  is defined as
\begin{equation}
	\RF(\alpha,F;G) := 1- \JF(\alpha,F;G) \ .
\end{equation}
Again, risk measures the distance from optimal efficiency when choosing $\alpha$. The fixed nodes here differ from the imposter nodes in that they contribute to the true measured welfare \eqref{eq:welfare} in addition to modifying the SSS by restricting the action set and influencing the decisions of their non-fixed neighbors.  Figure \ref{fig:GE_example} provides an illustrative  example of a network with three fixed agents and one unfixed agent. The extent to which the system is susceptible to {focused} attacks is defined by the following worst-case risk metric.
\begin{definition}
	The \emph{worst-case risk from {focused} attacks} is given by
	\begin{equation}\label{eq:REstar}
		R_{\text{\emph{f}}}^*(\alpha) := \max_{N\geq 3} \max_{G\in\mcal{G}_N} \max_{F\in\mcal{F}(G) } R_{\text{\emph{f}}}(\alpha,F;G) \ . 
	\end{equation}
\end{definition}
The quantity $\RFstar(\alpha)$ is the cost metric that a system operator wishes to reduce given uncertainty on the graph structure and  composition of fixed agents in the network.  
\begin{theorem}\label{thm_RE_WC}
    The worst-case  risk from {focused} attacks is 
    \begin{equation}\label{eq:RE_WC}
        R_{\text{\emph{f}}}^*(\alpha) = 
        \begin{cases}
            1-\frac{1+\alpha}{1 + \alpha_{\text{sys}}}, \ &\text{if } \alpha < \alpha_{\text{sys}} \\
            0, &\text{if } \alpha = \alpha_{\text{sys}} \\
            1-\frac{1+\alpha_{\text{sys}}}{1 + \alpha},  &\text{if } \alpha > \alpha_{\text{sys}}
        \end{cases}.
    \end{equation}
\end{theorem}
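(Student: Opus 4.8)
The plan is to reduce everything to the underlying potential game, prove a clean \emph{graph-independent} lower bound on the induced efficiency, and then match it with an explicit star construction.

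First I would invoke the log-linear-learning characterization deferred to Section~\ref{sec:analysis}: with perceived gain $\alpha$, the stochastically stable states of $\text{LLL}(\mcal{A}(F),\alpha;G)$ are exactly the global maximizers over $\mcal{A}(F)$ of the potential $\Phi^\alpha(a) = (1+\alpha)\,n_{xx}(a) + n_{yy}(a)$, where $n_{xx}(a)$ and $n_{yy}(a)$ count the edges coordinating on $x$ and on $y$ respectively, while the measured welfare is $W(a) = 2\big[(1+\alphasys)\,n_{xx}(a) + n_{yy}(a)\big]$. Fix any $G$ and $F$; let $a^\star$ be the welfare-minimizing SSS and let $a^\diamond \in \argmax{a\in\mcal{A}(F)} W(a)$. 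Since $a^\star$ is in particular a global potential maximizer and $a^\diamond\in\mcal{A}(F)$, the single inequality $\Phi^\alpha(a^\star) \ge \Phi^\alpha(a^\diamond)$ always holds. When $\alpha=\alphasys$ we have $\Phi^\alpha = \tfrac{1}{2}W$, so the SSS set coincides with the welfare maximizers and $\RF=0$, which gives the middle case at once.

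For the upper bound on risk I would show that this lone inequality already forces the claimed efficiency ratio. Writing $(X^\star,Y^\star)=(n_{xx}(a^\star),n_{yy}(a^\star))$ and $(X,Y)=(n_{xx}(a^\diamond),n_{yy}(a^\diamond))$, I scale $\Phi^\alpha(a^\star)\ge\Phi^\alpha(a^\diamond)$ by $(1+\alphasys)$ and compare it term-by-term with the target inequality $W(a^\star)/W(a^\diamond)\ge (1+\alphasys)/(1+\alpha)$ in the regime $\alpha>\alphasys$. The gaps between the scaled potential inequality and the target are $(\alpha-\alphasys)Y^\star\ge 0$ on the left-hand side and $(1+\alphasys)(\alpha-\alphasys)X\ge 0$ on the right-hand side, so the scaled potential inequality \emph{sandwiches} the target one; hence $\RF(\alpha,F;G)\le 1-\tfrac{1+\alphasys}{1+\alpha}$ for every $G$ and $F$. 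The symmetric computation for $\alpha<\alphasys$, where the two slack terms become $(\alphasys-\alpha)(1+\alphasys)X^\star\ge 0$ and $(\alphasys-\alpha)Y\ge 0$, yields $\RF\le 1-\tfrac{1+\alpha}{1+\alphasys}$. To my mind this is the crux of the argument: once one notices that the potential-optimality inequality, rescaled by $1+\alphasys$, brackets the welfare inequality with slack of the correct sign, no case analysis over graph topologies is needed for the upper bound.

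For achievability I would exhibit the matching extremal family: a star with one unfixed center $c$ adjacent to $p$ leaves fixed to $x$ and $q$ leaves fixed to $y$, so that $F$ is a strict subset, $N=1+p+q\ge 3$, and $G$ is connected. With the center at $x$ the potential is $p(1+\alpha)$ and the welfare $2p(1+\alphasys)$, while with the center at $y$ they are $q$ and $2q$. For $\alpha<\alphasys$, choosing integers with $q/p\in(1+\alpha,\,1+\alphasys)$ makes $y$ the unique SSS while $x$ is welfare-optimal, giving $\RF=1-\tfrac{q/p}{1+\alphasys}$; letting $q/p\downarrow 1+\alpha$ (possible since this interval is nonempty and the rationals are dense) drives the risk up to $1-\tfrac{1+\alpha}{1+\alphasys}$. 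The case $\alpha>\alphasys$ is mirrored with $q/p\in(1+\alphasys,\,1+\alpha]$ and $q/p\uparrow 1+\alpha$, producing $\RF=1-\tfrac{1+\alphasys}{q/p}\to 1-\tfrac{1+\alphasys}{1+\alpha}$. Together with the upper bound this pins $\RFstar(\alpha)$ to the stated formula.

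The main obstacle I anticipate is not the algebra but two bookkeeping points. First, I must justify carefully that the stochastically stable set is precisely the potential-argmax even though fixed agents are never updated yet still contribute to both $W$ and $\Phi^\alpha$ through their incident edges; this is exactly where the Section~\ref{sec:analysis} potential-game machinery is genuinely needed. Second, the extremal value is \emph{approached} rather than literally attained when $1+\alpha$ is irrational, so I would state $\RFstar$ as the supremum over the star family and observe that it is attained exactly, via a potential tie at $q/p=1+\alpha$ (at which $y$, respectively $x$, remains a worst-case SSS), precisely when $1+\alpha$ is rational.
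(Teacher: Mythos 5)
Your proof is correct, and its core step---the graph-independent lower bound on efficiency---takes a genuinely different route from the paper's. The paper decomposes the welfare-minimizing SSS into $x$- and $y$-partitions, uses the per-partition stability conditions \eqref{eq:CY2} and \eqref{eq:CX2}, invokes the monotonicity results (Lemmas \ref{Py_stable} and \ref{Px_stable}) to pin down the welfare-optimal profile $a^*$ on each partition's neighborhood, bounds each partition's \emph{local} efficiency by $\frac{1+\alpha}{1+\alphasys}$ (resp.\ $\frac{1+\alphasys}{1+\alpha}$), and aggregates with the mediant inequality \eqref{eq:mediant}. You collapse all of this into the single global inequality $\Phi^\alpha(a^\star)\geq\Phi^\alpha(a^\diamond)$, available because the welfare-minimizing SSS is in particular a potential maximizer over $\mcal{A}(F)$ and the welfare maximizer $a^\diamond$ lies in $\mcal{A}(F)$; rescaling by $1+\alphasys$ and checking the two slack terms (for $\alpha>\alphasys$: $(\alpha-\alphasys)Y^\star\geq 0$ on the left and $(1+\alphasys)(\alpha-\alphasys)X\geq 0$ on the right, with the mirrored terms for $\alpha<\alphasys$) gives the chain target LHS $\geq$ scaled LHS $\geq$ scaled RHS $\geq$ target RHS, which I verified is valid and yields exactly $\JF\geq\frac{1+\alphasys}{1+\alpha}$ (resp.\ $\frac{1+\alpha}{1+\alphasys}$) with no case analysis over topologies. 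Your achievability construction (a star with one unfixed center, $p$ fixed-$x$ and $q$ fixed-$y$ leaves, tuning $q/p$) is the same extremal family the paper uses, and you handle the rational/irrational attainment caveat at least as carefully as the paper by phrasing the worst case as a supremum with exact attainment via a potential tie when $\alpha$ is rational. The trade-off between the two approaches: yours is shorter, more elementary, and self-contained for this theorem, while the paper's heavier partition machinery is not wasted---the partition decomposition, local efficiencies, and Lemmas \ref{Py_stable}/\ref{Px_stable} are precisely what get reused in Section \ref{sec:analysis2} to prove the randomized-gain result (Theorem \ref{mixing_bound}), where the worst case couples the stochastically stable states across several gains simultaneously and a single global potential comparison no longer suffices.
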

The technical results needed for the proof are given in Section \ref{sec:analysis}. An illustration of this quantity as well as the graphs that induce worst-case risk are portrayed in Figure \ref{fig:REstar}. We observe the choice $\alpha = \alpha_{\text{sys}}$ recovers optimal efficiency for any $G\in \mcal{G}_N$ and $F\in\mcal{F}(G)$. In other words, by operating at the system gain $\alpha_{\text{sys}}$, the system operator safeguards efficiency from any {focused} attack.  Furthermore, $\RFstar(\alpha)$ monotonically increases for $\alpha > \alpha_{\text{sys}}$, approaching 1 in the limit $\alpha \rightarrow \infty$. Intuitively, the risk in this regime comes from inflating the benefit of the $x$ convention, which can be harmful to system efficiency when there are predominantly  fixed $y$ nodes in the network.  For $\alpha < \alpha_{\text{sys}}$, $\RFstar(\alpha)$ monotonically decreases. The risk here stems from de-valuing the $x$ convention, which hurts efficiency when coordinating with  fixed $x$ nodes is more valuable than coordinating with fixed $y$ nodes.

\subsection{Fundamental tradeoffs between risk and security}
We describe the operator's tradeoffs between the two worst-case risk metrics. That is, given a level of security $\gamma \in [0,1]$ is ensured on one worst-case risk, what is the minimum achievable risk level of the other? These relations are direct consequences of Theorems \ref{thm_RA_WC} and \ref{thm_RE_WC}.
\begin{remark}
	Before presenting the tradeoff relations, we first observe that since $R_{\text{\emph{f}}}^*(\alpha)$ is decreasing on $\alpha < \alphasys$ and  $R_{\text{\emph{b}}}^*(\alpha)$ is decreasing in $\alpha$, the operator should not select any gain $\alpha < \alphasys$, as it worsens both risk levels. Hence for the rest of this paper, we only consider gains greater than $\alphasys$.
\end{remark}

\begin{corollary}\label{thm_gammaE}
	Fix $\gamma_{\text{\emph{f}}} \in [0,1)$. Suppose $R_{\text{\emph{f}}}^*(\alpha) \leq \gamma_{\text{\emph{f}}}$ for some $\alpha$. Then 
	\begin{equation}
		R_{\text{\emph{b}}}^*(\alpha) \geq R_{\text{\emph{b}}}^*\left( \frac{1+\alpha_{\text{sys}}}{1-\gamma_{\text{\emph{f}}} }-1 \right).
	\end{equation}
\end{corollary}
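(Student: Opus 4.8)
The plan is to derive this purely algebraically from Theorems~\ref{thm_RA_WC} and~\ref{thm_RE_WC}, since it is a corollary and introduces no new game-theoretic content. Following the preceding remark, I restrict attention to $\alpha \geq \alphasys$, the only regime the operator would rationally use. On this range Theorem~\ref{thm_RE_WC} gives the closed form $\RFstar(\alpha) = 1 - \frac{1+\alphasys}{1+\alpha}$, which is continuous at $\alpha = \alphasys$ (equaling $0$ there) and so consistently covers the boundary case of \eqref{eq:RE_WC}.

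First I would invert the security hypothesis. Substituting the closed form into $\RFstar(\alpha) \leq \gammaF$ gives $\frac{1+\alphasys}{1+\alpha} \geq 1 - \gammaF$. Because $\gammaF < 1$ forces $1 - \gammaF > 0$, I can clear the denominator without reversing the inequality and solve for $\alpha$, yielding the upper bound $\alpha \leq \frac{1+\alphasys}{1-\gammaF} - 1$. This is precisely the argument that appears on the right-hand side of the claimed inequality.

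The final step is to invoke monotonicity. The discussion following Theorem~\ref{thm_RA_WC} records that $\RBstar(\cdot)$ is (weakly) monotonically decreasing in its argument. Applying this to the ordering just obtained gives $\RBstar(\alpha) \geq \RBstar\!\left(\frac{1+\alphasys}{1-\gammaF} - 1\right)$, which is the assertion.

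I expect no real obstacle here; the corollary is essentially a one-line inversion followed by a monotonicity step. The only places needing care are confirming that $1 - \gammaF > 0$ so the inequality direction is preserved when clearing the denominator, and checking that the endpoint $\alpha = \alphasys$ is handled consistently by the $\alpha > \alphasys$ branch of \eqref{eq:RE_WC}. Neither of these poses any difficulty.
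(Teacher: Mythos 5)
Your proof is correct and follows essentially the same route as the paper's: invert $R_{\text{f}}^*(\alpha) \leq \gamma_{\text{f}}$ via Theorem~\ref{thm_RE_WC} to get $\alpha \leq \frac{1+\alpha_{\text{sys}}}{1-\gamma_{\text{f}}}-1$, then apply the monotone decrease of $R_{\text{b}}^*$. Your restriction to $\alpha \geq \alpha_{\text{sys}}$ is harmless (and consistent with the paper's remark), since for $\alpha < \alpha_{\text{sys}}$ the derived upper bound on $\alpha$ holds trivially because $\frac{1+\alpha_{\text{sys}}}{1-\gamma_{\text{f}}}-1 \geq \alpha_{\text{sys}}$.
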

\begin{proof}
	From \eqref{eq:RE_WC}, $\RFstar(\alpha) \leq \gammaF$ implies $\alpha \leq \frac{1+\alpha_{\text{sys}}}{1-\gammaF} - 1$. Since $\RBstar(\alpha)$ is a decreasing function in $\alpha$, we obtain the result.
\end{proof}
In words, as the security from worst-case {focused} attacks improves ($\gammaF$ lowered), the risk from worst-case {broad} attacks increases. A tradeoff relation also holds in the opposite direction.

\begin{corollary}\label{thm_gammaA}
	Fix $\gamma_{\text{\emph{b}}} \in \left[0,1 \right]$. Suppose $R_{\text{\emph{b}}}^*(\alpha) \leq \gamma_{\text{\emph{b}}}$ for some $\alpha$. Suppose $\alphasys \in I_{k_{\text{sys}}}$ for some $k_{\text{sys}} \in \{1,2,\ldots\}$. Then
	\begin{equation}
		R_{\text{\emph{f}}}^*(\alpha) 
		\begin{cases} 
			\geq 0 &\text{if } \gamma_{\text{\emph{b}}} \in \left[1 - \frac{k_{\text{sys}}}{k_{\text{sys}}+1} \frac{1}{1+\alphasys},1\right] \\
			> R_{\text{\emph{f}}}^*\left(\frac{k}{k+1} \right) &\text{if } \gamma_{\text{\emph{b}}} \in \left[1-\frac{k}{k+1} \frac{1}{1+\alphasys}, 1 - \frac{k-1}{k} \frac{1}{1+\alphasys} \right) \\
			&\quad \text{for } k = k_{\text{sys}}, k_{\text{sys}}+1, \ldots \\
			\geq R_{\text{\emph{f}}}^*(1) &\text{if } \gamma_{\text{\emph{b}}} = 1 - \frac{1}{1+\alphasys} \\
			> R_{\text{\emph{f}}}^*\left( \frac{3}{2} \right) &\text{if } \gamma_{\text{\emph{b}}} \in \left[0, 1 - \frac{1}{1+\alphasys} \right)
		\end{cases}
	\end{equation}
	If $\alphasys \in [1,3/2]$,
	\begin{equation}
		R_{\text{\emph{f}}}^*(\alpha)  
		\begin{cases} 
			\geq 0 &\text{if } \gamma_{\text{\emph{b}}} \in \left[1 -  \frac{1}{1+\alphasys},1\right] \\
			> R_{\text{\emph{f}}}^*\left( \frac{3}{2} \right) &\text{if } \gamma_{\text{\emph{b}}} \in \left[0, 1 - \frac{1}{1+\alphasys} \right)
		\end{cases}
	\end{equation}
	If $\alpha_{\text{sys}} > \frac{3}{2}$, then $R_{\text{\emph{f}}}^*(\alpha) \geq 0$ for any $\gamma_{\text{\emph{b}}}$.
\end{corollary}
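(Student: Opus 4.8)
The plan is to prove Corollary \ref{thm_gammaA} as a direct unpacking of Theorem \ref{thm_RA_WC}, mirroring the one-line argument used for Corollary \ref{thm_gammaE} but now inverting the \emph{piecewise-constant} function $\RBstar$ rather than the strictly monotone branch of $\RFstar$. The key structural fact is that $\RBstar(\alpha)$ is monotonically decreasing in $\alpha$ but only weakly so: it is constant on each half-open interval $I_k=\left(\tfrac{k-1}{k},\tfrac{k}{k+1}\right]$, constant on $[1,\tfrac32]$, and zero on $(\tfrac32,\infty)$. Consequently a constraint of the form $\RBstar(\alpha)\le\gammaB$ does not pin $\alpha$ down to a single threshold; instead it translates into membership of $\alpha$ in a half-line $\alpha>\alpha_{\min}(\gammaB)$, and the value $\RFstar(\alpha)$ we can guarantee depends on \emph{where} in the staircase the level $\gammaB$ falls. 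So the proof is fundamentally a case analysis over the value of $\gammaB$ relative to the jump points of $\RBstar$.

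First I would fix the assumption $\alpha>\alphasys$ coming from the Remark, so that $\RFstar(\alpha)=1-\tfrac{1+\alphasys}{1+\alpha}$ is a strictly \emph{increasing} function of $\alpha$; this is what lets a lower bound on $\alpha$ produce a lower bound on $\RFstar$. Then, for the first regime $\alphasys\in I_{k_{\mathrm{sys}}}$, I would read off from \eqref{eq:RA_WC} the exact set of levels $\gammaB$ can take and invert each. If $\gammaB\ge 1-\tfrac{k_{\mathrm{sys}}}{k_{\mathrm{sys}}+1}\tfrac{1}{1+\alphasys}$, the constraint is already met at $\alpha=\alphasys$ (the smallest admissible gain, lying in $I_{k_{\mathrm{sys}}}$), so no lower bound on $\alpha$ beyond $\alphasys$ is forced and the only guarantee is the trivial $\RFstar(\alpha)\ge 0$. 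For the generic staircase values, if $\gammaB\in\left[1-\tfrac{k}{k+1}\tfrac{1}{1+\alphasys},\,1-\tfrac{k-1}{k}\tfrac{1}{1+\alphasys}\right)$ with $k\ge k_{\mathrm{sys}}$, then $\RBstar(\alpha)\le\gammaB$ forces $\alpha$ to lie above the step $I_k$, i.e. $\alpha>\tfrac{k}{k+1}$; since $\RFstar$ is increasing on $(\alphasys,\infty)$ this yields the strict bound $\RFstar(\alpha)>\RFstar\!\left(\tfrac{k}{k+1}\right)$. The boundary level $\gammaB=1-\tfrac{1}{1+\alphasys}$ and the final range $\gammaB\in\left[0,1-\tfrac{1}{1+\alphasys}\right)$ correspond respectively to $\alpha\ge 1$ (giving $\RFstar(\alpha)\ge\RFstar(1)$, weak because $\RBstar$ is constant across $[1,\tfrac32]$) and $\alpha>\tfrac32$ (giving the strict $\RFstar(\alpha)>\RFstar(\tfrac32)$, since meeting $\RBstar\le\gammaB<1-\tfrac{1}{1+\alphasys}$ requires escaping the plateau entirely).

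The remaining two regimes are lighter. When $\alphasys\in[1,\tfrac32]$, the worst-case broad risk is already constant at $1-\tfrac{1}{1+\alphasys}$ on the admissible range down to $\alpha=\alphasys$, so the staircase collapses to the two cases stated: $\gammaB$ above that constant imposes no binding lower bound (only $\RFstar\ge 0$), while $\gammaB$ strictly below it forces $\alpha>\tfrac32$ and hence $\RFstar(\alpha)>\RFstar(\tfrac32)$. When $\alphasys>\tfrac32$, we have $\RBstar(\alphasys)=0$, so every constraint $\RBstar(\alpha)\le\gammaB$ is satisfiable without pushing $\alpha$ past $\alphasys$, and the only valid conclusion is $\RFstar(\alpha)\ge 0$.

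The main obstacle I anticipate is bookkeeping rather than mathematical depth: getting the \emph{weak-versus-strict} inequalities right at every boundary. The strict bounds ($>$) arise precisely when $\gammaB$ sits strictly below a jump value, forcing $\alpha$ strictly past the right endpoint of a half-open plateau, whereas the weak bounds ($\ge$) appear at the exact jump levels where $\alpha$ may sit at the left edge of the next plateau or where $\RBstar$ is flat across an interval (as on $[1,\tfrac32]$). I would therefore verify each boundary by pairing the half-open structure of $I_k$ in \eqref{eq:Ik} against the direction of the inequality on $\gammaB$, and cross-check that the interval endpoints in the corollary statement exactly partition $[0,1]$ with no gaps or overlaps. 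Beyond that, the argument is a direct application of the monotonicity of $\RBstar$ and the strict monotonicity of $\RFstar$ on $(\alphasys,\infty)$, exactly as in the proof of Corollary \ref{thm_gammaE}.
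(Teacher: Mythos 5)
Your overall strategy is the same as the paper's: compute $\inf_\alpha \RFstar(\alpha)$ subject to $\RBstar(\alpha)\le\gammaB$ by inverting the staircase of Theorem \ref{thm_RA_WC}, then invoke monotonicity of $\RFstar$ on $(\alphasys,\infty)$. Your handling of the cases $\gammaB \ge \RBstar(\alphasys)$, $\gammaB = 1-\frac{1}{1+\alphasys}$, $\gammaB < 1-\frac{1}{1+\alphasys}$, and of the regimes $\alphasys\in[1,3/2]$ and $\alphasys>3/2$, is correct. However, the key inversion step in the generic staircase case is wrong. You claim that $\gammaB \in \left[1-\frac{k}{k+1}\frac{1}{1+\alphasys},\, 1-\frac{k-1}{k}\frac{1}{1+\alphasys}\right)$ together with $\RBstar(\alpha)\le\gammaB$ ``forces $\alpha$ to lie above the step $I_k$, i.e.\ $\alpha>\frac{k}{k+1}$.'' It does not: on $I_k$ itself, $\RBstar(\alpha) = 1-\frac{k}{k+1}\frac{1}{1+\alphasys} \le \gammaB$, so every $\alpha\in I_k$ is feasible, and the constraint only forces $\alpha > \frac{k-1}{k}$. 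The conclusion your argument actually supports is $\RFstar(\alpha) > \RFstar\!\left(\frac{k-1}{k}\right)$, not $\RFstar(\alpha) > \RFstar\!\left(\frac{k}{k+1}\right)$; equivalently, the bound $\RFstar(\alpha) > \RFstar\!\left(\frac{k}{k+1}\right)$ is what the inversion yields on the shifted interval $\gammaB \in \left[1-\frac{k+1}{k+2}\frac{1}{1+\alphasys},\, 1-\frac{k}{k+1}\frac{1}{1+\alphasys}\right)$.

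Concretely, take $\alphasys = \frac{1}{4}$ (so $k_{\text{sys}}=1$), $k=2$, $\gammaB = \frac{1}{2} \in \left[\frac{7}{15},\frac{3}{5}\right)$, and $\alpha = \frac{3}{5} \in I_2$. Then $\RBstar(\alpha) = \frac{7}{15} \le \gammaB$, yet $\alpha < \frac{2}{3}$ and $\RFstar\!\left(\frac{3}{5}\right) = \frac{7}{32} < \frac{1}{4} = \RFstar\!\left(\frac{2}{3}\right)$, so both your forcing claim and the resulting strict bound fail at this feasible point. (The same example shows that the second line of the corollary as printed does not follow from the paper's own recipe either: carrying out $\inf_\alpha \RFstar(\alpha)$ s.t.\ $\RBstar(\alpha)\le\gammaB$ on that interval yields $\RFstar\!\left(\frac{k-1}{k}\right)$, so the printed indexing appears to be off by one; your write-up reproduces the printed indexing, but only by asserting a false description of the feasible set.) A further symptom you should have caught: for $k=k_{\text{sys}}$ your case-2 interval overlaps your case-1 interval, and there your own (correct) case-1 observation that $\alpha=\alphasys$ is feasible with $\RFstar(\alphasys)=0$ contradicts the claimed strict bound $\RFstar(\alpha) > \RFstar\!\left(\frac{k_{\text{sys}}}{k_{\text{sys}}+1}\right)$ whenever $\alphasys < \frac{k_{\text{sys}}}{k_{\text{sys}}+1}$.
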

\begin{proof}
	All bounds are computed by finding $\inf_{\alpha} \RFstar(\alpha)$ s.t. $\RBstar(\alpha) \leq \gammaB$. The relations $\geq$ and $>$ follow  from the fact that $\RFstar(\alpha)$ is increasing in $\alpha>\alphasys$, and depending on whether $\RFstar$ can attain the resulting value.
\end{proof}
Here, as the security from worst-case {broad} attacks improves ($\gammaB$ lowered), the risk from worst-case {focused} attacks increases.  Each of the {broad} risk levels can be attained for a range of {focused} risks. An illustration of the attainable worst-case risk levels is given in Fig. \ref{fig:tradeoff_mix} (blue).

\section{Randomized operator strategies}\label{sec:mixing}
In this section, we consider the scenario where the operator randomizes over multiple gains.  We present a definition and a characterization of worst-case expected risks. We then identify the  risk-security tradeoffs available in the randomized gain setting. We observe they significantly improve upon the deterministic gain setting (Fig. \ref{fig:tradeoff_mix}). We then identify ways to further improve these tradeoffs through different randomizations. 

\subsection{Worst-case expected risks}

Suppose the operator selects a gain from the $M$ distinct values $\bm{\alpha} = \{ \alpha_k \}_{k=1}^M$ satisfying $\alpha_1 < \alpha_2 < \cdots < \alpha_M$ with the probability distribution $\bm{p} = [p_1,\ldots,p_M]^\top \in \Delta_M$. Here we denote $\Delta_M = \{ \bm{p} \in \mbb{R}^M_+ : \sum_{j=1}^M p_j = 1 \}$ as the set of all $M$-dimensional probability vectors. In other words, the operator employs the payoff gain $\alpha_j$ with probability $p_j$. 

We consider the following natural definitions of expected risks. Given a graph $G\in\mcal{G}_N$ and target set $S \in \mcal{T}(G)$,  let $\mbb{E}_{\bm{\alpha},\bm{p}} [\RB | S,G] := \sum_{j=1}^M p_j \RB(\alpha_j,S;G)$ be the expected adversarial risk of the operator's strategy $\bm{\alpha},\bm{p}$. The worst-case expected risk from {broad} attacks is defined as 
\begin{equation}
	\mbb{E}_{\bm{\alpha},\bm{p}}^*[\RB] := \max_{N\geq 3} \max_{G\in\mcal{G}_N} \max_{S\in\mcal{T}(G) } \mbb{E}_{\bm{\alpha},\bm{p}}[\RB | S,G] .
\end{equation}
Similarly, given a fixed set $F\in\mcal{F}(G)$, let $\mbb{E}_{\bm{\alpha},\bm{p}}[\RF | F,G] := \sum_{j=1}^M p_i \RF(\alpha_j,F;G)$ be the expected  risk from {focused} attacks. The worst-case expected risk from {focused} attacks is defined as
\begin{equation}
	\mbb{E}_{\bm{\alpha},\bm{p}}^*[\RF] := \max_{N\geq 3} \max_{G\in\mcal{G}_N} \max_{F\in\mcal{F}(G) } \mbb{E}_{\bm{\alpha},\bm{p}}[\RF | F,G] .
\end{equation}

% We state below the randomized version of Propositions \ref{thm_RA_WC} and \ref{thm_RE_WC}.
\begin{theorem}\label{mixing_bound}
	Suppose the operator randomizes with gains $\bm{\alpha} =  \{ \alpha_k \}_{k=1}^M$ according to $\bm{p} \in \Delta_M$. Then the worst-case expected {broad} risk is 
	\begin{equation}\label{eq:WC-EA}
			\mbb{E}_{\bm{\alpha},\bm{p}}^*[R_{\text{\emph{b}}}] = \!\max_{k=1,\ldots,M} \left\{ \!\! \left(\sum_{j=1}^k p_j \right) R_{\text{\emph{b}}}^*(\alpha_k)  \right\} .
	\end{equation}
	The worst-case expected {focused} risk is
	\begin{equation}\label{eq:WC-EE}
		\mbb{E}_{\bm{\alpha},\bm{p}}^*[R_{\text{\emph{f}}}] = \!\max_{k=1,\ldots,M} \left\{ \! \!\left(\sum_{j=k}^{M} p_j\!\right)  R_{\text{\emph{f}}}^*(\alpha_k) \right\} .
	\end{equation}
\end{theorem}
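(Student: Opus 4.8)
The plan is to prove each of the two identities by matching an achievability (lower) bound against an upper bound, exploiting the monotonicity of the \emph{single-graph} risk in the gain together with the worst-case envelopes $\RBstar,\RFstar$ supplied by Theorems \ref{thm_RA_WC} and \ref{thm_RE_WC}. I would treat the broad identity \eqref{eq:WC-EA} in detail; the focused identity \eqref{eq:WC-EE} is the mirror image, with the roles of small and large gains (and of head and tail partial sums) interchanged, since we restrict to $\alpha>\alphasys$ (per the Remark) where $\RFstar$ is increasing. Throughout, write $P_k:=\sum_{j=1}^k p_j$. The first ingredient, which I would record from the potential-game SSS analysis of Section \ref{sec:analysis}, is that for every fixed $(G,S)$ the map $\alpha\mapsto\RB(\alpha,S;G)$ is non-increasing: inflating the perceived gain only strengthens the $x$-convention in the welfare-minimizing SSS, so it cannot raise the risk.

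For the lower bound on $\mbb{E}^*_{\bm\alpha,\bm p}[\RB]$, fix an index $k^\star$ attaining $\max_k P_k\RBstar(\alpha_k)$ and take the graph/target-set pair $(G^\star,S^\star)$ realizing $\RB(\alpha_{k^\star},S^\star;G^\star)=\RBstar(\alpha_{k^\star})$ (one of the stars of Fig.\ \ref{fig:RAstar}). Evaluating the operator's strategy on this single instance and discarding the nonnegative terms with $j>k^\star$, monotonicity gives $\RB(\alpha_j,S^\star;G^\star)\ge\RB(\alpha_{k^\star},S^\star;G^\star)=\RBstar(\alpha_{k^\star})$ for every $j\le k^\star$. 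Summing yields $\mbb{E}_{\bm\alpha,\bm p}[\RB\mid S^\star,G^\star]\ge P_{k^\star}\RBstar(\alpha_{k^\star})$, hence $\mbb{E}^*_{\bm\alpha,\bm p}[\RB]\ge\max_k P_k\RBstar(\alpha_k)$.

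The upper bound is the crux and the step I expect to be hardest, because monotonicity and the pointwise estimate $\RB(\alpha_j,S;G)\le\RBstar(\alpha_j)$ are by themselves insufficient: a non-increasing sequence saturating the envelope at several well-separated gains would overshoot $\max_k P_k\RBstar(\alpha_k)$. What I would establish from Section \ref{sec:analysis} is the stronger structural fact that a graph's entire risk profile is controlled by its vulnerability threshold: setting $\tau(G,S):=\sup\{\alpha:\RB(\alpha,S;G)>0\}$, one has $\RB(\alpha,S;G)\le\RBstar(\tau(G,S))$ for \emph{all} $\alpha$. Intuitively each $(G,S)$ contributes a two-level step that meets the decreasing envelope only at its own cutoff (exactly as the worst-case stars do), so a graph that remains vulnerable at a large gain cannot simultaneously be maximally vulnerable at small gains. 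Granting this, let $k_0$ be the largest index with $\RB(\alpha_{k_0},S;G)>0$; then every positive term obeys $\RB(\alpha_j,S;G)\le\RBstar(\tau(G,S))\le\RBstar(\alpha_{k_0})$, using $\alpha_{k_0}\le\tau(G,S)$ and that $\RBstar$ is non-increasing. Summing over $j\le k_0$ gives $\mbb{E}_{\bm\alpha,\bm p}[\RB\mid S,G]\le P_{k_0}\RBstar(\alpha_{k_0})\le\max_k P_k\RBstar(\alpha_k)$, and maximizing over $(G,S)$ closes the gap.

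The main obstacle is therefore proving this threshold-domination property, i.e.\ ruling out graphs that are very vulnerable at a small gain while still vulnerable at a large one; I would argue it via the explicit SSS and welfare-ratio computations of Section \ref{sec:analysis}, showing that any structure sustaining positive risk up to a gain $\tau$ can forfeit at most the envelope fraction $\RBstar(\tau)$ of optimal welfare at every gain. The focused identity \eqref{eq:WC-EE} then follows by the symmetric argument: for $\alpha>\alphasys$ the single-graph risk $\RF(\alpha,F;G)$ is non-decreasing, the relevant threshold becomes $\sigma(G,F):=\inf\{\alpha>\alphasys:\RF(\alpha,F;G)>0\}$ with $\RF(\alpha,F;G)\le\RFstar(\sigma(G,F))$, and the head sums $P_k$ are replaced by the tail sums $\sum_{j=k}^M p_j$, producing $\max_k\bigl(\sum_{j=k}^M p_j\bigr)\RFstar(\alpha_k)$.
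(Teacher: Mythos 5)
Your achievability (lower-bound) direction is essentially sound and is what the paper's equality implicitly rests on: the worst-case stars of Fig.~\ref{fig:RAstar} have risk exactly $\RBstar(\alpha_{k^\star})$ for every $\alpha\le\alpha_{k^\star}$ and zero beyond, so evaluating the mixture on that single instance gives the head-sum (resp.\ tail-sum) bound. Be aware, though, that your ``first ingredient'' is false as a general statement: $\alpha\mapsto\RB(\alpha,S;G)$ need \emph{not} be non-increasing for a fixed $(G,S)$, because a set of agents flipping from $y$ to $x$ can sever more $y$--$y$ links to agents who keep playing $y$ than the $x$--$x$ links it activates (e.g., with $\alphasys=1/4$, a node carrying a $y$-imposter with four edges into a large stable $y$-star and three pendant $x$-imposter neighbors flips at $\alpha=2/3$ and strictly \emph{lowers} welfare). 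This is repairable in your lower bound, since there you only need the step profile of the worst-case stars, which can be verified directly.

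The genuine gap is the threshold-domination lemma, which is false, and with it your upper bound collapses. Counterexample: take $\alphasys=1/10$ and let $G$ be the $3$-node worst-case star for $I_1$ (center and one leaf with $y$-imposters, one leaf with an $x$-imposter) joined to the $5$-node worst-case star for $I_3$ (center and three leaves with $y$-imposters, one leaf with an $x$-imposter) by a single edge between the two $x$-imposter leaves, so $|\mcal{E}|=7$ and optimal welfare is $7.7$. For $\alpha\in(0,\tfrac12]$ the welfare-minimizing SSS keeps both $y$-cores at $y$ and both $x$-leaves at $x$ (welfare $1+3+1.1=5.1$), so $\RB(\alpha,S;G)=26/77\approx 0.338$; for $\alpha\in(\tfrac12,\tfrac34]$ only the $3$-node core flips (welfare $6.3$), so $\RB=2/11>0$; risk vanishes only for $\alpha>\tfrac34$. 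Hence $\tau(G,S)=\tfrac34$, yet $\RBstar(\tfrac34)=1-\tfrac34\cdot\tfrac{10}{11}=7/22\approx 0.318<26/77$. So risk profiles of composite graphs are multi-level staircases, and each level is controlled by the envelope at the gain where \emph{that level} ends, not at the graph's final cutoff; your ``two-level step'' picture, and the inequality $\RB(\alpha,S;G)\le\RBstar(\tau(G,S))$, are exactly what such compositions destroy. The focused half fails symmetrically: with $\alphasys=\tfrac12$, two fixed-leaf stars with $y{:}x$ leaf counts $2{:}1$ and $3{:}1$, joined through fixed-$y$ leaves, have $\sigma(G,F)=1$ but risk $\tfrac13>\RFstar(1)=\tfrac14$ for $\alpha>2$. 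This is precisely why the paper does not argue at the whole-graph level: it partitions the nodes into pieces $P^k$ according to the gain at which they flip, bounds each piece's \emph{local} efficiency by the deterministic worst case at that piece's own gain, and then invokes the nontrivial recombination inequality for mediant sums over probability vectors (Fact~\ref{planes_lemma} and its reversal, Fact~\ref{planes_reversed}) to recover the head/tail-sum expression. That recombination step is the real content of the theorem and cannot be shortcut by a per-graph threshold bound.
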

The proofs  are given in Section \ref{sec:analysis2}. The characterization of worst-case expected risk is a discounted weighting of a deterministic worst-case risk level. This suggests that the risk levels achievable by randomization can improve upon the risks induced from a deterministic gain.

\subsection{Risk tradeoffs under randomized operator strategies} 
%The transition from deterministic to randomized gains ensures a reduction of risk levels, established by the following claim.
%\begin{claim}\label{randomized_mitigation}
%	For $k \in \{1,\ldots,M\}$, there exists a $\bm{z} \in L_{\bm{\alpha}}$ satisfying $\bm{z} \preceq \begin{bmatrix} R_{\text{\emph{f}}}^*(\alpha_k) \\ R_{\text{\emph{b}}}^*(\alpha_k) \end{bmatrix}$. The relation is strict $(\prec)$ if $k \neq 1,M$ for $M \geq 3$.
%\end{claim}

%The following security-risk tradeoffs analogous to Theorems \ref{thm_gammaE} and \ref{thm_gammaA} hold in the randomized setting.
Given a level of security $\gamma \in [0,1]$ is ensured on one expected worst-case metric, what is the the minimum achievable risk level on the other? We find this can be calculated through a linear program. We formalize these tradeoffs in the following two statements, which are analogous to Corollaries \ref{thm_gammaE} and \ref{thm_gammaA}. 

%This can be formalized with the quantity
%\begin{equation}\label{eq:LP_gammaE}
%		v_{\text{b}}(\gamma,\bm{\alpha}) := \min_{\bm{p} \in \Delta_M} \mbb{E}^*_{\bm{\alpha},\bm{p}}[R_{\text{b}}] \ \text{subject to} \ \mbb{E}^*_{\bm{\alpha},\bm{p}}[R_{\text{f}}] \leq \gamma.
%\end{equation}
%This quantity is the least amount of risk the operator can ensure from broad attacks while keeping risk from focused attacks less than $\gamma$. A similar definition is made for $v_{\text{f}}(\gamma,\bm{\alpha})$ - the smallest focused attack risk possible while keeping broad risk less than $\gamma$. We see that such quantities can be computed numerically by formulating them as linear programs.

\begin{corollary}\label{thm_bmix}
	Fix $\gamma_{\text{\emph{f}}} \in [R_{\text{\emph{f}}}^*(\alpha_1),1]$ and a set of gains $\bm{\alpha} = \{\alpha_j\}_{j=1}^M$. Suppose $\mbb{E}_{\bm{\alpha},\bm{p}}^*[R_{\text{\emph{f}}}] \leq \gamma_{\text{\emph{f}}}$ for some $\bm{p} \in \Delta_M$. Then 
	\begin{equation}
		\mbb{E}_{\bm{\alpha},\bm{p}}^*[R_{\text{\emph{b}}}] \geq v_{\text{\emph{b}}}(\gamma_{\text{\emph{f}}},\bm{\alpha}),
	\end{equation}
	where $v_{\text{b}}(\gamma,\bm{\alpha})$ is the value of the following linear program.
	\begin{equation}\label{eq:LP1}
		\begin{aligned}
			v_{\text{b}}(\gamma_{\text{\emph{f}}},\bm{\alpha}) &= \min_{\bm{p}', v} v \\
			&\text{\emph{s.t.} } \sum_{i=1}^M p'_i = 1, \ p_i \geq 0 \ \forall i = 1,\ldots,M \\
			&v \in [0,1] \\
			&A_{\text{\emph{LP}}} \begin{bmatrix} \bm{p}' \\ v \end{bmatrix} \preceq \begin{bmatrix} 0_M \\ \gamma_{\text{\emph{f}}} \mathds{1}_M \end{bmatrix} \\
		\end{aligned}
	\end{equation}
	where $\preceq$ denotes elementwise $\leq$, $0_M$ and $\mathds{1}_M$ are column $M$-vectors of zeros and ones respectively, and $A_{\text{\emph{LP}}}$ is the $2M \times (M+1)$ matrix
	\begin{equation}\label{eq:LPmatrix1}
		{\small A_{\text{\emph{LP}}} = \left[
			\begin{array}{cccc;{2pt/2pt}c}
				R_{\text{\emph{b}}}^*(\alpha_1) & 0 & \cdots & 0  & -1 \\
				R_{\text{\emph{b}}}^*(\alpha_2) & R_{\text{\emph{b}}}^*(\alpha_2) & \cdots & \vdots &  \vdots \\
				\vdots & & \ddots & 0 &   \\
				R_{\text{\emph{b}}}^*(\alpha_M) & \cdots & \cdots & R_{\text{\emph{b}}}^*(\alpha_M) &  -1 \\ \hdashline[2pt/2pt]
				R_{\text{\emph{f}}}^*(\alpha_1) & \cdots & \cdots & R_{\text{\emph{f}}}^*(\alpha_1) &  0 \\
				0 & R_{\text{\emph{f}}}^*(\alpha_2) & \cdots & R_{\text{\emph{f}}}^*(\alpha_2) &  \vdots  \\
				\vdots & & \ddots & \vdots &   \\
				0 & \cdots & 0 & R_{\text{\emph{f}}}^*(\alpha_M) & 0 
			\end{array} \right]}.
	\end{equation}
	Moreover, $v_{\text{b}}(\gamma_{\text{\emph{f}}},\bm{\alpha})$ is decreasing in $\gamma_{\text{\emph{f}}}$.
\end{corollary}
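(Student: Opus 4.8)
The plan is to recognize that, after substituting the closed forms from Theorem~\ref{mixing_bound}, the linear program \eqref{eq:LP1} is exactly the constrained minimization of the worst-case expected broad risk over distributions meeting the focused-risk budget, and then to obtain the bound by a one-line feasibility argument.

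First I would decode the block structure of $A_{\text{LP}}$ in \eqref{eq:LPmatrix1}. The top $M$ rows, paired with the right-hand side $0_M$, read for each $k$ as $\RBstar(\alpha_k)\sum_{j=1}^k p'_j - v \le 0$; by the expression \eqref{eq:WC-EA} of Theorem~\ref{mixing_bound}, imposing all $M$ of these is equivalent to the single epigraph constraint $v \ge \max_{k}\{(\sum_{j=1}^k p'_j)\RBstar(\alpha_k)\} = \mbb{E}_{\bm{\alpha},\bm{p}'}^*[\RB]$. The bottom $M$ rows, paired with $\gammaF \mathds{1}_M$, read for each $k$ as $\RFstar(\alpha_k)\sum_{j=k}^M p'_j \le \gammaF$; by \eqref{eq:WC-EE} this is exactly $\mbb{E}_{\bm{\alpha},\bm{p}'}^*[\RF] \le \gammaF$. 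Since $v$ appears in the objective and only as a lower-bounded variable, the minimum is attained with $v = \mbb{E}_{\bm{\alpha},\bm{p}'}^*[\RB]$, which lies in $[0,1]$ because each $\RBstar(\alpha_k)\in[0,1]$ and the partial sums are at most $1$; hence
\[
v_{\text{b}}(\gammaF,\bm{\alpha}) = \min_{\bm{p}'\in\Delta_M}\ \mbb{E}_{\bm{\alpha},\bm{p}'}^*[\RB] \quad \text{s.t. } \mbb{E}_{\bm{\alpha},\bm{p}'}^*[\RF] \le \gammaF.
\]

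With this identification the bound is immediate: the distribution $\bm{p}$ in the hypothesis lies in $\Delta_M$ and satisfies $\mbb{E}_{\bm{\alpha},\bm{p}}^*[\RF] \le \gammaF$, so it is feasible for the minimization, and therefore its objective value $\mbb{E}_{\bm{\alpha},\bm{p}}^*[\RB]$ is no smaller than the optimum $v_{\text{b}}(\gammaF,\bm{\alpha})$. I would also note that the stated range $\gammaF \in [\RFstar(\alpha_1),1]$ is precisely what keeps the feasible set nonempty---evaluating \eqref{eq:WC-EE} at $k=1$ forces $\mbb{E}_{\bm{\alpha},\bm{p}'}^*[\RF] \ge \RFstar(\alpha_1)$ for every $\bm{p}'$, while $\bm{p}'=e_1$ attains this value---so $v_{\text{b}}$ is well-defined. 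For the final monotonicity claim, increasing $\gammaF$ only relaxes the constraint $\mbb{E}_{\bm{\alpha},\bm{p}'}^*[\RF]\le\gammaF$ and thus enlarges the feasible set; minimizing the fixed objective over a growing set yields a non-increasing optimal value, so $v_{\text{b}}(\gammaF,\bm{\alpha})$ is decreasing in $\gammaF$.

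The only genuinely fiddly step is the index bookkeeping when reading off the two triangular blocks of $A_{\text{LP}}$ and matching them to the max-of-partial-sums formulas \eqref{eq:WC-EA}--\eqref{eq:WC-EE}; once that correspondence is in hand, both the inequality (feasibility) and the monotonicity (feasible-set containment) are essentially immediate, so I do not anticipate a substantive obstacle.
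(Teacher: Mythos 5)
Your proposal is correct and follows essentially the same route as the paper: both identify the two triangular blocks of $A_{\text{LP}}$ with the partial-sum expressions \eqref{eq:WC-EA}--\eqref{eq:WC-EE} from Theorem~\ref{mixing_bound}, thereby showing the LP is exactly $\min_{\bm{p}'\in\Delta_M}\mbb{E}^*_{\bm{\alpha},\bm{p}'}[R_{\text{b}}]$ subject to $\mbb{E}^*_{\bm{\alpha},\bm{p}'}[R_{\text{f}}]\leq\gamma_{\text{f}}$, from which the bound follows by feasibility of $\bm{p}$ and monotonicity follows from relaxation of the constraint. Your write-up is in fact slightly more explicit than the paper's (spelling out feasibility of $\bm{p}$, nonemptiness of the feasible set on the stated range of $\gamma_{\text{f}}$, and the set-containment argument for monotonicity), but it is the same argument.
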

\begin{proof}
	We need to show equivalence between the linear program \eqref{eq:LP1} and the optimization problem
	\begin{equation}\label{eq:optimization}
		\min_{\bm{p}' \in \Delta_M} \mbb{E}^*_{\bm{\alpha},\bm{p}'}[\RB] \ \text{subject to} \ \mbb{E}^*_{\bm{\alpha},\bm{p}'}[\RF] \leq \gamma_{\text{f}}.
	\end{equation}
	Let $A_{\text{b}}(\bm{\alpha}) \in \mbb{R}^{M\times M}$ be the matrix defined by the upper left block of \eqref{eq:LPmatrix1} and $A_{\text{f}}(\bm{\alpha})$ by the bottom left block. From Theorem \ref{mixing_bound}, we can express $\mbb{E}_{\bm{\alpha},\bm{p}'}^*[R_{\text{b}}]$ as the maximum element of the $M$-vector $A_{\text{b}}(\bm{\alpha})\bm{p}'$, and similarly $\mbb{E}_{\bm{\alpha},\bm{p}'}^*[R_{\text{f}}]$ as the maximum element of $A_{\text{f}}(\bm{\alpha})\bm{p}'$. Hence, $\mbb{E}^*_{\bm{\alpha},\bm{p}'}[\RF] \leq \gamma_{\text{f}}$ is the linear constraint $[A_{\text{f}}(\bm{\alpha})\bm{p}']_i \leq \gamma$ for all $i=1,\ldots,M$. The objective $\min_{\bm{p}' \in \Delta_M} \mbb{E}^*_{\bm{\alpha},\bm{p}'}[\RB]$ itself can be cast as a linear objective with linear constraints, i.e. $\min_{\bm{p}' \in \Delta_M,v \in [0,1]} v \ \text{s.t.} \ [A_{\text{b}}(\bm{\alpha})\bm{p}']_i \leq v$. Combining these two, we obtain \eqref{eq:LP1}. The claim $v_{\text{b}}(\gamma,\bm{\alpha})$ is decreasing in $\gamma$ follows as a consequence of the linear program \eqref{eq:LP1}.
\end{proof}
We note that a worst-case expected focused  risk $\mbb{E}^*_{\bm{\alpha},\bm{p}}[\RF] < \RFstar(\alpha_1)$ is not attainable because $\alpha_1$ is the smallest gain it mixes with. Hence, the linear program \eqref{eq:LP1} is infeasible for $\gammaF < \RF^*(\alpha_1)$.  The following tradeoff relation holds in the opposite direction.

\begin{figure}
	\centering
	\includegraphics[scale=.45]{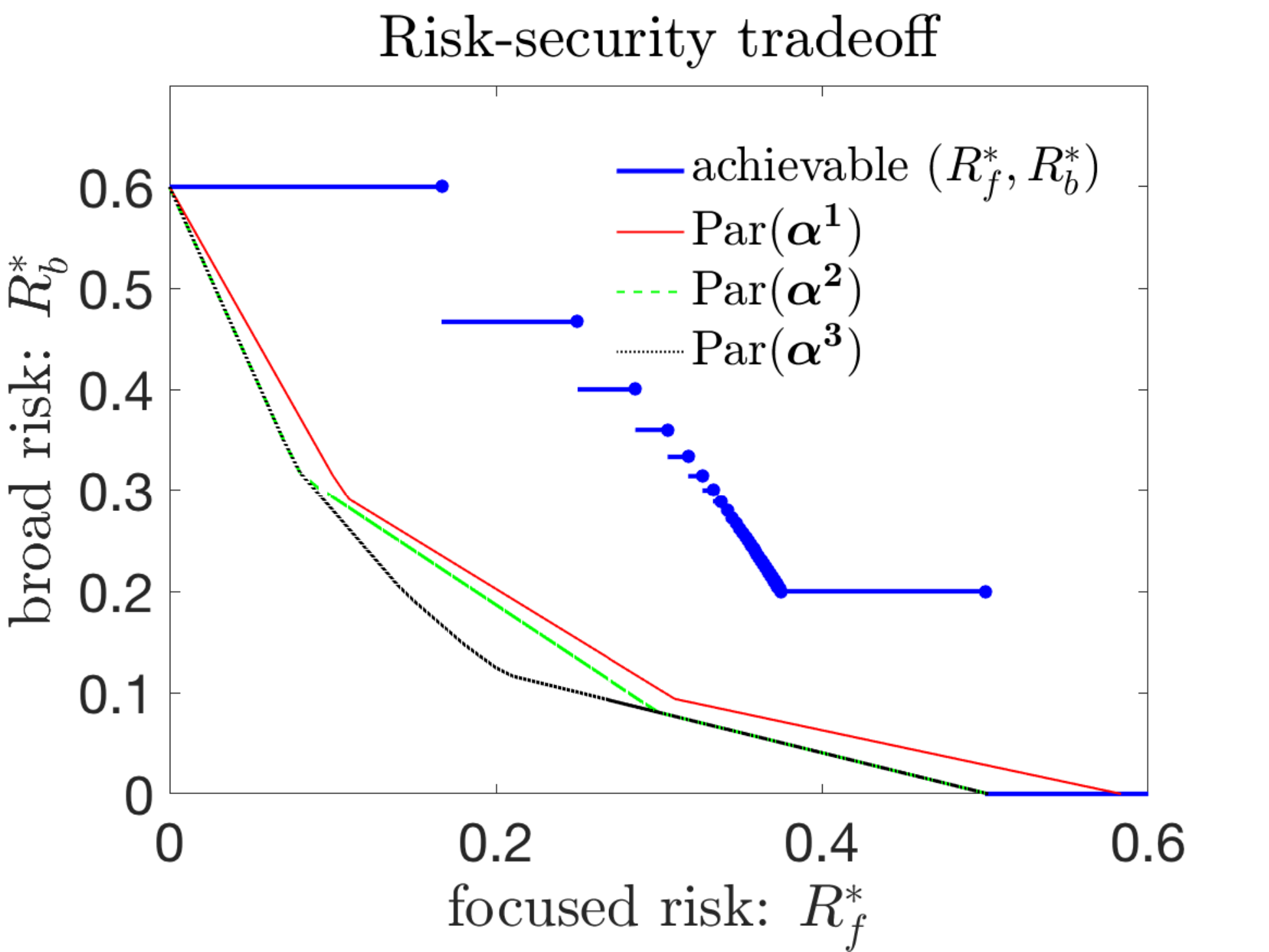}
	\caption{\small  Security-risk tradeoffs are depicted by the achievable worst-case risk levels from deterministic gains (blue) and randomized gains (red, green, black). The Pareto frontiers for three different randomized strategies $\bm{\alpha}^1,\bm{\alpha}^2 \in \mbb{R}_+^5$, and $\bm{\alpha}^3 \in \mbb{R}_+^{300}$, are shown in increasing order of improvement.  The strategies $\bm{\alpha}^1$ and $\bm{\alpha}^2$ randomize over the highest three {broad} risk levels in addition to the lowest two. The strategy $\bm{\alpha}^3$ randomizes over the highest 298 {broad} risk levels and the lowest two. We chose the values as follows. For $k = 1,2$, we set $\alpha_1^k = \alphasys$, $\alpha_j^k  = (1-\epsilon_k) \frac{j-1}{j}  + \epsilon_k \frac{j}{j+1}\in I_j$ for $j = 2,3$, $\alpha_4^k = 1+\epsilon_k$, and $\alpha_5^k = \frac{3}{2} + \epsilon_k$. We have set $\epsilon_1 = 0.5$ and $\epsilon_2 = .01$. Hence, Par($\bm{\alpha}^2$) improves upon Par($\bm{\alpha}^1$) via Claim \ref{pareto_incr}. For $k=3$,  we set $\alpha_1^3 = \alphasys$, $\alpha_j^3  = (1-\epsilon_3) \frac{j-1}{j}  + \epsilon_3 \frac{j}{j+1}\in I_j$, $j = 2,3,\ldots,298$, $\alpha_{299}^3 = 1+\epsilon_3$, and $\alpha_5^3 = \frac{3}{2} + \epsilon_3$. Claim \ref{pareto_more} ensures Par($\bm{\alpha}^3$) improves upon Par($\bm{\alpha}^2$). We chose $\epsilon_3 = .01$ and  $\alpha_{\text{sys}} = 1/4$.} 
	\label{fig:tradeoff_mix}
\end{figure}

\begin{corollary}\label{thm_fmix}
	Fix $\gamma_{\text{\emph{b}}} \in [R_{\text{\emph{b}}}^*(\alpha_M),1]$ and a set of gains $\bm{\alpha} = \{\alpha_j\}_{j=1}^M$. Suppose $\mbb{E}^*_{\bm{\alpha},\bm{p}}[R_{\text{\emph{b}}}]  \leq \gamma_{\text{\emph{b}}}$ for some $\bm{p} \in \Delta_M$. Then 
	\begin{equation}
		\mbb{E}^*_{\bm{\alpha},\bm{p}}[R_{\text{\emph{f}}}] \geq v_{\text{\emph{f}}}(\gamma_{\text{\emph{b}}},\bm{\alpha}), 
	\end{equation}
	where $v_{\text{\emph{f}}}(\gamma_{\text{\emph{b}}},\bm{\alpha})$ is the value of the following linear program.
	\begin{equation}\label{eq:LP2}
		\begin{aligned}
			v_{\text{\emph{f}}}(\gamma_{\text{\emph{b}}},\bm{\alpha}) &= \min_{\bm{p}, v} v \\
			&\text{\emph{s.t.} } \sum_{i=1}^M p_i = 1, \ p_i \geq 0 \ \forall i = 1,\ldots,M \\
			&v \in [0,1] \\
			&\left[\begin{array}{c;{2pt/2pt}c}  
				A_{\text{\emph{f}}}(\bm{\alpha}) & -\mathds{1}_M \\ \hdashline[2pt/2pt]
				A_{\text{\emph{b}}}(\bm{\alpha}) & 0_M
			\end{array}\right] \begin{bmatrix} \bm{p} \\ v \end{bmatrix} \preceq \begin{bmatrix} 0_M \\ \gamma_{\text{\emph{b}}} \mathds{1}_M \end{bmatrix},
		\end{aligned}
	\end{equation}
	where $A_{\text{\emph{f}}}(\bm{\alpha})$ and $A_{\text{\emph{b}}}(\bm{\alpha})$ are defined as the bottom and top left blocks of \eqref{eq:LPmatrix1}, respectively. Furthermore, $v_{\text{\emph{f}}}(\gamma_{\text{\emph{b}}},\bm{\alpha})$ is decreasing in $\gamma_{\text{\emph{b}}}$.
	
%	\begin{equation}\label{eq:LP_gammaA}
%		v_{\text{\emph{f}}}(\gamma_{\text{\emph{b}}},\bm{\alpha}):=\min_{\bm{p} \in \Delta_M} \mbb{E}^*_{\bm{\alpha},\bm{p}}[R_{\text{\emph{f}}}] \ \text{subject to} \ \mbb{E}^*_{\bm{\alpha},\bm{p}}[R_{\text{\emph{b}}}] \leq \gamma_{\text{\emph{b}}}.
%	\end{equation}
\end{corollary}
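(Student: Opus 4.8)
The plan is to mirror the proof of Corollary \ref{thm_bmix} with the roles of the broad and focused risks interchanged. Concretely, I would establish that the linear program \eqref{eq:LP2} is exactly the epigraph reformulation of the constrained optimization problem
\begin{equation*}
	\min_{\bm{p}\in\Delta_M} \mbb{E}^*_{\bm{\alpha},\bm{p}}[\RF] \quad \text{subject to} \quad \mbb{E}^*_{\bm{\alpha},\bm{p}}[\RB] \leq \gammaB ,
\end{equation*}
which immediately yields $\mbb{E}^*_{\bm{\alpha},\bm{p}}[\RF] \geq v_{\text{f}}(\gammaB,\bm{\alpha})$ for every feasible $\bm{p}$.

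First I would invoke Theorem \ref{mixing_bound} to write both worst-case expected risks as maxima of linear forms in $\bm{p}$. By \eqref{eq:WC-EE}, $\mbb{E}^*_{\bm{\alpha},\bm{p}}[\RF]$ equals the largest entry of the vector $A_{\text{f}}(\bm{\alpha})\bm{p}$, where $A_{\text{f}}(\bm{\alpha})$ is the upper-triangular bottom-left block of \eqref{eq:LPmatrix1} whose $k$-th row places $\RFstar(\alpha_k)$ in columns $k,\ldots,M$; likewise by \eqref{eq:WC-EA}, $\mbb{E}^*_{\bm{\alpha},\bm{p}}[\RB]$ equals the largest entry of $A_{\text{b}}(\bm{\alpha})\bm{p}$, with $A_{\text{b}}(\bm{\alpha})$ the lower-triangular top-left block. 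The security constraint $\mbb{E}^*_{\bm{\alpha},\bm{p}}[\RB] \leq \gammaB$ is then precisely the elementwise inequality $A_{\text{b}}(\bm{\alpha})\bm{p} \preceq \gammaB \mathds{1}_M$, which is the bottom block of the constraint in \eqref{eq:LP2}.

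Next I would dispatch the min-max objective with the standard epigraph device: introduce a scalar $v$ and minimize $v$ subject to $[A_{\text{f}}(\bm{\alpha})\bm{p}]_i \leq v$ for every $i$, i.e. $A_{\text{f}}(\bm{\alpha})\bm{p} - v\mathds{1}_M \preceq 0_M$. This is exactly the top block of the stacked constraint in \eqref{eq:LP2}, with the $-\mathds{1}_M$ column coupling $v$ to the focused-risk rows and the $0_M$ column leaving the broad-risk rows independent of $v$. Adjoining $\bm{p}\in\Delta_M$ and $v\in[0,1]$ completes the identification with \eqref{eq:LP2}, and since the feasible polytope is compact and the objective continuous, the minimum $v_{\text{f}}(\gammaB,\bm{\alpha})$ is attained. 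In analogy with the remark following Corollary \ref{thm_bmix}, I would also note that the $k=M$ term of \eqref{eq:WC-EA} always equals $\RBstar(\alpha_M)$ because $\sum_{j=1}^M p_j = 1$; hence no strategy can push $\mbb{E}^*_{\bm{\alpha},\bm{p}}[\RB]$ below $\RBstar(\alpha_M)$, which is exactly why the range $\gammaB \in [\RBstar(\alpha_M),1]$ is the regime in which the program is feasible.

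Finally, the monotonicity of $v_{\text{f}}(\gammaB,\bm{\alpha})$ in $\gammaB$ follows from a routine parametric-LP argument: $\gammaB$ appears only on the right-hand side of $A_{\text{b}}(\bm{\alpha})\bm{p} \preceq \gammaB\mathds{1}_M$, so increasing $\gammaB$ enlarges the feasible set and can only decrease the optimal value. I do not anticipate a genuine obstacle, as the argument is essentially a transcription of the proof of Corollary \ref{thm_bmix}. The one point demanding care is bookkeeping in the block matrix of \eqref{eq:LP2}: I must check that the coupling column is $-\mathds{1}_M$ on the focused (objective) rows and $0_M$ on the broad (constraint) rows — the reverse of the placement in \eqref{eq:LPmatrix1} — so that the epigraph variable $v$ bounds the quantity being minimized rather than the one being constrained.
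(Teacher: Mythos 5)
Your proposal is correct and follows exactly the route the paper intends: the paper omits this proof precisely because it is the symmetric transcription of Corollary \ref{thm_bmix}, which is what you carry out, invoking Theorem \ref{mixing_bound} to express both worst-case expected risks as maxima of linear forms in $\bm{p}$, applying the epigraph reformulation to the focused-risk objective, and imposing the broad-risk bound as the elementwise constraint $A_{\text{b}}(\bm{\alpha})\bm{p} \preceq \gammaB\mathds{1}_M$. Your added remarks on feasibility for $\gammaB \geq \RBstar(\alpha_M)$ and on monotonicity via enlargement of the feasible set match the paper's surrounding discussion as well.
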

We omit the proof as it is similar to that of Corollary \ref{thm_bmix}. Note a worst-case expected broad risk $\mbb{E}^*_{\bm{\alpha},\bm{p}}[\RB] < \RBstar(\alpha_M)$ is not attainable since $\alpha_M$ is the highest gain it mixes with - \eqref{eq:LP2} is infeasible for $\gammaB < \RBstar(\alpha_M)$.   Fig. \ref{fig:tradeoff_mix} plots the best achievable risk levels of three randomized operator strategies (red, green, and black). 

%%%%%%%%%%%%%%%%%%%%%%%%%%%%%%%%%%%%%%%%%%%%%%%%%%%%%%%%%%%%%%%%%%%%%%%%%%%
%%%%%%%%%%%%%%%%%%%%%%%%%%%%%  S E C T I O N 3 %%%%%%%%%%%%%%%%%%%%%%%%%%%%%%%%%%%%%%
%%%%%%%%%%%%%%%%%%%%%%%%%%%%%%%%%%%%%%%%%%%%%%%%%%%%%%%%%%%%%%%%%%%%%%%%%%%
\subsection{Improvement of risk tradeoffs}

The tradeoff relations describe the best achievable level on one risk metric given the other is subject to a security constraint when the gains $\bm{\alpha}$ are fixed. One way to improve the achievable risks is to  decrease the available gains.  
\begin{claim}\label{pareto_incr}
	Let $\bm{\alpha},\bm{\alpha}' \in \mbb{R}^M$. Suppose $\alpha_j \in  I_{k_j}$ (recall \eqref{eq:Ik}), $j=1,\ldots,M$  for some non-decreasing subsequence $k_j \geq 1$. Let $\bm{\alpha}'$ satisfy $\alpha_j' \in I_{k_j}$ with $\alpha_j' < \alpha_j$.  Then for all $\gamma_{\text{\emph{b}}} \in [R_{\text{\emph{b}}}^*(\alpha_M),1]$, $v_{\text{\emph{f}}}(\gamma_{\text{\emph{b}}},\bm{\alpha}') \leq v_{\text{\emph{f}}}(\gamma_{\text{\emph{b}}},\bm{\alpha})$. Similarly, for all $\gamma_{\text{\emph{f}}} \in [R_{\text{\emph{f}}}^*(\alpha_1),1]$, $v_{\text{\emph{b}}}(\gamma_{\text{\emph{f}}},\bm{\alpha}') \leq v_{\text{\emph{b}}}(\gamma_{\text{\emph{f}}},\bm{\alpha})$.
\end{claim}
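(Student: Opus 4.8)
The plan is to reduce both inequalities to two structural observations about the worst-case risk curves, after which the result follows from a routine feasibility-transfer argument between the two linear programs. The first observation is that, by \eqref{eq:RA_WC}, the worst-case broad risk $\RBstar$ is \emph{constant} on each interval $I_k$ of \eqref{eq:Ik}; since $\alpha_j$ and $\alpha_j'$ lie in the same interval $I_{k_j}$ by hypothesis, we have $\RBstar(\alpha_j') = \RBstar(\alpha_j)$ for every $j$. Because the block $A_{\text{b}}(\bm{\alpha})$ of \eqref{eq:LPmatrix1} is assembled entirely from the scalars $\{\RBstar(\alpha_j)\}$ in a fixed lower-triangular pattern, this yields $A_{\text{b}}(\bm{\alpha}') = A_{\text{b}}(\bm{\alpha})$. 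The second observation is that, by \eqref{eq:RE_WC}, $\RFstar$ is strictly increasing on $(\alphasys,\infty)$; operating under the standing restriction to gains greater than $\alphasys$, the hypothesis $\alpha_j' < \alpha_j$ gives $\RFstar(\alpha_j') \leq \RFstar(\alpha_j)$ for every $j$. Since the nonzero entries of $A_{\text{f}}(\bm{\alpha})$ are exactly the values $\{\RFstar(\alpha_j)\}$ in a fixed upper-triangular pattern, this yields the elementwise inequality $A_{\text{f}}(\bm{\alpha}') \preceq A_{\text{f}}(\bm{\alpha})$.

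For the bound $v_{\text{f}}(\gammaB,\bm{\alpha}') \leq v_{\text{f}}(\gammaB,\bm{\alpha})$, I would let $(\bm{p}^\star,v^\star)$ be an optimal solution of the program \eqref{eq:LP2} defining $v_{\text{f}}(\gammaB,\bm{\alpha})$ and exhibit a feasible point of the primed program with objective no larger than $v^\star$. Feasibility of the broad block $A_{\text{b}}(\bm{\alpha}')\bm{p}^\star \preceq \gammaB \mathds{1}_M$ is immediate since $A_{\text{b}}(\bm{\alpha}') = A_{\text{b}}(\bm{\alpha})$. Setting $v' := \max_i [A_{\text{f}}(\bm{\alpha}')\bm{p}^\star]_i$, nonnegativity of $\bm{p}^\star$ together with $A_{\text{f}}(\bm{\alpha}') \preceq A_{\text{f}}(\bm{\alpha})$ gives $[A_{\text{f}}(\bm{\alpha}')\bm{p}^\star]_i \leq [A_{\text{f}}(\bm{\alpha})\bm{p}^\star]_i \leq v^\star \leq 1$, so $(\bm{p}^\star,v')$ is feasible with $v' \leq v^\star$, establishing the inequality.

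For the bound $v_{\text{b}}(\gammaF,\bm{\alpha}') \leq v_{\text{b}}(\gammaF,\bm{\alpha})$, the roles of the blocks swap: now $A_{\text{f}}$ appears in the constraint and $A_{\text{b}}$ in the objective of \eqref{eq:LP1}. Taking an optimal $(\bm{p}^\star,v^\star)$ for $v_{\text{b}}(\gammaF,\bm{\alpha})$, the focused constraint only relaxes under the shift, since $A_{\text{f}}(\bm{\alpha}')\bm{p}^\star \preceq A_{\text{f}}(\bm{\alpha})\bm{p}^\star \preceq \gammaF \mathds{1}_M$ by the elementwise inequality and $\bm{p}^\star \geq 0$, while the broad block defining the objective is unchanged. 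Hence $(\bm{p}^\star,v^\star)$ remains feasible for the primed program and $v_{\text{b}}(\gammaF,\bm{\alpha}') \leq v^\star$.

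The argument is essentially a relaxation/feasibility transfer once the two matrix relations are in hand, so the only real care needed is in justifying those relations rather than in the LP manipulations. The main subtlety I expect to address is the regime assumption: the monotonicity $\RFstar(\alpha_j') \leq \RFstar(\alpha_j)$ relies on all gains lying in the increasing branch $\alpha > \alphasys$ of \eqref{eq:RE_WC} (lowering a gain below $\alphasys$ would instead \emph{raise} the focused risk), which is exactly the standing restriction adopted after the Remark. I would also check that $\bm{\alpha}'$ remains an admissible, strictly increasing gain vector so that the characterizations of Corollaries \ref{thm_bmix} and \ref{thm_fmix} apply; the non-decreasing assumption on $k_j$ is what keeps the shifted gains consistently ordered across distinct intervals, and since $A_{\text{b}}$ is unchanged the feasibility thresholds of both programs coincide on the stated ranges of $\gammaB$ and $\gammaF$.
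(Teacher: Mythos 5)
Your proposal is correct and is exactly the argument the paper has in mind: the paper omits the proof, stating only that Claims \ref{pareto_incr} and \ref{pareto_more} ``follow directly from the formulation of the LPs \eqref{eq:LP1}, \eqref{eq:LP2},'' and your proof fills this in via the intended route --- $A_{\text{b}}(\bm{\alpha}')=A_{\text{b}}(\bm{\alpha})$ because $\RBstar$ is constant on each $I_k$, $A_{\text{f}}(\bm{\alpha}')\preceq A_{\text{f}}(\bm{\alpha})$ because $\RFstar$ is increasing for gains above $\alphasys$, followed by a feasibility transfer between the two programs. Your flagged caveats (all gains, including the primed ones, must respect the standing restriction $\alpha \geq \alphasys$, and $\bm{\alpha}'$ must remain a validly ordered gain vector so Corollaries \ref{thm_bmix} and \ref{thm_fmix} apply) are precisely the implicit conventions of Section \ref{sec:analysis2}.
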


Randomizing over additional gains can also improve the achievable risks.
\begin{claim}\label{pareto_more}
	Suppose $\bm{\alpha} \in \mbb{R}^M$ and $\bm{\alpha}' \in \mbb{R}^{M'}$ with $M < M'$, and assume $\bm{\alpha}'$ contains the elements of $\bm{\alpha}$. Then the assertion of Claim \ref{pareto_incr}  holds. 
\end{claim}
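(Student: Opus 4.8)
The plan is to exploit the fact that the larger gain set $\bm{\alpha}'$ merely inserts extra gains into the ordered list of $\bm{\alpha}$, so that any randomization feasible over $\bm{\alpha}$ lifts to one over $\bm{\alpha}'$ with \emph{identical} worst-case expected risks by assigning zero probability to the new gains. Since $\bm{\alpha}'$ contains $\bm{\alpha}$, its largest gain is at least $\alpha_M$ and its smallest at most $\alpha_1$, so by the monotonicity of $\RBstar$ and $\RFstar$ the feasibility range of each linear program over $\bm{\alpha}'$ contains that over $\bm{\alpha}$; it therefore suffices to compare the two LP values on the common range $\gamma_{\text{b}} \in [\RBstar(\alpha_M),1]$ (resp. $\gamma_{\text{f}} \in [\RFstar(\alpha_1),1]$) inherited from Claim \ref{pareto_incr}.

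Concretely, to prove $v_{\text{f}}(\gamma_{\text{b}},\bm{\alpha}') \leq v_{\text{f}}(\gamma_{\text{b}},\bm{\alpha})$, I would fix such a $\gamma_{\text{b}}$, let $\bm{p}$ attain $v_{\text{f}}(\gamma_{\text{b}},\bm{\alpha})$ in the LP \eqref{eq:LP2}, and define $\bm{p}' \in \Delta_{M'}$ to equal $\bm{p}$ on the inherited gains and $0$ on the newly added gains. Then I invoke the closed forms of Theorem \ref{mixing_bound}, $\mbb{E}^*_{\bm{\alpha}',\bm{p}'}[\RB] = \max_k (\sum_{j\leq k} p'_j)\RBstar(\alpha'_k)$ and $\mbb{E}^*_{\bm{\alpha}',\bm{p}'}[\RF] = \max_k (\sum_{j\geq k} p'_j)\RFstar(\alpha'_k)$, and check that each maximand contributed by a padded gain is dominated by one already present in the $\bm{\alpha}$-problem.

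The crux is this domination step. For a new gain $\beta$ with $p'_\beta = 0$, its broad maximand $(\sum_{j\leq \mathrm{pos}(\beta)} p'_j)\RBstar(\beta)$ has cumulative weight equal to that of the largest inherited gain below $\beta$ (no inherited weight lies strictly between them), while $\RBstar(\beta) \leq \RBstar(\cdot)$ at that inherited gain because $\RBstar$ is decreasing (Theorem \ref{thm_RA_WC}); if no inherited gain lies below $\beta$ the weight is $0$ and the term vanishes. Symmetrically, the focused maximand of $\beta$ is dominated by that of the smallest inherited gain above $\beta$, since $\RFstar$ is increasing on $(\alphasys,\infty)$ (Theorem \ref{thm_RE_WC}), again with the degenerate case giving weight $0$. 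Hence the padded gains never enlarge either maximum, giving $\mbb{E}^*_{\bm{\alpha}',\bm{p}'}[\RB] = \mbb{E}^*_{\bm{\alpha},\bm{p}}[\RB] \leq \gamma_{\text{b}}$ and $\mbb{E}^*_{\bm{\alpha}',\bm{p}'}[\RF] = \mbb{E}^*_{\bm{\alpha},\bm{p}}[\RF] = v_{\text{f}}(\gamma_{\text{b}},\bm{\alpha})$.

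Consequently $\bm{p}'$ is feasible for \eqref{eq:LP2} over $\bm{\alpha}'$ with objective $v_{\text{f}}(\gamma_{\text{b}},\bm{\alpha})$, and since $v_{\text{f}}(\gamma_{\text{b}},\bm{\alpha}')$ is the minimum over that feasible set we obtain $v_{\text{f}}(\gamma_{\text{b}},\bm{\alpha}') \leq v_{\text{f}}(\gamma_{\text{b}},\bm{\alpha})$. The inequality $v_{\text{b}}(\gamma_{\text{f}},\bm{\alpha}') \leq v_{\text{b}}(\gamma_{\text{f}},\bm{\alpha})$ follows by the identical construction applied to \eqref{eq:LP1}, with the roles of broad and focused interchanged. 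I expect the only real obstacle to be the careful bookkeeping of the insertion positions in the monotonicity domination above; the validity of $\bm{p}'$ as a distribution and the feasibility-implies-smaller-minimum conclusion are routine.
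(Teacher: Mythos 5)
Your proof is correct. The paper actually omits any argument for this claim, stating only that it ``follows directly from the formulation of the LPs \eqref{eq:LP1}, \eqref{eq:LP2},'' and your proposal is a faithful and complete instantiation of that intended argument: zero-pad the optimal distribution onto the enlarged gain set and verify feasibility at the same objective value. Notably, you correctly identified the one step that is \emph{not} immediate from the LP formulation alone: enlarging $\bm{\alpha}$ adds new constraint rows (new maximands in Theorem \ref{mixing_bound}), so one must check that these do not bind under the padded distribution. Your domination argument handles this exactly right --- each new gain's broad maximand carries the cumulative weight of the largest inherited gain below it and a smaller value of $\RBstar$ (decreasing), while its focused maximand carries the tail weight of the smallest inherited gain above it and a smaller value of $\RFstar$ (increasing on $\alpha \geq \alphasys$, which is the regime the paper restricts to), with the degenerate boundary cases giving weight zero. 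The feasibility-range remark and the reduction to the minimum of the enlarged LP are likewise sound.
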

The proofs of the above two Claims follow directly from the formulation of the LPs \eqref{eq:LP1}, \eqref{eq:LP2}, and hence we omit them.

Fig. \ref{fig:tradeoff_mix} depicts the best achievable risk levels of three randomized operator strategies of increasing improvement due to Claims \ref{pareto_incr} and \ref{pareto_more} (red, green, and black curves). In particular, these plots constitute the \emph{Pareto frontier} of all attainable expected risks among distributions $\bm{p}$ given a fixed set of gains. That is, for any $\bm{\alpha}$, we say a risk level $\begin{bmatrix} \mbb{E}_{\bm{\alpha},\bm{p}}[\RF] \\ \mbb{E}_{\bm{\alpha},\bm{p}}[\RB] \end{bmatrix} \in \mbb{R}^2$ belongs to the frontier Par($\bm{\alpha}$) if there does not exist a $\bm{p}' \neq \bm{p}$ such that $\begin{bmatrix} \mbb{E}_{\bm{\alpha},\bm{p}'}[\RF] \\ \mbb{E}_{\bm{\alpha},\bm{p}'}[\RB] \end{bmatrix} \preceq \begin{bmatrix} \mbb{E}_{\bm{\alpha},\bm{p}}[\RF] \\ \mbb{E}_{\bm{\alpha},\bm{p}}[\RB] \end{bmatrix}$.
Within Par($\bm{\alpha}$), the operator can only improve upon one worst-case risk metric by sacrificing performance on the other. 

From Corollary \ref{thm_fmix}, the  frontier given gains $\bm{\alpha}$ is the set of points
\begin{equation}
		\begin{aligned}
			\text{Par}(\bm{\alpha}) = \left\{\begin{bmatrix}   v_{\text{f}}(\gamma_{\text{b}},\bm{\alpha}) \\ \gamma_{\text{b}} \end{bmatrix} \in \mbb{R}^2 :  \gamma_{\text{b}} \in [\RBstar(\alpha_M),\RBstar(\alpha_1)] \right\}.
		\end{aligned}
\end{equation} 
The parameter $\gammaB$ is upper bounded here by $\RBstar(\alpha_1)$ since any risk level with $\mbb{E}_{\bm{\alpha},\bm{p}}[\RB] > \RBstar(\alpha_1)$ is unattainable under $\bm{\alpha}$. Hence, the values $v_{\text{f}}(\gamma_{\text{b}},\bm{\alpha})$ and $v_{\text{f}}(\RBstar(\alpha_1),\bm{\alpha})$ are equivalent for $\gammaB > \RBstar(\alpha_1)$. The frontiers in Fig. \ref{fig:tradeoff_mix} are generated by numerically solving the linear program \eqref{eq:LP2} for a finite grid of points $\gammaB \in [\RBstar(\alpha_M),\RBstar(\alpha_1)]$.

As we have seen, the transition from deterministic to randomized gains ensures a reduction of risk levels. Randomizing over only a few different gains substantially improves upon the attainable deterministic worst-case risks. However, a detailed quantification of such improvements remains a challenge due to the high dimensionality of the model. In particular, we have yet identified a ``limit" frontier that could be obtained by repeated modifications to the gain vector detailed by Claims \ref{pareto_incr} and \ref{pareto_more}.

\section{Proof of Theorems \ref{thm_RA_WC} and \ref{thm_RE_WC}: Deterministic worst-case risks}\label{sec:analysis}
In this section, we develop the technical results that characterize the worst-case risk metrics $\RBstar(\alpha)$  and  $\RFstar(\alpha)$  (Theorems  \ref{thm_RA_WC} and \ref{thm_RE_WC}). Before presenting the proofs, we first present some preliminaries on potential games \cite{Monderer_1996}, which are essential to calculating stochastically stable states. We then define relevant notations for the forthcoming analysis.

\subsection{Potential games}  
Graphical coordination games fall under the class of potential games - games where individual utilities $\{U_i\}_{i\in\mcal{N}}$ are aligned with a global objective, or potential function.  A game is a potential game if there exists a potential function $\phi:\mcal{A} \rightarrow \mbb{R}$ which satisfies
\begin{equation}
	\phi(a_i,a_{-i}) - \phi(a_i',a_{-i}) = U_i(a_i,a_{-i}) - U_i(a_i',a_{-i})
\end{equation} 
for all $i\in\mcal{N}$, $a\in\mcal{A}$, and $a_i' \neq a_i$ \cite{Monderer_1996}. In potential games, the set of stochastically stable states \eqref{eq:LLL} are precisely the action profiles that maximize the potential function \cite{Blume_1995,Marden_2012}. Specifically, $\text{LLL}(\mcal{A},\{U_i\}_{i\in\mcal{N}};G) = \argmax{a\in\mcal{A}} \phi(a)$. Our analysis relies on characterizing a potential function for the graphical coordination game in the presence of adversarial influences. This allows us to compute stochastically stable states in a straightforward manner.   

\subsection{Relevant notations for analysis}
Any  action profile $a$ on a graph $G = (\mcal{N},\mcal{E}) \in \mcal{G}_N$ decomposes $\mcal{N}$ into $x$ and $y$-partitions.  A node that belongs to a $y$-partition ($x$-partition) has $a_i=y$ ($x$).  The partitions are enumerated $\{ \mcal{P}_y^1, \ldots, \mcal{P}_y^{k_y}\}$ and $\{  \mcal{P}_x^1, \ldots, \mcal{P}_x^{k_x} \}$, are mutually disjoint, and cover the graph. Each partition is a connected subgraph of $G$.  It is possible that $k_x=0$ with $k_y=1$ (when $a=\vec{y}$), $k_x=1$ with $k_y = 0$ (when  $a=\vec{x}$), or $k_y,k_x \geq 1$.  

For any subset of nodes $A,B \subseteq \mcal{N}$, let us denote
\begin{equation}
	e(A,B) :=  \{(i,j) \in \mcal{E} : i\in A, j \in B \} 
\end{equation}
as the set of edges between $A$ and $B$. We write $A^c$ as the complement of $A$. We extensively use the notation
\begin{equation}
	W^{\alpha}(E,a) := \sum_{(i,j) \in E} V^\alpha(a_i,a_j) 
\end{equation}
as the welfare due to edge set $E \subseteq \mcal{E}$ in action profile $a$, where $V^\alpha$ is of the form \eqref{eq:base_game} with $\alpha_{\text{sys}}$ replaced by $\alpha$. For compactness, we will denote $W(E,a)$ as $W^{\alphasys}(E,a)$ for the local system welfare generated by the edges $E$. Our analysis will also rely on the following mediant inequality.
\begin{fact}
	Suppose $n_i \geq 0$ and $d_i > 0$ for each $i=1,\ldots,m \in \mbb{N}$. Then 
	\begin{equation}\label{eq:mediant}
		\frac{\sum_{i=1}^m n_i}{\sum_{i=1}^n d_i} \geq \min_{i} \frac{n_i}{d_i}. 
	\end{equation}
	We refer to the LHS above as the \emph{mediant sum} of the $\frac{n_i}{d_i}$.
\end{fact}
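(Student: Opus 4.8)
The plan is to prove this directly by bounding each numerator from below and then summing, rather than by induction. First I would set $r := \min_{i} \frac{n_i}{d_i}$, which is well-defined because the index set $\{1,\ldots,m\}$ is finite and nonempty. The key observation is that for every $i$ we have $\frac{n_i}{d_i} \geq r$ by definition of the minimum, and since $d_i > 0$ this rearranges into the linear inequality $n_i \geq r\, d_i$. This per-term rescaling is the one step that genuinely uses a hypothesis: multiplying through by $d_i$ preserves the direction of the inequality precisely because $d_i$ is strictly positive.

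Next I would sum the inequalities $n_i \geq r\, d_i$ over $i = 1,\ldots,m$ to get $\sum_{i=1}^m n_i \geq r \sum_{i=1}^m d_i$. Because each $d_i > 0$ and $m \geq 1$, the denominator $\sum_{i=1}^m d_i$ is strictly positive, so I may divide both sides by it without reversing the inequality, yielding $\frac{\sum_{i=1}^m n_i}{\sum_{i=1}^m d_i} \geq r = \min_{i} \frac{n_i}{d_i}$. This is exactly the claimed bound, so the proof closes immediately.

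I do not anticipate any real obstacle, since the statement is an elementary rearrangement. The only points requiring care are administrative: confirming that the minimum is actually attained (finiteness of the index set), and checking that both the per-term scaling and the final division are legitimate under the hypothesis $d_i > 0$. An alternative would be induction on $m$ built on the two-term base case $\frac{n_1+n_2}{d_1+d_2} \geq \min\{\frac{n_1}{d_1}, \frac{n_2}{d_2}\}$, but the direct summation argument is shorter and avoids the inductive bookkeeping, so I would present that version.
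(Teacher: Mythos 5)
Your proof is correct: the paper states this mediant inequality as a Fact without providing any proof, and your direct argument (scaling each $n_i \geq r\,d_i$ using $d_i > 0$, summing, and dividing by the positive total $\sum_{i=1}^m d_i$) is exactly the standard justification one would supply. Note also that the $n$ appearing in the lower limit of the denominator's sum in the paper's statement is a typo for $m$, which your proof implicitly and correctly repairs.
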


\subsection{Characterization of $\RBstar$: worst-case {broad} risk}
To prove Theorem \ref{thm_RA_WC}, we seek a pair $(S,G)$ with $G\in\mcal{G}_N$  of any size $N \geq 3$ and $S\in\mcal{T}(G)$, that minimizes efficiency $\JB(\alpha,S;G)$ (maximizes risk $\RB(\alpha,S;G)$). Our method to find the minimizer is to show any $(S,G)$ can be transformed into a star network with a particular target set that has lower efficiency, when $\alpha < 1$. Thus, in this regime the search for the worst-case graph reduces to the class of star networks of arbitrary size. For $\alpha \geq 1$,  structural properties allow us to deduce the minimal efficiency.

The graphical coordination game defined by $\mcal{A}=\{x,y\}^N$, perceived utilities $\{\tilde{U}_i^\alpha\}_{i\in\mcal{N}}$  \eqref{eq:utilities2}, target set $S$, and graph $G$ falls under the class of potential games \cite{Monderer_1996}. A potential function is given by
\begin{equation}\label{eq:potential2}
	\frac{1}{2}W^{\alpha}(a) + (1+\alpha)\sum_{i\in S_x}\mathds{1}(a_i=x) + \sum_{i\in S_y} \mathds{1}(a_i = y)
\end{equation}
where 
\begin{equation}\label{eq:perceived_welfare}
	W^\alpha(a) := \sum_{i\in\mcal{N}} U_i^\alpha(a).
\end{equation} 
Hence, the stochastically stable states $\text{LLL}(\mcal{A},\alpha,S;G)$ are maximizers of \eqref{eq:potential2}. Suppose  $\hat{a} = \argmin{a \in\text{LLL}(\mcal{A},\alpha,S;G)} W(a)$ is the welfare-minimizing SSS inducing the partitions $\{\mcal{P}_z^k\}_{k=1}^{k_z}$, $z=x,y$.  We can express its efficiency from \eqref{eq:JA} as
\begin{equation}\label{eq:JI_partitions}
	\frac{\sum_{k=1}^{k_y}|e(\mcal{P}_y^k,\mcal{P}_y^k)| + (1+\alpha_{\text{sys}})\sum_{k=1}^{k_x} |e(\mcal{P}_x^k,\mcal{P}_x^k)|}{(1+\alpha_{\text{sys}})(\sum_{k=1}^{k_y} |e(\mcal{P}_y^k,\mcal{N})| + \sum_{k=1}^{k_x} |e(\mcal{P}_x^k,\mcal{P}_x^k)|)}.
\end{equation}
Note the denominator is simply the number of edges in $G$ multiplied by $1\!+\! \alphasys$. From \eqref{eq:potential2},  each $y$-partition $\mcal{P}_y^k$ in $\hat{a}$ satisfies\footnote{Since we are seeking worst-case  pairs $(S,G)$,  we may consider any $y$-partition as only having $y$ imposters placed among its nodes. This is because any $x$ imposters that were placed in a resulting  $y$-partition can be replaced by $y$-imposters and retain stability.  We reflect this generalization in \eqref{eq:CY} and \eqref{eq:CX}, where influence from only $y$ ($x$) imposters is considered. }
\begin{align}\label{eq:CY}
		|\mcal{P}_y^k| + |e(\mcal{P}_y^k,\mcal{P}_y^k)| \geq &\max_{a_{\mcal{P}_y^k}\neq \vec{y}_{\mcal{P}_y^k}} \!\! W^{\alpha}(e(\mcal{P}_y^k,\mcal{N}), (a_{\mcal{P}_y^k},\hat{a}_{-\mcal{P}_y^k})) \nonumber \\ 
		&+ \sum_{i\in\mcal{P}_y^k} \!\! \mathds{1}(a_i = y). \tag{CY}
\end{align}
In words, no subset of agents in $\mcal{P}_y^k$ can deviate from $y$ to improve the collective perceived welfare of $\mcal{P}_y^k$. A similar stability condition holds for each $x$-partition $\mcal{P}_x^k$. 
% We do not write the detailed expression here, as the following arguments only rely on \eqref{eq:CY} being satisfied or not. 
\begin{align}\label{eq:CX}
		(1\!+\!\alpha)&\!\left(|\mcal{P}_x^k| \!+\! |e(\mcal{P}_x^k,\mcal{P}_x^k)| \right) \geq \nonumber \\ 
		&\max_{a_{\mcal{P}_x^k}\neq \vec{x}_{\mcal{P}_x^k}} \!\!\!W^{\alpha}\!(e(P_x^k,\mcal{N}), (a_{\mcal{P}_x^k},\hat{a}_{-\mcal{P}_x^k})) \nonumber \\ 
		&\quad+ (1+\alpha)\!\!\!\sum_{i\in\mcal{P}_x^k} \!\!\mathds{1}(a_i = x) \tag{CX}
\end{align}
The following result characterizes the threshold on $\alpha$ above which any network is safeguarded from any imposter attack.
\begin{lemma}\label{threshold}
	Let $N\geq 3$. Then $\alpha > \frac{N}{N-1}$ if and only if 
	\begin{equation}
		\min_{G\in\mcal{G}_N} \min_{S \in \mcal{T}(G)} J_{\text{\emph{b}}}(\alpha,S;G) = 1.
	\end{equation}
\end{lemma}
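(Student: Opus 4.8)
The plan is to reduce the statement to a claim about the potential \eqref{eq:potential2}, whose value I write $\phi$, and then to exploit the partition stability condition \eqref{eq:CY}. First I would observe that $\vec x$ is the \emph{unique} maximizer of the true welfare $W$ over $\{x,y\}^N$: any profile other than $\vec x$ leaves some edge that is not an $xx$ edge, and connectivity then forbids matching $W(\vec x)$. Consequently $\JB(\alpha,S;G)=1$ holds precisely when the welfare-minimizing SSS equals $\vec x$, i.e.\ when $\text{LLL}(\mcal A,\alpha,S;G)=\{\vec x\}$, which by the potential-game characterization is equivalent to $\vec x$ being the unique maximizer of \eqref{eq:potential2}. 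Thus the Lemma becomes: $\vec x$ is the unique maximizer of $\phi$ for every $G\in\mcal G_N$ and every $S\in\mcal T(G)$ if and only if $\alpha>\tfrac{N}{N-1}$.

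For the ``if'' direction I would argue by contradiction. Suppose $\alpha>\tfrac{N}{N-1}$ yet some $\hat a\neq\vec x$ maximizes $\phi$; then $\hat a$ has at least one $y$-partition $P=\mcal P_y^k$, and I write $p=|P|$, $e_{\text{in}}=|e(P,P)|$, $e_{\text{bd}}=|e(P,P^c)|$. Evaluating \eqref{eq:CY} at the single deviation $a_P=\vec x_P$ (flip all of $P$ to $x$; every exterior neighbor of $P$ already plays $x$, so every incident edge becomes an $xx$ edge) yields $p+e_{\text{in}}\ge(1+\alpha)(e_{\text{in}}+e_{\text{bd}})$, i.e.\ $\alpha\le\tfrac{p-e_{\text{bd}}}{e_{\text{in}}+e_{\text{bd}}}$. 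I then split on whether $P=\mcal N$. If $P=\mcal N$ (so $\hat a=\vec y$) then $e_{\text{bd}}=0$, $p=N$, and $e_{\text{in}}=|\mcal E|\ge N-1$ by connectivity, giving $\alpha\le\tfrac{N}{|\mcal E|}\le\tfrac{N}{N-1}$. If $P\subsetneq\mcal N$, then connectivity of $G$ forces $e_{\text{bd}}\ge1$, connectivity of the partition forces $e_{\text{in}}\ge p-1$, and $p\le N-1$, so $\alpha\le\tfrac{p-1}{(p-1)+1}=\tfrac{p-1}{p}<1\le\tfrac{N}{N-1}$. In both cases $\alpha\le\tfrac{N}{N-1}$, contradicting the hypothesis; hence no such $\hat a$ exists and $\JB\equiv1$.

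For the ``only if'' direction I would prove the contrapositive by exhibiting one bad instance. Take $G$ to be the star $K_{1,N-1}$ (a tree, so $|\mcal E|=N-1$) with $S_y=\mcal N$. Fixing the center and optimizing each leaf independently shows $\phi$ is maximized either at $\vec x$ or at $\vec y$, with $\phi(\vec x)=(N-1)(1+\alpha)$ and $\phi(\vec y)=2N-1$; hence $\vec y$ is a global maximizer whenever $(N-1)(1+\alpha)\le 2N-1$, i.e.\ exactly when $\alpha\le\tfrac{N}{N-1}$. In that regime $\vec y\in\text{LLL}$ with $W(\vec y)/W(\vec x)=\tfrac{1}{1+\alphasys}<1$, so $\JB(\alpha,\mcal N;K_{1,N-1})<1$ and the minimum over $(G,S)$ is strictly below $1$.

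The technical heart is the ``if'' direction: showing that testing the \emph{single} collective deviation ``flip a $y$-partition to $x$'' in \eqref{eq:CY} already excludes every non-$\vec x$ maximizer, and that the threshold $\tfrac{N}{N-1}$ is saturated only by the $\vec y$-on-a-tree configuration. The two delicate points I expect to be the main obstacles are (i) justifying the reduction to all-$y$ imposters on $P$ — an $x$-imposter inside $P$ only increases the gain of the tested deviation, so the all-$y$ case is the \emph{weakest} stability requirement and still fails, which is exactly the content of the footnote to \eqref{eq:CY}; and (ii) applying the three connectivity lower bounds $|\mcal E|\ge N-1$, $e_{\text{in}}\ge p-1$, $e_{\text{bd}}\ge1$ in the right places, since it is precisely the minimally connected, boundary-free case ($\vec y$ on a tree) that makes the bound tight.
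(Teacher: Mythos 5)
Your proof is correct and follows essentially the same route as the paper's: both extract from \eqref{eq:CY}, evaluated at the all-$x$ deviation, the inequality $|\mcal{P}_y| + |e(\mcal{P}_y,\mcal{P}_y)| \geq (1+\alpha)|e(\mcal{P}_y,\mcal{N})|$, combine it with the connectivity bounds $|e(\mcal{P}_y,\mcal{P}_y)| \geq |\mcal{P}_y|-1$, $|e(\mcal{P}_y,\mcal{P}_y^c)| \geq 1$, $|\mcal{E}| \geq N-1$, and obtain tightness from a tree with all-$y$ imposters (your star $K_{1,N-1}$ is exactly the paper's worst case). If anything, your explicit split between $P=\mcal{N}$ and $P\subsetneq\mcal{N}$ is more careful than the paper's own argument, which invokes an outgoing link ($e_{\text{bd}}\geq 1$) without separating the $P=\mcal{N}$ case, where that bound is unavailable and the crude estimate $1+\alpha>2$ alone does not yield the contradiction --- one needs the sharper threshold $\frac{N}{N-1}$ on a tree, precisely as you do.
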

\begin{proof}
	$(\Rightarrow)$ Let $\alpha > \frac{N}{N-1}$. Suppose there is a pair $(S,G)$ with $\JB(\alpha,G,S) < 1$. Then there must exist a $y$-partition $\mcal{P}_y \subset \mcal{N}$. From \eqref{eq:CY},
	\begin{equation}
		|\mcal{P}_y| + |e(\mcal{P}_y,\mcal{P}_y)| \geq (1+\alpha)|e(\mcal{P}_y,\mcal{N})| > 2|e(\mcal{P}_y,\mcal{N})|.
	\end{equation}
	Since $G$ is connected,  $|e(\mcal{P}_y,\mcal{P}_y)| \geq |\mcal{P}_y| - 1$ and there is at least one outgoing link from $\mcal{P}_y$, i.e. $ |e(\mcal{P}_y,\mcal{P}_y^c)| \geq 1$. Consequently,  $|e(\mcal{P}_y,\mcal{N})|  \geq |\mcal{P}_y|$, from which we obtain
	\begin{equation}
		|e(\mcal{P}_y,\mcal{N})| + |e(\mcal{P}_y,\mcal{P}_y)| > 2|e(\mcal{P}_y,\mcal{N})|.
	\end{equation}
	which is impossible.
	
	\noindent $(\Leftarrow)$ Assume $\min_{G\in\mcal{G}_N} \min_{S \in \mcal{T}(G)} \JB (\alpha,S;G) = 1$. Then no $y$-partition can exist for any graph. In particular, \eqref{eq:CY} is violated for $\mcal{P}_y = \mcal{N}$.
	\begin{equation}
		N + |\mcal{E}| < (1+\alpha)|\mcal{E}| \Rightarrow \alpha > \frac{N}{|\mcal{E}|}.
	\end{equation}
	Since $|\mcal{E}| \geq N-1$, we obtain $\alpha > \frac{N}{N-1}$.
\end{proof}
We also deduce the following minimal efficiencies for any graph when $1 \leq \alpha \leq \frac{N}{N-1}$.
\begin{lemma}\label{alpha_one}
	Suppose $N \geq 3$. Then $\alpha \in [1,\frac{N}{N-1}]$ if and only if
	\begin{equation}
		\min_{G\in\mcal{G}_N} \min_{S \in \mcal{T}(G)} J_{\text{\emph{b}}}(\alpha,S;G) = \frac{1}{1+\alpha_{\text{sys}}}.
	\end{equation}
\end{lemma}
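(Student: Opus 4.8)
The plan is to prove the claimed value of the minimum by pinning it from both sides with two separate arguments whose ranges of validity intersect exactly in $[1,\frac{N}{N-1}]$: a lower bound $\JB(\alpha,S;G)\geq \frac{1}{1+\alphasys}$ that holds for \emph{every} pair $(S,G)$ as soon as $\alpha\geq 1$, and a matching construction achieving $\frac{1}{1+\alphasys}$ that works as soon as $\alpha\leq\frac{N}{N-1}$. The forward direction is then immediate, and the converse follows by ruling out $\alpha>\frac{N}{N-1}$ via Lemma~\ref{threshold} and $\alpha<1$ via an explicit low-efficiency example.

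The heart of the argument is the structural claim that for $\alpha\geq 1$ no welfare-minimizing SSS $\hat a$ can contain a miscoordinating edge, so that $\hat a\in\{\vec x,\vec y\}$ and hence $\JB\in\{1,\tfrac{1}{1+\alphasys}\}$. To see this, suppose $\hat a$ possessed a \emph{proper} $y$-partition $\mcal{P}_y$, i.e.\ one with at least one outgoing edge. Since $\mcal{P}_y$ is a maximal connected $y$-component, all of its external neighbors play $x$, so evaluating the stability condition \eqref{eq:CY} at the all-$x$ deviation $a_{\mcal{P}_y}=\vec x$ turns every incident edge into an $xx$ edge and annihilates the $y$-imposter bonus, yielding $|\mcal{P}_y|+|e(\mcal{P}_y,\mcal{P}_y)|\geq (1+\alpha)\,|e(\mcal{P}_y,\mcal{N})|$. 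Writing $n_y=|\mcal{P}_y|$, decomposing $|e(\mcal{P}_y,\mcal{N})|=|e(\mcal{P}_y,\mcal{P}_y)|+|e(\mcal{P}_y,\mcal{P}_y^c)|$, and inserting the connectivity bound $|e(\mcal{P}_y,\mcal{P}_y)|\geq n_y-1$ together with $|e(\mcal{P}_y,\mcal{P}_y^c)|\geq 1$, this rearranges to $n_y(1-\alpha)\geq 1$, which is impossible for $\alpha\geq 1$. Thus every SSS is all-$x$ or all-$y$, proving $\JB\geq\frac{1}{1+\alphasys}$ for all $(S,G)$.

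For the matching construction I take $G$ to be any spanning tree on $N$ nodes (for instance a star) and place a $y$-imposter on every node. Using the potential \eqref{eq:potential2}, flipping any connected set $T$ of nodes from $\vec y$ to $x$ changes the potential by $\alpha\,|e(T,T)|-|e(T,T^c)|-|T|$; on a tree this quantity is maximized at $T=\mcal{N}$, where it equals $\alpha(N-1)-N$ and is nonpositive exactly when $\alpha\leq\frac{N}{N-1}$. Hence $\vec y$ is the (welfare-minimizing) SSS, giving $\JB=\frac{1}{1+\alphasys}$, so the minimum equals $\frac{1}{1+\alphasys}$ throughout $[1,\frac{N}{N-1}]$. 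For the converse, $\alpha>\frac{N}{N-1}$ forces the minimum to be $1$ by Lemma~\ref{threshold} (and $1\neq\frac{1}{1+\alphasys}$ since $\alphasys>0$), while for $\alpha<1$ the star target set of Fig.~\ref{fig:RAstar}---center and all but one leaf carrying $y$-imposters, the last leaf an $x$-imposter---admits a mixed SSS whose single miscoordinating link drives the efficiency strictly below $\frac{1}{1+\alphasys}$; in either case the minimum is not $\frac{1}{1+\alphasys}$.

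I expect the delicate step to be the lower bound, specifically extracting the clean inequality $n_y(1-\alpha)\geq 1$ from \eqref{eq:CY}: one must confirm the all-$x$ deviation is admissible inside the maximum and account for the imposter terms correctly, invoking the footnote reduction so that $\mcal{P}_y$ may be taken to carry only $y$-imposters. A secondary subtlety is ensuring that $\vec y$ is a genuine global maximizer of the potential rather than merely a Nash configuration, which is precisely why the construction step checks all connected flip sets $T$ instead of single-node deviations.
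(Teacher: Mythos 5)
Your forward direction is sound and is essentially the paper's argument: the inequality $n_y(1-\alpha)\geq 1$ extracted from \eqref{eq:CY} is exactly how the paper (reusing Lemma~\ref{threshold}) excludes proper $y$-partitions when $\alpha\geq 1$, and the all-$y$-imposter tree is the paper's achievability construction. The converse is where your attempt breaks.

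The concrete gap is the $\alpha<1$ case. Your proposed witness --- the $N$-node star with $y$-imposters on the center and $N-2$ leaves and an $x$-imposter on the remaining leaf --- does not admit the claimed mixed SSS for all $\alpha<1$. Computing the potential \eqref{eq:potential2}: the profile in which only the $x$-imposter leaf plays $x$ has potential $(N-2)+(N-1)+(1+\alpha)=2(N-1)+\alpha$, while $\vec{x}$ has potential $(1+\alpha)(N-1)+(1+\alpha)=(1+\alpha)N$. The mixed profile is a potential maximizer, hence stochastically stable, only when $2(N-1)+\alpha\geq(1+\alpha)N$, i.e. $\alpha\leq\frac{N-2}{N-1}<1$. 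For $\alpha\in\left(\frac{N-2}{N-1},\,1\right)$ the unique SSS on your witness is $\vec{x}$, with efficiency $1$, so it certifies nothing.

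Moreover, no repaired witness exists, because the ``only if'' direction is false for fixed $N$. Your own lower-bound chain used connectivity, $o_y\geq 1$, and $n_y\leq N-1$, and only invoked $\alpha\geq 1$ at the last step; rerun it and it rules out every proper $y$-partition on every $N$-node graph as soon as $(N-1)(1-\alpha)<1$, i.e. whenever $\alpha>\frac{N-2}{N-1}$. Combined with your tree construction (all-$y$ stabilizable iff $\alpha\leq\frac{N}{N-1}$), the minimum in the lemma equals $\frac{1}{1+\alphasys}$ exactly on $\left(\frac{N-2}{N-1},\,\frac{N}{N-1}\right]$, which strictly contains $[1,\frac{N}{N-1}]$. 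The paper's own proof of $(\Leftarrow)$ has the same soft spot: it derives $\alpha>\frac{N-2}{N-1}$ from the violated stability condition and then asserts ``consequently $\alpha\geq 1$,'' a non-sequitur. The damage is contained because Theorem~\ref{thm_RA_WC} only needs the forward implication (together with minimization over all $N\geq 3$), which both you and the paper establish correctly; but the converse you set out to prove cannot be proved as stated, and your attempt fails precisely at that point.
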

\begin{proof}
	The ($\Rightarrow$) direction follows the same argument as Lemma \ref{threshold}. 
	
	($\Leftarrow$) The assumption implies the only $y$-partition that is stabilizable is $\mcal{N}$. Then for any $\mcal{P}_y \subset \mcal{N}$, \eqref{eq:CY} is violated, i.e. 
	\begin{equation}
		|\mcal{P}_y| + |e(\mcal{P}_y,\mcal{P}_y)| < (1+\alpha)|e(\mcal{P}_y,\mcal{N})|.
	\end{equation}
	Since $G$ is connected and there is at least one outgoing edge from $\mcal{P}_y$, we obtain
	 \begin{equation}
		\frac{2|\mcal{P}_y| - 1}{|\mcal{E}|} < 1 + \alpha 
	\end{equation}
	The above holds for any graph $G = (\mcal{N},\mcal{E})$ and subset of nodes $\mcal{P}_y \subset \mcal{N}$. From the facts that $|\mcal{P}_y| \leq N-1$ and $|\mcal{E}| \geq N-1$, we have $\alpha > \frac{N-2}{N-1}$ for any $N \geq 3$. Consequently, $\alpha \geq 1$ and Lemma \ref{threshold} establishes that $\alpha \leq \frac{N}{N-1}$.
\end{proof}

%\begin{lemma}\label{alpha_one}
%	Suppose $\alpha \geq 1$. Then 
%	\begin{equation}
%		\min_{N\geq 3} \min_{G\in\mcal{G}_N} \min_{S \in \mcal{T}(G)} J_A(\alpha,G,S) \in \left\{ \frac{1}{1+\alpha_{\text{sys}}},1 \right\}.
%	\end{equation}
%\end{lemma}
%\begin{proof}
%	We prove the statement by showing no stable $\mcal{P}_y \subset \mcal{N}$ can exist for any $(G,S)$, $G\in\mcal{G}_N$. Indeed, suppose there is such a partition. Selecting $a_{\mcal{P}_y}=\vec{x}_{\mcal{P}_y}$ in \eqref{eq:CY} and using $\alpha \geq 1$,
%	\begin{equation}
%		|\mcal{P}_y| + |e(\mcal{P}_y,\mcal{P}_y)| \geq  2|e(\mcal{P}_y,\mcal{N})|
%	\end{equation}
%	is satisfied. However, $|e(\mcal{P}_y,\mcal{N})| \geq |\mcal{P}_y|$ because $|e(\mcal{P}_y,\mcal{P}_y)| \geq |\mcal{P}_y|-1$ (at least minimally connected) and $|e(\mcal{P}_y,\mcal{P}_y^c)| \geq 1$ (at least one outgoing link). This gives 
%	\begin{equation}
%		|e(\mcal{P}_y,\mcal{N})| + |e(\mcal{P}_y,\mcal{P}_y)| \geq  2 |e(\mcal{P}_y,\mcal{N})|,
%	\end{equation}
%	which is impossible. Therefore, it is necessary that a stable $y$-partition satisfies $\mcal{P}_y = \mcal{N}$, which gives efficiency $\frac{1}{1+\alpha_{\text{sys}}}$. If it is not stable, $\hat{a}=\vec{x}$ gives efficiency 1.
%\end{proof}

The class of star graphs is central to the worst-case analysis in the interval $0< \alpha < 1$.
\begin{definition}
	Let $\mbb{S}_N$ be the set of all $(S,G)$ where $G$ is the star graph with $N$ nodes, $S_y$ contains the center node, and $S_x = \mcal{N} \backslash S_y$.
\end{definition}
An immediate consequence of this definition is the leaf nodes satisfy \eqref{eq:CX}.  The efficiency is then proportional to the fraction of leaf nodes that are stable to $y$, if any. Furthermore, the stability condition \eqref{eq:CY} of $\mcal{P}_y = S_y$ for members of $\mbb{S}_N$ simplifies to
\begin{equation}\label{eq:star_CY}
	2|e(\mcal{P}_y,\mcal{P}_y)|+1 \geq (1+\alpha)(N-1).
\end{equation}
In other words, stability of the target set $S_y$ as a $y$-partition hinges on \eqref{eq:CY} being satisfied for the selection $a_{\mcal{P}_y} =  \vec{x}$.  The following result reduces the search space for efficiency minimizers to $\mbb{S}_N$ when $\alpha < 1$.
\begin{lemma}\label{star_reduction}
	Suppose $0< \alpha < 1$ and $n\geq 3$. Consider any $(S,G)$ with $G \in \mcal{G}_N$, $S\in\mcal{T}(G)$. Then there is a $(S',G') \in \mbb{S}_{N'}$ such that $J_{\text{\emph{b}}}(\alpha,S';G') \leq J_{\text{\emph{b}}}(\alpha,S;G)$  for some $N' \geq N$.
\end{lemma}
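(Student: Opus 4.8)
The plan is to work from the potential-function characterization \eqref{eq:potential2}. Let $\hat a$ be the welfare-minimizing SSS of $(S,G)$, inducing the $y$-partitions $\{\mcal{P}_y^k\}$ and $x$-partitions $\{\mcal{P}_x^j\}$, and write its efficiency in the partition form \eqref{eq:JI_partitions}. The goal is to exhibit a star $(S',G')\in\mbb{S}_{N'}$ whose welfare-minimizing SSS has efficiency at most $\JB(\alpha,S;G)$, for some $N'\ge N$.

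The core device is the mediant inequality \eqref{eq:mediant}. First I would rewrite the denominator of \eqref{eq:JI_partitions} as $(1+\alphasys)\big(\sum_k |e(\mcal{P}_y^k,\mcal{N})| + \sum_j |e(\mcal{P}_x^j,\mcal{P}_x^j)|\big)$, so that numerator and denominator split into one contribution per partition. Each $x$-partition then contributes the pair $\big((1+\alphasys)|e(\mcal{P}_x^j,\mcal{P}_x^j)|,(1+\alphasys)|e(\mcal{P}_x^j,\mcal{P}_x^j)|\big)$, i.e. a ratio equal to $1$, while each $y$-partition contributes $\rho_k := \frac{|e(\mcal{P}_y^k,\mcal{P}_y^k)|}{(1+\alphasys)|e(\mcal{P}_y^k,\mcal{N})|}\le \frac{1}{1+\alphasys}<1$. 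Reading \eqref{eq:JI_partitions} as a mediant sum of these terms yields $\JB(\alpha,S;G)\ge \min_k \rho_k$, so it suffices to build a star whose efficiency is at most the ratio $\rho_{k^\star}$ of the worst $y$-partition $\mcal{P}^\star := \mcal{P}_y^{k^\star}$.

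Next I would turn $\mcal{P}^\star$ into a star. Writing $p=|e(\mcal{P}^\star,\mcal{P}^\star)|$ and $q=|e(\mcal{P}^\star,\mcal{N})|$, so $\rho_{k^\star}=\frac{p}{(1+\alphasys)q}$, evaluating the stability condition \eqref{eq:CY} at the all-$x$ deviation $a_{\mcal{P}^\star}=\vec x$ gives $|\mcal{P}^\star|+p\ge (1+\alpha)q$, while connectivity of $\mcal{P}^\star$ gives $p\ge |\mcal{P}^\star|-1$; together these yield $2p+1\ge(1+\alpha)q$. This is exactly the star-stability inequality \eqref{eq:star_CY} with $N'-1=q$ and a $y$-partition carrying $p$ internal edges. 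I would therefore take $G'$ to be the star on $q+1$ nodes whose center and $p$ leaves carry $y$-imposters and whose remaining $q-p$ leaves carry $x$-imposters. Then $(S',G')\in\mbb{S}_{q+1}$, its center-$y$ profile satisfies \eqref{eq:star_CY} and is thus an SSS, and its welfare-minimizing efficiency is at most $\frac{p}{(1+\alphasys)q}=\rho_{k^\star}\le \JB(\alpha,S;G)$.

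The hard part is the size requirement $N'\ge N$, which the bare extraction above need not meet: the worst partition may be a small gadget embedded in a large, mostly $x$-coordinated graph, so that $q+1<N$ even though $\JB(\alpha,S;G)$ is held large by the ratio-$1$ $x$-cluster terms. I would resolve this by using precisely the slack $\JB(\alpha,S;G)-\rho_{k^\star}\ge 0$ that the surrounding structure creates. The cleaner route is to first transform $G$, on the \emph{same} node set, into \emph{reduced form} — a single $y$-partition together with only pendant degree-one $x$-leaves — by dissolving every $x$-cluster into pendant $x$-leaves (removing a ratio-$1$ term lowers a sub-unit mediant) and merging the $y$-partitions, each move being efficiency-non-increasing; for a reduced-form graph with a single $y$-partition of $n_y$ nodes and $p\ge n_y-1$ internal edges plus $m$ pendant $x$-leaves (so $N=n_y+m$), unfolding the internal edges into $p$ pendant $y$-leaves gives a star with $N'=(p+1)+m\ge n_y+m=N$ and identical efficiency. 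An alternative is to pad the extracted star directly with extra $y$- and $x$-leaves in a proportion that keeps the efficiency at most $\JB$ and preserves \eqref{eq:star_CY}, which is possible once the slack $2p+1-(1+\alpha)q$ is large enough. Verifying that these operations simultaneously control efficiency, maintain stability (so the center-$y$ profile remains the governing SSS), and reach $N'\ge N$ is the delicate step; the structural fact that makes the reduced-form route go through is that any stable singleton $x$-partition must have degree at most $\lfloor 1+\alpha\rfloor = 1$ when $\alpha<1$, forcing all non-clustered $x$-nodes to be pendant leaves and hence the $y$-structure to collapse into a single partition once the clusters are removed.
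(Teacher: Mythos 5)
Your core argument is correct, and it takes a genuinely different route from the paper's. The paper is constructive throughout: it first re-casts every $x$- and $y$-partition of $(S,G)$ as a star subgraph, wired together (with extra isolated $x$-imposter nodes absorbing excess mis-coordinating edges) so that the number and type of edges --- hence the efficiency --- are exactly preserved and the node count can only grow; only then does it lower efficiency by redirecting the leaves of each $x$-star to the $y$-center and swapping their imposters to $y$, checking \eqref{eq:star_CY} for the enlarged $y$-partition; when there are several $y$-partitions it does this componentwise and invokes the mediant inequality only at the very end. You invert the order: mediant inequality first, reducing everything to the worst $y$-partition $\mcal{P}^\star$, and then a direct extraction, reading the star's stability condition \eqref{eq:star_CY} off the inequality $2p+1\ge(1+\alpha)q$ that \eqref{eq:CY} (at the all-$x$ deviation) and connectedness force on $\mcal{P}^\star$. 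This is leaner --- it skips the edge-preserving recasting and the $x$-to-$y$ conversion entirely --- and both proofs ultimately rest on the same two facts (that stability inequality and \eqref{eq:mediant}), just deployed in opposite order.

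The trouble is exactly where you say it is: the clause $N'\ge N$. Neither of your repairs can be made to work, because that clause is false in general. Take $\alpha=\tfrac12$, $\alphasys=\tfrac14$, and let $G$ be two copies of the $3$-node star of $\mbb{S}_3$ (center, one $y$-leaf, one $x$-leaf) joined by a single edge between the two $x$-leaves, so $N=6$. Maximizing the potential \eqref{eq:potential2} shows the welfare-minimizing SSS retains both two-node $y$-partitions and the active $x$-$x$ link, so
\begin{equation*}
\JB\bigl(\tfrac12,S;G\bigr)=\frac{2+(1+\alphasys)}{5(1+\alphasys)}=\frac{13}{25},
\end{equation*}
whereas by Lemma \ref{Jstar_n} every member of $\mbb{S}_{N'}$ with $N'\ge 6$ has efficiency at least $\min_{N'\ge 6}J_{N'}^*(\tfrac12)=J_{7}^*(\tfrac12)=\frac{8}{15}>\frac{13}{25}$. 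So no star on $N'\ge 6$ nodes can match this $6$-node graph. This also pinpoints why your reduced-form route breaks: after dissolving the $x$-cluster, merging the two $y$-partitions gives (as a star) $p=3$, $q=5$, and $2p+1=7<\tfrac32\cdot 5$, so the merged $y$-set is not stable and the SSS collapses to $\vec{x}$; the padding route fails for the same arithmetic reason. You are in good company: the paper's own proof secures $N'\ge N$ only in the single-$y$-partition case, and in the multi-partition case it silently returns one component of its decomposition, which can be smaller than $N$ (on the example above it returns a $3$-node star). None of this harms Theorem \ref{thm_RA_WC}, which only needs \emph{some} star of size at least $3$ with no larger efficiency --- precisely what your mediant-plus-extraction argument establishes, modulo the trivial check that a stable $\mcal{P}^\star$ forces $q\ge 2$ and hence $N'=q+1\ge 3$. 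The right fix is to drop or weaken the clause $N'\ge N$, not to try to prove it.
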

The idea of the proof is to construct a member of $\mbb{S}_{N'}$ by re-casting the $y$ and $x$-partitions of $(S,G)$ as star subgraphs while preserving the same number and type of edges, thus preserving efficiency. Further efficiency reduction can be achieved by converting excess $x$ links into $y$ links in this star configuration. We provide the proof detailing the constructive procedure in the Appendix. We now characterize the minimal efficiency for the star graph of size $N$, $J_N^*(\alpha) := \min_{(G,S)\in \mbb{S}_N} \JB(\alpha,S;G)$ for $\alpha < 1$.
\begin{lemma}\label{Jstar_n}
    Suppose $\alpha < 1$ and fix $N\geq 3$. Then 
	\begin{equation}\label{eq:star_efficiency}
		 J_N^*(\alpha) = \frac{1}{(1+\alpha_{\text{sys}})(N-1)}\left\lceil \frac{(1+\alpha)(N-1) - 1}{2} \right\rceil .
	\end{equation}
\end{lemma}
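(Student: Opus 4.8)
The plan is to parametrize the members of $\mbb{S}_N$ by a single integer and reduce the minimization to a one-dimensional integer optimization. First I would note that any $(S,G)\in\mbb{S}_N$ is completely determined by $m := |S_y|-1$, the number of leaves assigned a $y$-imposter (the center always lies in $S_y$). Since every leaf is a pendant vertex whose only neighbor is the center, a one-line best-response computation shows each $x$-target leaf always plays $x$ (this is the consequence \eqref{eq:CX} already recorded), while each $y$-target leaf \emph{matches} the center: it plays $y$ when the center plays $y$ and $x$ when the center plays $x$ (using $1+\alpha>1$). Consequently the only two pure Nash profiles — equivalently the only two candidate potential maximizers — are $a^x$ (center and all $y$-leaves play $x$) and $a^y$ (center and all $y$-leaves play $y$, $x$-leaves play $x$). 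This isolates the center's action as the single binary degree of freedom and rules out all intermediate profiles.

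Next I would invoke the stability condition already simplified for $\mbb{S}_N$. With $\mcal{P}_y=S_y$ one has $|e(\mcal{P}_y,\mcal{P}_y)|=m$ and $|e(\mcal{P}_y,\mcal{N})|=N-1$, so \eqref{eq:star_CY} reads $2m+1\geq(1+\alpha)(N-1)$. A short evaluation of the potential \eqref{eq:potential2} at $a^x$ and $a^y$ shows this is exactly the inequality $\phi(a^y)\geq\phi(a^x)$, i.e. precisely the regime in which the center playing $y$ is stochastically stable. When it holds, formula \eqref{eq:JI_partitions} gives the efficiency of the welfare-minimizing SSS $a^y$ as $\JB=\frac{m}{(1+\alphasys)(N-1)}$, because the single $y$-partition contributes $m$ internal edges while all $x$-partitions are singletons; when it fails, $a^x$ is the unique maximizer and the efficiency is $1$. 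I would also remark that the tie $\phi(a^y)=\phi(a^x)$ resolves to $a^y$ since it carries strictly smaller welfare and $J_N^*$ selects the welfare-minimizing SSS, so $y$ is played under the non-strict threshold.

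Finally I would minimize over $m\in\{0,1,\ldots,N-1\}$. The efficiency equals $1$ for every $m$ below the stability threshold and then equals the strictly increasing quantity $\frac{m}{(1+\alphasys)(N-1)}$ once $2m+1\geq(1+\alpha)(N-1)$, so the minimum is attained at the smallest admissible $m$, namely $m_{\min}=\left\lceil\frac{(1+\alpha)(N-1)-1}{2}\right\rceil$, which yields \eqref{eq:star_efficiency}. To close I would verify feasibility: since $\alpha<1$ and $N\geq3$ we have $2<(1+\alpha)(N-1)<2(N-1)$, forcing $1\leq m_{\min}\leq N-1$; in particular the degenerate single-node $y$-partition ($m=0$) is never stabilizable, so the worst-case star efficiency is bounded away from $0$. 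The main obstacle is the first step — rigorously certifying that no intermediate profile can be a potential maximizer and correctly matching the $\phi(a^y)$ versus $\phi(a^x)$ comparison to \eqref{eq:star_CY} — after which the integrality argument producing the ceiling is routine.
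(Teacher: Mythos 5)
Your proof is correct and takes essentially the same route as the paper: both reduce the problem to minimizing the number of $y$-imposter leaves subject to the stability condition \eqref{eq:star_CY} together with the size bound, and identify the minimizer as the smallest integer solution, which gives the ceiling in \eqref{eq:star_efficiency}. The additional work you do---certifying that the only candidate potential maximizers are the two profiles $a^x$ and $a^y$, and verifying that $\phi(a^y)\geq\phi(a^x)$ is exactly \eqref{eq:star_CY}---simply fills in details the paper disposes of in the remarks surrounding the definition of $\mbb{S}_N$.
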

\begin{proof}
	The goal is to find the smallest $y$-partition of the $n$ star that is still stabilizable under a gain $\alpha$. This is written
	\begin{equation}
		\begin{aligned}
			J^*_N(\alpha) = &\min_{N_y} \frac{1}{1+\alpha_{\text{sys}}}\frac{N_y}{N-1} \\
			\text{s.t.} \ &\begin{cases} 
						N_y \leq N-1 \ &\text{(size of $y$-partition)} \\
						2 N_y + 1 \geq (1+\alpha)(N-1) \ &\text{(stability)}
					\end{cases}
		\end{aligned}
	\end{equation}
	 The smallest integer $N_y$ that satisfies the constraints is $\left\lceil \frac{(1+\alpha)(N-1) - 1}{2} \right\rceil$ for $\alpha \in (0,1)$. 
%where the first constraint places the restriction on the number of leaf nodes, and the second constraint is required to satisfy \eqref{eq:CY}.

\end{proof}
%We are now ready to prove Theorem \ref{thm_RA_WC}.
\begin{proof}[Proof of Theorem \ref{thm_RA_WC}]
	For $\alpha < 1$, by Lemma \ref{star_reduction}, the worst-case efficiency is
	\begin{equation}
		\min_{N\geq 3} \min_{(G,S) \in \mbb{S}_N} \JB(\alpha,S;G) = \min_{N\geq 3} J_N^*(\alpha).
	\end{equation}
	Using the formula of Lemma \ref{Jstar_n}, we obtain the first entry in \eqref{eq:RA_WC}.  Lemma \ref{alpha_one} asserts the  minimal  efficiency is $\frac{1}{1+\alpha_{\text{sys}}}$ for $\alpha \in [1,\frac{3}{2}]$ because the upper bound $\frac{N}{N-1}$ is maximized at $N=3$ (for $N\geq 3$).  This gives the second entry in \eqref{eq:RA_WC}.  Lastly, Lemma \ref{threshold} asserts the minimal efficiency is 1 for $\alpha > \frac{3}{2}$.
\end{proof}

\subsection{Characterization of $R_{\text{\emph{f}}}^*$: worst-case {focused} risk}
Our approach for the proof of Theorem  \ref{thm_gammaA}  differs from that of $\RBstar$. Instead of reducing the search of worst-case graphs, we simply provide an upper bound on $\RFstar(\alpha,F;G)$ for any $G$ and fixed set $F\in\mcal{F}(G)$, and show one can construct a graph with fixed nodes that achieves it. .

We observe $\frac{1}{2}W^{\alpha}(a) : \mcal{A}(F) \rightarrow \mbb{R}$ serves as a potential function (recall \eqref{eq:perceived_welfare}) for the game with restricted action set $\mcal{A}(F)$ and  utilities $\{U_i^\alpha\}_{i\in\mcal{N}}$.  Hence, the stochastically stable states $\text{LLL}(\mcal{A}(F),\alpha;G)$ are maximizers of $\frac{1}{2}W^{\alpha}(a)$. Suppose $\hat{a} = \argmin{a \in\text{LLL}(\mcal{A}(F),\alpha;G)} W(a)$
%\begin{equation}
%	\hat{a} \in \argmin{a\in\text{LLL}(G,\alpha,\mcal{A})} W(a) ,
%\end{equation}
decomposes the graph into the $x$ and $y$-partitions $\{\mcal{P}_z^k\}_{k=1}^{k_z}$, $z=x,y$. We express its efficiency  \eqref{eq:JE} as
\begin{equation}\label{eq:JC_partitions}
	\frac{\sum_{k=1}^{k_y} |e(\mcal{P}_y^k,\mcal{P}_y^k)| + (1+\alpha_{\text{sys}}) \sum_{k=1}^{k_x}|e(\mcal{P}_x^k,\mcal{P}_x^k)|}{\sum_{k=1}^{k_y}W^{\alpha_{\text{sys}}}(e(\mcal{P}_y^k,\mcal{N}),a^*) + \sum_{k=1}^{k_x}W^{\alpha_{\text{sys}}}(e(\mcal{P}_x^k,\mcal{P}_x^k),a^*)}
\end{equation}
where $a^* = \argmax{a \in \mcal{A}(F)} W(a)$ is the welfare-maximizing action profile. Similar to \eqref{eq:CY}, each $y$-partition $\mcal{P}_y^k$ formed from $\hat{a}$ satisfies the stability condition
\begin{equation}\label{eq:CY2}
	|e(\mcal{P}_y^k,\mcal{P}_y^k)| \geq \max_{a_{\mcal{P}_y^k}\neq \vec{y}} W^{\alpha}(e(\mcal{P}_y^k,\mcal{N}), (a_{\mcal{P}_y^k },\hat{a}_{-\mcal{P}_y^k })). \tag{CYE}
\end{equation}
To reduce cumbersome notation, it is understood the max is taken over actions of unfixed nodes, $a_{\mcal{P}_y^k \backslash F}$. Likewise, each $x$-partition $\mcal{P}_x^k$ satisfies
\begin{equation}\label{eq:CX2}
	(1+\alpha)|e(\mcal{P}_x^k,\mcal{P}_x^k)| \geq \max_{a_{\mcal{P}_x^k }\neq \vec{x}} W^{\alpha}(e(\mcal{P}_x^k,\mcal{N}), (a_{\mcal{P}_x^k },\hat{a}_{-\mcal{P}_x^k })) \tag{CXE}.
\end{equation}
%The following two lemmas state that $x$ ($y$) partitions remain stable when $\alpha$ is highered (lowered) to $\alpha$, leading to the welfare-maximizing state $a^*$.
%\begin{lemma}\label{Px_stable}
%	For any $x$-partition $\mcal{P}_x$ induced from $\alpha < \alpha_{\text{sys}}$, $a_i^* = x$  for all $ i\in \mcal{P}_x$.
%\end{lemma}
The following lemma asserts that agents playing $y$ in the SSS under the gain $\alpha$ remain playing $y$ under a lower gain $\alpha' < \alpha$. The result is crucial for establishing a lower bound on efficiency for any graph $G$ with arbitrary fixed set $F \in \mcal{F}(G)$.

\begin{lemma}\label{Py_stable}
	Suppose $\alpha' < \alpha$.  Denote  $\hat{a}' = \argmin{a \in \text{LLL}(\mcal{A}(F),\alpha';G)} W(a)$ as the welfare-minimizing SSS under $\alpha'$. Then for any $y$-partition $\mcal{P}_y$ induced from $\alpha$, $\hat{a}_i' = y$  for all $ i\in \mcal{P}_y \backslash F$.
\end{lemma}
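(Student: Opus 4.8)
The plan is to exploit the lattice structure of the stable states of this potential game and invoke monotone comparative statics. Recall from the setup that the stochastically stable states under a gain $\gamma$ are exactly the maximizers over $\mcal{A}(F)$ of the potential $\phi_\gamma(a) := \frac{1}{2}W^\gamma(a)$, and that this potential has the edge-wise form $\phi_\gamma(a) = (1+\gamma)\,n_x(a) + n_y(a)$, where $n_x(a)$ (resp.\ $n_y(a)$) counts the $x$--$x$ (resp.\ $y$--$y$) edges of $a$. Two features of this expression drive everything: it is affine in the gain, $\phi_\gamma = \phi_{\gamma'} + (\gamma-\gamma')\,n_x$, and the true welfare is the potential evaluated at the system gain, $W(a) = 2\phi_{\alphasys}(a)$.

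First I would order $\mcal{A}(F)$ coordinate-wise with $x \succ y$ and verify that $\phi_\gamma$ is supermodular. This is the standard fact that coordination games have supermodular potentials; it is checked one edge at a time, the only case needing attention being two crossing profiles, where the join contributes an $x$--$x$ edge and the meet a $y$--$y$ edge, so the supermodularity inequality holds (strictly). Since $n_x$ is both monotone nondecreasing and supermodular, $\phi_\gamma$ moreover has increasing differences in $(a,\gamma)$: for $a\succeq b$, the gap $\phi_\gamma(a)-\phi_\gamma(b)$ picks up the term $(\gamma-\gamma')[n_x(a)-n_x(b)]\geq 0$. By Topkis' theorem the maximizer set $M_\gamma = \operatorname{argmax}_{\mcal{A}(F)}\phi_\gamma$ is a sublattice, and its greatest element $\bar a_\gamma$ and least element $\underline a_\gamma$ are nondecreasing in $\gamma$.

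Next I would identify the welfare-minimizing stable state with an extremal element of $M_\gamma$. Every $a\in M_\gamma$ shares the common value $\phi_\gamma(a)=c_\gamma$, so $W(a) = 2c_\gamma + 2(\alphasys-\gamma)\,n_x(a)$ for all $a\in M_\gamma$ by the affine identity. Hence minimizing welfare over $M_\gamma$ amounts to extremizing $n_x$: when $\gamma>\alphasys$ the minimizer maximizes $n_x$ and is the greatest element $\bar a_\gamma$ (which maximizes $n_x$ by monotonicity), while for $\gamma<\alphasys$ it is the least element $\underline a_\gamma$. In the regime relevant to the paper, $\alpha>\alpha'>\alphasys$, this gives $\hat a = \bar a_\alpha$ and $\hat a' = \bar a_{\alpha'}$, whence monotonicity of the greatest maximizer yields $\hat a' = \bar a_{\alpha'} \preceq \bar a_{\alpha} = \hat a$. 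When the two gains straddle $\alphasys$ one simply chains the inequalities through $\underline a_{\alphasys}\preceq \bar a_{\alphasys}$, so $\hat a' \preceq \hat a$ holds regardless.

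Finally, $\hat a' \preceq \hat a$ says that wherever $\hat a$ plays $y$, so does $\hat a'$; in particular every node of the $y$-partition $\mcal{P}_y$ of $\hat a$ plays $y$ in $\hat a'$, which contains the assertion for the unfixed nodes $\mcal{P}_y \backslash F$. I expect the main obstacle to be the clean identification of the welfare-minimizing state with the extremal lattice element when several profiles in $M_\gamma$ achieve the extremal value of $n_x$: one must argue that $\bar a_\gamma$ is itself a welfare-minimizer (it is, by monotonicity of $n_x$) and that the argmin may be taken to be it, so that no welfare-minimizing stable state can leave an $x$ on $\mcal{P}_y$. An equivalent, more hands-on route that avoids naming Topkis is to construct the meet $\hat a \wedge \hat a'$ (or the join, according to the regime) and verify directly from the supermodularity inequality, the maximality of $\hat a$ and $\hat a'$, and the monotonicity of $n_x$ that it again maximizes the appropriate potential, then compare welfares using the identity $W(a)=2c_\gamma+2(\alphasys-\gamma)n_x(a)$.
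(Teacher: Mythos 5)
Your proposal is correct, but it proves the lemma by a genuinely different route than the paper. The paper's proof is local and elementary: it starts from the stability condition \eqref{eq:CY2} that $\mcal{P}_y$ satisfies under $\alpha$, extends it from the particular outside configuration $\hat{a}_{-\mcal{P}_y}$ to an arbitrary one (switching outsiders to $y$ adds $y$-links that benefit the all-$y$ configuration at least as much as any deviation, since the boundary of $\mcal{P}_y$ plays $x$ in $\hat{a}$), and then notes that lowering the gain to $\alpha'$ only devalues $x$--$x$ links, which appear exclusively on the deviation side; hence $\vec{y}_{\mcal{P}_y}$ remains a collectively optimal reply against $\hat{a}'_{-\mcal{P}_y}$ at gain $\alpha'$. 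Your argument is global: supermodularity of the potential, increasing differences in $(a,\gamma)$, Topkis' theorem, and the affine identity $W(a) = 2c_\gamma + 2(\alphasys-\gamma)\,n_x(a)$ on the maximizer set. What your route buys is the stronger conclusion $\hat{a}' \preceq \hat{a}$ (monotonicity of the entire profile in the gain), Lemma \ref{Px_stable} for free by lattice duality, and a transparent reason why the welfare-minimizing SSS is an extremal maximizer; what the paper's route buys is self-containedness, with no comparative-statics machinery. One caveat applies to both proofs, and it is exactly the tie issue you flag: your identification $\hat{a} = \bar{a}_\alpha$ requires the welfare minimizer at $\alpha$ to be unique (otherwise the chosen $\hat{a}$ could have a strictly larger $y$-region than $\bar{a}_\alpha$), while the paper's weak inequalities only show that switching $\mcal{P}_y$ to all-$y$ in $\hat{a}'$ produces \emph{another} potential maximizer, not that $\hat{a}'$ itself plays $y$ there. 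Note that no such issue arises on the $\alpha'$ side of your argument, since every element of the maximizer set at $\alpha'$ lies below its greatest element. Both residual gaps close with a single observation: in any potential maximizer on a connected graph, every unfixed $x$-playing node must be incident to an $x$--$x$ edge (else switching it to $y$ strictly raises the potential); this yields uniqueness of the extremal-$n_x$ maximizer for your argument, and on the paper's side it forces the gain decrease to break any tie strictly. So your plan is sound, and the ``hands-on'' meet/join route you sketch at the end is essentially what is needed to make either proof fully airtight.
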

\begin{proof}
	Condition \eqref{eq:CY2} asserts for all $a_{\mcal{P}_y} \neq \vec{y}$ that
	\begin{equation}
		W^{\alpha}(e(\mcal{P}_y,\mcal{N}),(\vec{y}_{\mcal{P}_y}, \hat{a}_{-\mcal{P}_y})) \geq W^{\alpha} (e(\mcal{P}_y,\mcal{N}),(a_{\mcal{P}_y},\hat{a}_{\mcal{P}_y} )).
	\end{equation}
	It also holds for all $a_{\mcal{P}_y} \neq \vec{y}$ and for any $a_{-\mcal{P}_y} \neq \hat{a}_{-\mcal{P}_y}$ that
	\begin{equation}
		W^{\alpha}(e(\mcal{P}_y,\mcal{N}),(\vec{y}_{\mcal{P}_y}, a_{-\mcal{P}_y} )) \geq W^{\alpha} (e(\mcal{P}_y,\mcal{N}),(a_{\mcal{P}_y},a_{-\mcal{P}_y} ))
	\end{equation}
	because any $y$-links garnered in the RHS above by changing $\hat{a}_{-\mcal{P}_y}$ to $a_{-\mcal{P}_y}$ also contribute to the LHS. In particular, the above holds for $a_{-\mcal{P}_y} = \hat{a}_{-\mcal{P}_y}'$. Lowering the gain to $\alpha'$ preserves the above inequality as well, as it de-values $x$-links garnered on the RHS.
%	\begin{equation}
%		W^{\alpha'}(e(\mcal{P}_y,\mcal{N}),(\vec{y}_{\mcal{P}_y}, \hat{a}_{-\mcal{P}_y}' )) > W^{\alpha'}(e(\mcal{P}_y,\mcal{N}),(a_{\mcal{P}_y},\hat{a}_{-\mcal{P}_y}' ))
%	\end{equation}
\end{proof}

%\begin{lemma}\label{Py_stable}
%	For any $y$-partition $\mcal{P}_y$ induced from $\alpha > \alpha_{\text{sys}}$, $a_i^* = y$  for all $ i\in \mcal{P}_y\backslash F$.
%\end{lemma}
A dual statement holds - agents playing $x$ in the SSS under $\alpha$ remain so under a higher gain $\alpha' > \alpha$.
\begin{lemma}\label{Px_stable}
	Suppose $\alpha' > \alpha$.  Then for any $x$-partition $\mcal{P}_x$ induced from $\alpha$, $\hat{a}_i' = x$  for all $ i\in \mcal{P}_x\backslash F$.
\end{lemma}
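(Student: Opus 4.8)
The plan is to prove Lemma \ref{Px_stable} as the exact mirror image of Lemma \ref{Py_stable}, interchanging the roles of $x$ and $y$ and reversing the gain comparison. The starting point is the $x$-stability condition \eqref{eq:CX2}. Because $\mcal{P}_x$ is a maximal connected $x$-component under $\hat{a}$, every boundary edge of $\mcal{P}_x$ leads to a $y$-node and hence contributes nothing when $\mcal{P}_x$ plays all-$x$; consequently the left-hand side $(1+\alpha)|e(\mcal{P}_x,\mcal{P}_x)|$ is precisely the local welfare $W^{\alpha}(e(\mcal{P}_x,\mcal{N}),(\vec{x}_{\mcal{P}_x},\hat{a}_{-\mcal{P}_x}))$ generated by the all-$x$ configuration. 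Thus \eqref{eq:CX2} says that, holding the exterior fixed at $\hat{a}_{-\mcal{P}_x}$, the profile $\vec{x}_{\mcal{P}_x}$ maximizes the local welfare under gain $\alpha$ (the maximum being taken over the unfixed nodes of $\mcal{P}_x$).

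First I would argue that this best-response inequality persists when the exterior is replaced by $\hat{a}'_{-\mcal{P}_x}$. Writing $D^{\gamma}(a_{-\mcal{P}_x})$ for the gap between the all-$x$ local welfare and that of a deviation $a_{\mcal{P}_x}\neq\vec{x}$ under gain $\gamma$, only the boundary edges whose interior endpoint deviates to $y$ carry the dependence on the exterior, and for each such edge the all-$x$ side collects an $x$-coordination link whenever the exterior neighbor plays $x$. This is the dual of the ``$y$-links also contribute to the LHS'' observation used in Lemma \ref{Py_stable}, and it yields $D^{\alpha}(\hat{a}'_{-\mcal{P}_x})\geq 0$. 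Next I would raise the gain from $\alpha$ to $\alpha'>\alpha$: raising the gain inflates exactly the $x$-coordination links that the all-$x$ configuration garners (both the internal links of $\mcal{P}_x$, now worth $1+\alpha'$ each, and any boundary $x$-links), while the mixed deviation accrues strictly fewer such links, so $D^{\alpha'}(\hat{a}'_{-\mcal{P}_x})\geq D^{\alpha}(\hat{a}'_{-\mcal{P}_x})\geq 0$.

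Since the potential $\tfrac{1}{2}W^{\alpha'}$ differs between $(\vec{x}_{\mcal{P}_x},\hat{a}'_{-\mcal{P}_x})$ and $\hat{a}'=(\hat{a}'_{\mcal{P}_x},\hat{a}'_{-\mcal{P}_x})$ only through edges incident to $\mcal{P}_x$, the inequality $D^{\alpha'}(\hat{a}'_{-\mcal{P}_x})\geq 0$ shows $\vec{x}_{\mcal{P}_x}$ maximizes the conditional potential given the exterior $\hat{a}'_{-\mcal{P}_x}$. As $\hat{a}'$ is itself a potential maximizer, its restriction to $\mcal{P}_x$ must also maximize this conditional potential, and by the same selection reasoning that closes Lemma \ref{Py_stable} this forces $\hat{a}'_i=x$ for every $i\in\mcal{P}_x\backslash F$.

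I expect the main obstacle to be the middle step: rigorously justifying that the all-$x$ best-response property survives the swap of the exterior from $\hat{a}_{-\mcal{P}_x}$ to $\hat{a}'_{-\mcal{P}_x}$, together with the final selection step, since the edge-counting bookkeeping is only heuristic when ties in the conditional potential occur and the welfare-minimizing tie-break is involved. The clean backbone for both is a monotone-comparative-statics (Topkis-type) fact for this supermodular coordination game: the welfare-minimizing stochastically stable state is the coordinate-wise extremal potential maximizer that moves monotonically toward $x$ as the gain increases. Establishing this monotonicity carefully, rather than relying on the boundary-link heuristic, is the delicate part; once it is in hand, the $x$-partition conclusion is the immediate dual of Lemma \ref{Py_stable}.
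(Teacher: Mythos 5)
Your proposal is correct and is essentially the paper's own proof: the paper omits the argument for Lemma \ref{Px_stable} precisely because it is the dual of Lemma \ref{Py_stable}, and your three steps (local optimality of $\vec{x}_{\mcal{P}_x}$ from \eqref{eq:CX2}, persistence under swapping the exterior to $\hat{a}'_{-\mcal{P}_x}$, and persistence under raising the gain since the all-$x$ profile's $x$-links form a superset of any deviation's) mirror the paper's argument with the roles of $x$ and $y$ and the direction of the gain change interchanged. The tie-breaking subtlety you flag at the end is real but is equally present (and equally unaddressed) in the paper's proof of Lemma \ref{Py_stable}, so it does not distinguish your route from theirs.
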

We omit the proof for brevity, as it is analogous to the proof of Lemma \ref{Py_stable}. We are now ready to prove Theorem \ref{thm_gammaA}.

\begin{proof}[Proof of Theorem \ref{thm_gammaA}]
	Consider any graph $G\in\mcal{G}_N$ with fixed set $F$. Recall that efficiency is one for $\alpha = \alphasys$. Thus, we first consider $\alpha < \alpha_{\text{sys}}$. Observe that
	\begin{equation}
		\begin{aligned}
		|e(\mcal{P}_y^k,\mcal{P}_y^k)| &\geq W^{\alpha}(e(\mcal{P}_y,\mcal{N}),(a_{\mcal{P}_y}^*,\hat{a}_{-\mcal{P}_y})) \\
		&=  W^{\alpha}(e(\mcal{P}_y,\mcal{N}),(a_{\mcal{P}_y}^*,a_{-\mcal{P}_y}^*))
		\end{aligned}
	\end{equation} 
	where the inequality is due to \eqref{eq:CY2}. The equality results from Lemma \ref{Px_stable} - the agents ($\notin \mcal{P}_y$) that neighbor any member of $\mcal{P}_y$ remain playing $x$ in $a^*$. We then obtain
\begin{equation}
	\frac{|e(\mcal{P}_y^k,\mcal{P}_y^k)|}{W(e(\mcal{P}_y^k,\mcal{N}),a^*)} \geq \frac{1+\alpha}{1+\alpha_{\text{sys}}}.
\end{equation}
The  inequality results since the   expressions of the numerator and denominator garner the same edges for welfare. It occurs with equality if and only if $a_i^*=x$ $\forall i \in \mcal{P}_y^k\backslash F$. Applying the mediant inequality \eqref{eq:mediant} to  \eqref{eq:JC_partitions}, $\JF(\alpha,F;G) \geq  \frac{1+\alpha}{1+\alpha_{\text{sys}}}$. The case when $\alpha > \alpha_{\text{sys}}$ follows analogous arguments. From Lemma \ref{Py_stable}, $|e(\mcal{P}_y^k,\mcal{P}_y^k)|=W^{\alpha}(e(\mcal{P}_y^k,\mcal{P}_y^k),a^*)$. For $x$-partitions,
\begin{equation}
	\begin{aligned}
		 \frac{(1\!+\! \alpha_{\text{sys}})|e(\mcal{P}_x^k,\mcal{P}_x^k)|}{W(e(\mcal{P}_x^k,\mcal{N}),a^*)} &\geq \frac{1\!+\!\alpha_{\text{sys}}}{1\!+\!\alpha}\frac{W^{\alpha}(e(\mcal{P}_x^k,\mcal{N}),a^*)}{W(e(\mcal{P}_x^k,\mcal{N}),a^*) } \\ 
		 &\geq \frac{1\!+\!\alpha_{\text{sys}}}{1\!+\!\alpha}
	\end{aligned}
\end{equation}
where the first inequality is from \eqref{eq:CX2} and the second occurs with equality  if $a_i^*=y$ $\forall i \in \mcal{P}_x^k\backslash F$. From \eqref{eq:mediant} and \eqref{eq:JC_partitions}, $\JF(\alpha,F;G) \geq  \frac{1+\alpha_{\text{sys}}}{1+\alpha}$.
\end{proof}
We have just shown fundamental lower bounds on efficiency for any graph with fixed agents. The bounds are tight as they can be achieved for any gain $\alpha$ by arranging $N_x$ fixed $x$ and $N_y$ fixed $y$ leaf nodes that influence a single unfixed agent in the center of a star graph. If $\alpha < \alpha_{\text{sys}}$, choosing $\frac{N_x}{N_y}= \frac{1}{1+\alpha}$ gives the minimal  efficiency $\frac{1+\alpha}{1+\alpha_{\text{sys}}}$. If $\alpha > \alpha_{\text{sys}}$, choosing $\frac{N_x}{N_y}= \frac{1}{1+\alpha}$ gives the minimal  efficiency $\frac{1+\alpha_{\text{sys}}}{1+\alpha}$.  Note that if $\alpha$ is rational, one could choose finite integers $N_y,N_x$ that achieve such ratios. Recall Figure \ref{fig:REstar} for illustrative examples. However if it is irrational, they must be taken arbitrarily large to better approximate the ratio.

\section{Proof of Theorem \ref{mixing_bound}: worst-case risks under randomized operator designs}\label{sec:analysis2}
% Note that \eqref{eq:WC-EA} and \eqref{eq:WC-EE} are maximums of linear functions of $\bm{p}$. This allows us to numerically compute points on the $\bm{\alpha}$-frontier using a linear program. To see this, suppose a point $(\mbb{E}^*_{\bm{\alpha},\bm{p}}[R_E] ,\mbb{E}^*_{\bm{\alpha},\bm{p}}[R_A] )$ on this frontier has $\mbb{E}^*_{\bm{\alpha},\bm{p}}[R_E] = \gamma_E \in [0,1]$.  Then the number $\mbb{E}^*_{\bm{\alpha},\bm{p}}[R_A] $ must be the solution to the following LP.
%\begin{equation}
%	\begin{aligned}
%		\min_{v,\bm{p}} v& \\
%		&\text{subject to}\\ 
%		&\quad\mbb{E}^*_{\bm{\alpha},\bm{p}}[R_E] = \gamma_E \\
%		&\quad \mbb{E}^*_{\bm{\alpha},\bm{p}}[R_A] \leq v \\
%		&\quad \mathds{1}^\top_m\bm{p} = 1  \\
%		&\quad0\leq v \leq 1 \\  
%		&\quad\bm{0}_n \preceq \bm{p} \preceq \mathds{1}_m
%	\end{aligned}
%\end{equation}

Recall a randomized strategy consists of gains $\bm{\alpha} = \{\alpha_i\}_{i=1}^M$ with distribution $\bm{p} \in \Delta_M$. The gains are ordered $\alphasys \leq \alpha_1 < \cdots < \alpha_M$. To prove  Theorem \ref{mixing_bound}, we outline a few technical Lemmas.  The key insight is the expected efficiency of any graph $G$ can be expressed in the form $\sum_{i=1}^M p_i s_i$, where the coefficient $s_i$ is a mediant sum over local efficiencies of partitions in $G$ when gain $\alpha_i$ is used. The following two mathematical facts are the basis of this insight. 
\begin{fact}\label{planes_lemma}
	Let $\nu_i < \frac{n_i}{d_i} \leq 1$ with $r_i \geq 0$ and $n_i,d_i > 0$ for all $i = 1,\ldots,M$ . Then for all $\bm{p}\in\Delta_M$,
	\begin{equation}\label{eq:p_mediant1}
		\sum_{i=1}^M p_i s_i \geq 1 + \min_{i=1,\ldots,M} \left\{  \left( \sum_{j=1}^i p_j \right) (\nu_i -1) \right\}
	\end{equation}
	where $s_i := \frac{\sum_{j=1}^{i-1} d_j + \sum_{j=i}^M n_j}{\sum_{j=1}^M d_j}, \quad i=1,\ldots,M$.
\end{fact}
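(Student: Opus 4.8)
The plan is to put the weighted sum $\sum_{i=1}^M p_i s_i$ into closed form in terms of the cumulative masses $P_j := \sum_{i=1}^j p_i$, and then to read off the bound by recognizing a convex combination that is dominated by its largest term.

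First I would introduce the total $D := \sum_{j=1}^M d_j > 0$ and telescope each $s_i$. Writing $\sum_{j=1}^{i-1} d_j = D - \sum_{j=i}^M d_j$ in the numerator gives
\[
	s_i = 1 - \frac{1}{D}\sum_{j=i}^M (d_j - n_j).
\]
Since $n_j/d_j \le 1$ and $d_j>0$ force $d_j - n_j \ge 0$, every summand here is nonnegative. Next I would compute the average, using $\sum_i p_i = 1$ and swapping the order of summation over the triangular index set $\{(i,j): i\le j\}$:
\[
	\sum_{i=1}^M p_i s_i = 1 - \frac{1}{D}\sum_{i=1}^M p_i \sum_{j=i}^M (d_j - n_j) = 1 - \frac{1}{D}\sum_{j=1}^M (d_j - n_j)\,P_j .
\]

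Now I would invoke the hypothesis $\nu_j < n_j/d_j$, which rearranges to $d_j - n_j < d_j(1-\nu_j)$; together with $P_j \ge 0$ (and $1-\nu_j>0$, since $\nu_j<n_j/d_j\le 1$) this upper-bounds each term of the sum, hence
\[
	\sum_{i=1}^M p_i s_i \geq 1 - \frac{1}{D}\sum_{j=1}^M d_j(1-\nu_j)\,P_j .
\]
The final step is the key observation: the coefficients $d_j/D$ are nonnegative and sum to one, so $\frac{1}{D}\sum_j d_j\,[(1-\nu_j)P_j]$ is a convex combination of the nonnegative quantities $(1-\nu_j)P_j$ and is therefore at most their maximum. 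Replacing the sum by $\max_j (1-\nu_j)P_j$ and using $-\max_j (1-\nu_j)P_j = \min_j (\nu_j - 1)P_j$ yields exactly the claimed inequality \eqref{eq:p_mediant1}.

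The argument is mechanical once the closed form is in hand; the only genuine ideas are the reindexing that turns the nested sums into a single sum weighted by the cumulative masses $P_j$, and the recognition that $\{d_j/D\}_{j}$ is itself a probability vector, which licenses the average-$\le$-max bound. I do not anticipate a serious obstacle, and the single point demanding care is the sign bookkeeping: the estimate $d_j - n_j < d_j(1-\nu_j)$ enters multiplied by the nonnegative $P_j$ and carries a leading minus sign, so it must be applied in the direction that produces a \emph{lower} bound on $\sum_i p_i s_i$.
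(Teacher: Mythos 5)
Your proof is correct, and it takes a genuinely different route from the paper's. The paper argues by partitioning the simplex: it defines the candidate bounds $f_k(\bm{p}) := 1 + \bigl(\sum_{j=1}^{k} p_j\bigr)(\nu_k-1)$, carves $\Delta_M$ into regions $V_k$ on which $f_k$ is the smallest of these, rewrites each $s_i$ through the ratios $\lambda_k = \frac{\sum_{j=1}^{k} d_j}{\sum_{j=1}^{k+1} d_j}$ using telescoping product identities (with the algebra omitted), and then verifies $\sum_i p_i s_i \ge f_k(\bm{p})$ region by region. You bypass the case analysis entirely: the closed form $\sum_{i=1}^M p_i s_i = 1 - \tfrac{1}{D}\sum_{j=1}^M (d_j-n_j)P_j$ obtained by swapping the order of summation (in your notation $P_j=\sum_{i\le j}p_i$ and $D=\sum_j d_j$), the termwise estimate $(d_j-n_j)P_j \le d_j(1-\nu_j)P_j$, and the remark that $\{d_j/D\}_j$ is itself a probability vector --- so the weighted average is at most the maximum --- finish the job in three displays. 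Your sign bookkeeping is right at the one delicate spot, and the argument needs nothing beyond $P_j\ge 0$, $d_j>0$, and $\nu_j \le n_j/d_j \le 1$; in particular the statement's stray hypothesis ``$r_i\ge 0$'' (evidently a typo for $\nu_i\ge 0$) is never used, and non-strictness of $\nu_j < n_j/d_j$ would suffice. What the paper's heavier machinery buys is structural information: the sets $V_k$ identify, for each $\bm{p}$, which index attains the minimum on the right-hand side, i.e., where each linear bound is the binding one --- the same tightness bookkeeping that mirrors the worst-case constructions of Section \ref{sec:analysis2}. Your proof certifies the inequality without locating that binding index, which is all the fact asks for, and it is the cleaner and more transparent of the two.
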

We provide a proof in the Appendix. The following dual result follows directly.
\begin{fact}\label{planes_reversed}
	For all $\bm{p}\in\Delta_M$,
	\begin{equation}
		\sum_{i=1}^M p_i s_i' \geq 1 + \min_{i=1,\ldots,M} \left\{ \left( \sum_{j=i}^M p_j \right) (\nu_i -1) \right\}
	\end{equation}
	where $s_i' := \frac{\sum_{j=1}^{i} n_j + \sum_{j=i}^M d_j}{\sum_{j=1}^M d_j}, \quad i=1,\ldots,M$.
\end{fact}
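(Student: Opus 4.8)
The plan is to derive Fact~\ref{planes_reversed} from Fact~\ref{planes_lemma} by reversing the index set, exploiting the prefix/suffix symmetry between the two statements rather than redoing any analysis. Concretely, I would introduce the reversed data $\tilde{p}_i := p_{M+1-i}$, $\tilde{n}_i := n_{M+1-i}$, $\tilde{d}_i := d_{M+1-i}$, and $\tilde{\nu}_i := \nu_{M+1-i}$ for $i=1,\ldots,M$. Since this is merely a permutation of indices, every hypothesis of Fact~\ref{planes_lemma} transfers verbatim: $\tilde{\nu}_i < \tilde{n}_i/\tilde{d}_i \leq 1$, $\tilde{n}_i,\tilde{d}_i>0$, $\tilde{\bm{p}}\in\Delta_M$, and crucially $\sum_{j=1}^M \tilde{d}_j = \sum_{j=1}^M d_j =: D$ is unchanged. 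Thus Fact~\ref{planes_lemma} applies directly to the reversed data.

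The key step is to relate the quantity $\tilde{s}_i := \tfrac{\sum_{j=1}^{i-1}\tilde{d}_j + \sum_{j=i}^M \tilde{n}_j}{D}$ produced by Fact~\ref{planes_lemma} back to the target $s_\ell'$. Substituting $k=M+1-j$ in each sum gives $\sum_{j=1}^{i-1}\tilde{d}_j = \sum_{k=M-i+2}^M d_k$ and $\sum_{j=i}^M \tilde{n}_j = \sum_{k=1}^{M-i+1} n_k$, so writing $\ell = M+1-i$ yields $\tilde{s}_{M+1-\ell} = \tfrac{\sum_{k=1}^{\ell} n_k + \sum_{k=\ell+1}^M d_k}{D}$. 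Comparing with $s_\ell' = \tfrac{\sum_{k=1}^{\ell} n_k + \sum_{k=\ell}^M d_k}{D}$, the two differ only by the single extra numerator term $d_\ell/D$; that is, $s_\ell' = \tilde{s}_{M+1-\ell} + d_\ell/D \geq \tilde{s}_{M+1-\ell}$. This inequality is the only place the structural mismatch between the two facts enters, and it points in the favorable direction precisely because $d_\ell > 0$, so the reversal does not reproduce $s_\ell'$ exactly but a slightly smaller quantity, which is harmless since we only need a lower bound.

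Finally I would apply Fact~\ref{planes_lemma} to the reversed data and translate everything back. On the left, $\sum_{i=1}^M \tilde{p}_i\tilde{s}_i = \sum_{\ell=1}^M p_\ell \tilde{s}_{M+1-\ell} \leq \sum_{\ell=1}^M p_\ell s_\ell'$, so the lower bound from Fact~\ref{planes_lemma} is also a lower bound for $\sum_\ell p_\ell s_\ell'$. On the right, the prefix weights become suffix weights: $\sum_{j=1}^{i}\tilde{p}_j = \sum_{k=M+1-i}^M p_k$, which under $\ell=M+1-i$ equals $\sum_{k=\ell}^M p_k$, while $\tilde{\nu}_i - 1 = \nu_\ell - 1$, so the minimum reindexes to $\min_{\ell}\{(\sum_{k=\ell}^M p_k)(\nu_\ell-1)\}$. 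Chaining these gives exactly the claimed inequality. I expect no real obstacle here: the only genuine content is the observation that $s_\ell'$ exceeds the reversed $\tilde{s}$ by the nonnegative amount $d_\ell/D$, and the main care will be in not mismatching summation limits during the substitution $k=M+1-j$.
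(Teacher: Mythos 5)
Your proof is correct and takes essentially the same approach as the paper, whose entire proof of this fact is the one-line remark that it "follows similarly to Fact~\ref{planes_lemma}, where the indices of the $s_i$ coefficients are reversed." Your reduction is in fact more careful than the paper's: you make explicit that reversing Fact~\ref{planes_lemma} produces $\tilde{s}_{M+1-\ell} = s'_\ell - d_\ell/D$ rather than $s'_\ell$ itself (because of the overlapping $d_\ell$ term in the definition of $s'_\ell$), and you correctly note that this discrepancy is harmless since $d_\ell > 0$ and only a lower bound is needed.
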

\begin{proof}
	The proof follows similarly to Fact \ref{planes_lemma}, where the indices of the $s_i$ coefficients are reversed.
\end{proof}

We will show for any $(S,G)$ that $\mbb{E}_{\bm{\alpha},\bm{p}}[\JB | S,G] = 1 - \mbb{E}_{\bm{\alpha},\bm{p}}[\RB | S,G]$ can be expressed in the form $\sum_{i=1}^M p_i s_i$ from the LHS of \eqref{eq:p_mediant1}. The lower bounds establish worst-case expected efficiencies - and hence risks. The $\nu_i$  correspond to the worst-case deterministic efficiencies $\JBstar(\alpha_i) = 1 - \RBstar(\alpha_i)$ of the $M$ gains and $\frac{n_i}{d_i}$ to local efficiencies of selected partitions in the graph. Fact \ref{planes_lemma} will be used to establish \eqref{eq:WC-EA}, and Fact \ref{planes_reversed} for \eqref{eq:WC-EE} (Theorem \ref{mixing_bound}).  We now identify a structural property required of worst-case graphs.
\begin{lemma}
	A worst-case graph, i.e. a member of $\argmin{G\in\mcal{G}_N, S \in \mcal{T}(G)} \mbb{E}_{\bm{\alpha},\bm{p}}[J_{\text{\emph{b}}} | S,G]$, has no active $x$-links in $\alpha_1$.
\end{lemma}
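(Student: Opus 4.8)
The plan is to argue by contradiction. Suppose a minimizer $(S,G)$ of $\mbb{E}_{\bm{\alpha},\bm{p}}[\JB \mid S,G]$ possesses an active $x$-link at the lowest gain $\alpha_1$; that is, the welfare-minimizing SSS $\hat{a}$ under $\alpha_1$ contains an $x$-partition $\mcal{P}_x$ with $|e(\mcal{P}_x,\mcal{P}_x)| \geq 1$. I would then produce a modification yielding a strictly smaller expected efficiency, contradicting worst-case optimality.

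First I would establish the monotonicity underpinning the argument: by the same stability reasoning as Lemma \ref{Px_stable}, but applied to the broad-attack stability condition \eqref{eq:CX} in place of \eqref{eq:CX2}, every node playing $x$ in the welfare-minimizing SSS under $\alpha_1$ continues to play $x$ in the welfare-minimizing SSS under every higher gain $\alpha_k$, $k=2,\ldots,M$. Consequently the internal edges of $\mcal{P}_x$ are $x$-links at all gains, so by the efficiency expression \eqref{eq:JI_partitions} each such edge contributes $(1+\alphasys)$ to both the numerator and the denominator at every gain, i.e. a local efficiency of exactly $1$.

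Next I would write $\mbb{E}_{\bm{\alpha},\bm{p}}[\JB \mid S,G] = \sum_{k} p_k \JB(\alpha_k,S;G)$ and treat each $\JB(\alpha_k,S;G)$ as the mediant sum \eqref{eq:JI_partitions} over the partitions present at $\alpha_k$. Deleting the always-$x$ internal edges of $\mcal{P}_x$ removes a ratio-$1$ term from each of these mediants. In any regime where the worst-case expected risk is positive, the graph must carry $y$-links, so each mediant is strictly below $1$; the mediant inequality \eqref{eq:mediant} then shows that excising a ratio-$1$ term strictly lowers a sub-unit mediant. Summing against $\bm{p}$ gives a strictly smaller expected efficiency, the desired contradiction.

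The main obstacle is making this deletion a legitimate element of the feasible class $\{(S',G'): G'\in\mcal{G}_{N'},\, S'\in\mcal{T}(G')\}$: removing the internal edges of $\mcal{P}_x$ may disconnect $G$ (an interior node with no boundary edge is orphaned), and one must verify that the welfare-minimizing SSS of the modified graph equals the restriction of $\hat{a}$ at \emph{every} gain, so that each mediant changes only by the removal of its ratio-$1$ term while the boundary mis-coordinating edges and the stability conditions \eqref{eq:CY},\eqref{eq:CX} of the neighboring partitions are preserved. I would handle this by first reducing, in the spirit of Lemma \ref{star_reduction}, to a canonical worst-case graph in which each always-$x$ partition is realized as a pendant star component whose internal $x$-links can be excised, or re-attached as $y$-influenced leaves, without severing connectivity or perturbing the stability of the remaining partitions; the freedom to take $N'\geq N$ accommodates the re-wiring. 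Confirming that this canonicalization does not itself raise efficiency is the delicate bookkeeping step.
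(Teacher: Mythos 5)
Your core argument --- contradiction, the monotonicity that $x$-players at $\alpha_1$ remain $x$-players at every higher gain, and the observation that excising ratio-$1$ terms strictly lowers a sub-unit mediant --- is exactly the skeleton of the paper's proof. But the step you yourself flag as ``the delicate bookkeeping step'' is not bookkeeping; it is the entire content of the lemma, and you leave it unresolved. Worse, the fallback you propose (canonicalizing ``in the spirit of Lemma \ref{star_reduction}'') would not go through as stated: Lemma \ref{star_reduction} is a single-gain result for $\alpha<1$ whose surgery converts $x$-imposters into $y$-imposters and re-wires edges, operations that change which nodes are stable at the \emph{other} gains $\alpha_2,\ldots,\alpha_M$. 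In the randomized setting the whole point is that one graph must realize the claimed SSS structure at all $M$ gains simultaneously, so you cannot invoke a per-gain reduction wholesale; verifying that such a canonicalization does not raise the expected efficiency is precisely the unproved claim you need.

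The missing idea, which the paper supplies, is a much more local surgery: delete the active $x$-links \emph{together with the agents in those $x$-partitions}, and for each former mis-coordinating boundary edge attach, to its $y$-endpoint, a newly created pendant agent carrying an $x$-imposter. From the viewpoint of every surviving agent, its neighborhood is unchanged --- a neighbor committed to $x$ has simply been swapped for a pendant node that is stable at $x$ for every gain $\alpha_i>0$ --- so the stability conditions \eqref{eq:CY}, \eqref{eq:CX} of all remaining partitions hold verbatim at every gain, and the welfare-minimizing SSS restricts correctly without any further argument. The count of mis-coordinating edges is preserved while the ratio-$1$ internal $x$-edges disappear from both numerator and denominator of \eqref{eq:JI_partitions}, so efficiency drops (strictly at any gain where it was below one). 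The connectivity worry you raise is then handled componentwise: if the deletion disconnects the graph, the mediant inequality \eqref{eq:mediant} shows some connected component has expected efficiency no larger than the whole, and that component (with its pendant imposters) is a legitimate member of the feasible class. Without this construction, or an equivalent one, your proof does not close.
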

\begin{proof}
	Any active $x$-links in $\alpha_1$ remain so for all $\{\alpha_i\}_{i=2}^M$. The efficiency corresponding to each gain can be reduced in the following manner. Delete all such $x$-links and associated agents.  For each  mis-coordinating link between an $x$ and $y$ agent that existed, replace with a single link to a newly created isolated agent with an $x$-imposter attached. This preserves the stochastically stable states of all other nodes while reducing efficiency in each gain.
\end{proof}
Intuitively, a graph that has coordinating $x$ nodes in each gain $\alpha_1,\ldots,\alpha_M$ can be modified by removing these links, resulting in a lower efficiency. We are now ready to prove \eqref{eq:WC-EA} (Theorem \ref{mixing_bound}).
\begin{proof}[Proof of \eqref{eq:WC-EA} (Theorem \ref{mixing_bound})]
	Consider any graph $G = (\mcal{N},\mcal{E}) \in \mcal{G}_N$ and $S \in \mcal{T}(G)$. Let us denote the $M$ (worst-case) stochastically stable states that correspond to each gain $\alpha_i$ with $\hat{a}^i$. Define for each $k=1,\ldots,M$
	\begin{equation}
		%P^m = \{ i \in \mcal{N} : \hat{a}_i^k = y \ \forall k \leq m, \ \hat{a}_i^k = x \ \forall k > m\}
		P^k = \{ n \in \mcal{N} : \hat{a}_n^i = y \ \forall i \leq k, \ \hat{a}_n^i = x \ \forall i > k\}
	\end{equation}
	as the set of nodes that play $y$ in the SSS in $\alpha_1,\ldots,\alpha_k$ and play $x$ in $\alpha_{k+1},\ldots,\alpha_M$. Note that $P^k$ is possibly composed of multiple $y$-partitions. Also note it is possible that $P^k = \varnothing$ for all $k > \bar{m}$ for some $\bar{m} \in \{2,\ldots,M-1\}$ - that is, $\hat{a}^i = \vec{x}$ for all $i > \bar{m}$. We first consider the case when $P^k \neq \varnothing$ for every $k=1,\ldots,M$. 
	
	Let $Q^k := \{ n \in \mcal{N} : \hat{a}_n^k = y\} = \bigcup_{i=k}^M P^i$. Denote $P^x := \{ n \in\mcal{N} : \hat{a}_n^1 = x\} = (Q^1)^c$ as the set of nodes stable to $x$ for all $\alpha_i$. Consider the gain $\alpha_i$ with $i \leq k$. Then the local efficiency $\frac{W(e(P^k,\mcal{N}),\hat{a}^k) }{W(e(P^k,\mcal{N}),(a^*_{P^k},\hat{a}^k_{-P^k})) }$ of $P^k$ is
	\begin{equation}
		\begin{aligned}
			&\frac{|e(P^k,Q^{k+1})| + |e(P^k,P^k)|}{(1\!+\!\alphasys)(|e(P^k\!,P^k)| \!+\! |e(P^k\!,P^x)| \!+\! |e(P^k\!,(Q^k)^c)|)} \\
			&\quad\quad> J_A^*(\alpha_i).
		\end{aligned}
	\end{equation}
	The inequality is due to Proposition \ref{thm_RA_WC}. For gains $\alpha_i$ with $i>k$, the local efficiency of $P^k$ is
	\begin{equation}
		\frac{(1\!+\!\alphasys)(|e(P^k\!,P^k)| \!+\! |e(P^k\!,P^x)| \!+\! |e(P^k\!,(Q^k)^c)|)}{(1\!+\!\alphasys)(|e(P^k\!,P^k)| \!+\! |e(P^k\!,P^x)| \!+\! |e(P^k\!,(Q^k)^c)|)} =1.
	\end{equation}
	Hence, the overall system efficiency under gain $\alpha_i$ is the mediant sum of the local efficiencies of the $P^k$. An application of Fact \ref{planes_lemma} gives the result. The case when $P^k = \varnothing$ for $k>\bar{m} \in \{2,\ldots,M-1\}$ also follows directly from Fact \ref{planes_lemma}. From the notation of  Fact \ref{planes_lemma}, $\frac{n_k}{d_k} = 1$ for $k>\bar{m}$.
\end{proof}

The details for the proof of \eqref{eq:WC-EA} (Theorem \ref{mixing_bound}) follow analogous arguments pertaining to focused attacks. Recall for a graph $G \in \mcal{G}_n$ and restricted action set $\mcal{A}$, we denote $F = F_x \cup F_y \subset \mcal{N}$ as its set of fixed nodes.  Additionally, we restrict attention to gains $\alpha_i \geq \alphasys$, as these are not strictly dominated in the risk curve.  The following structural property holds in a worst-case graph for focused risk. 
\begin{lemma}\label{WC_y}
	A worst-case graph, i.e., a member of $\argmin{G\in\mcal{G}_N,F\in \mcal{F}(G)} \mbb{E}_{\bm{\alpha},\bm{p}}[J_{\text{\emph{f}}} | F,G]$, has no active $y$-links in $\alpha_M$. Additionally, $a_{F^c}^* = \vec{y}$.
\end{lemma}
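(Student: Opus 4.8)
The plan is to mirror the structure of the preceding lemma for broad attacks (``no active $x$-links in $\alpha_1$''), replacing the stability direction of Lemma \ref{Px_stable} by its dual, Lemma \ref{Py_stable}, and restricting attention to gains $\alpha_i \geq \alphasys$. The engine of both parts is a single \emph{equal-reduction} exchange: since every gain has efficiency at most one, deleting coordination that contributes the \emph{same} amount to the (expected) numerator $\sum_i p_i W(\hat{a}^i)$ and to the gain-independent denominator $D := \max_{a\in\mcal{A}(F)} W(a)$ strictly lowers the ratio whenever the expected risk is positive; this contradicts worst-case optimality. Throughout, let $Q := \{ n \in \mcal{N} : \hat{a}^M_n = y \}$ denote the set of agents playing $y$ in the worst-case SSS at the top gain $\alpha_M$; by Lemma \ref{Py_stable}, every node of $Q$ plays $y$ in $\hat{a}^i$ for all $i=1,\ldots,M$.

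First I would prove $a^*_{F^c} = \vec{y}$. Suppose some unfixed node $v$ satisfies $a^*_v = x$. Then $v$ coordinates on $x$ in the optimal profile, contributing value $1+\alphasys$ on each of its $x$-coordinating edges to $D$. Because the perceived gains satisfy $\alpha_i \geq \alphasys$, one verifies that such a node also coordinates on $x$ in each worst-case SSS $\hat{a}^i$ on those same edges, so these edges contribute $1+\alphasys$ to every $W(\hat{a}^i)$ as well. Deleting this $x$-coordination (and re-attaching the freed mis-coordinating links to fresh fixed $x$ nodes so that the remaining SSS are unperturbed) removes the same quantity from $\sum_i p_i W(\hat{a}^i)$ and from $D$, and hence strictly decreases expected efficiency. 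This contradicts worst-case optimality, so in a worst-case graph the optimal profile plays $y$ on every unfixed node.

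Given $a^*_{F^c} = \vec{y}$, I would then prove the first assertion. The active $y$-links at $\alpha_M$ are precisely the internal edges $e(Q,Q)$, and since $Q$ plays $y$ at every gain, each such edge contributes $1$ to every $W(\hat{a}^i)$, hence $\sum_i p_i \cdot 1 = 1$ to $\sum_i p_i W(\hat{a}^i)$ per edge. By $a^*_{F^c} = \vec{y}$ the optimal also coordinates $Q$ on $y$, so $e(Q,Q)$ contributes $|e(Q,Q)|$ to $D$. Now fix every node of $Q$ to $y$ (adjoining it to $F_y$) and delete the edges $e(Q,Q)$, reconnecting where necessary through welfare-$0$ mis-coordinating links to keep $G$ connected. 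Fixing is inert because $Q$ already plays $y$ in both the optimal and in $\hat{a}^i$ for all $i$; and once $Q$ is fixed to $y$, the deleted edges are $y$-$y$ in every feasible profile, so deleting them subtracts the constant $|e(Q,Q)|$ from the potential $\tfrac12 W^{\alpha}(\cdot)$ and from $W(\cdot)$ uniformly, preserving all maximizers and hence all SSS. The expected efficiency becomes $\bigl(\sum_i p_i W(\hat{a}^i) - |e(Q,Q)|\bigr) / \bigl(D - |e(Q,Q)|\bigr)$, which is strictly smaller than $\sum_i p_i W(\hat{a}^i)/D$ whenever $\sum_i p_i W(\hat{a}^i) < D$ (positive expected risk). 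Worst-case optimality then forces $e(Q,Q) = \varnothing$.

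The main obstacle is the SSS-preservation bookkeeping, which is more delicate than in the broad case: there the replacement gadget was an imposter that does not enter the welfare \eqref{eq:welfare}, whereas here fixed nodes do contribute to $W$, so every reconnection must be routed through mis-coordinating (value-$0$) links and every fixing must be verified inert, so that the only net change in both $\sum_i p_i W(\hat{a}^i)$ and $D$ is the intended equal reduction. Equally, one must confirm the two claims that underlie the exchange across all $M$ gains simultaneously --- that $Q$ stays $y$ at every gain (Lemma \ref{Py_stable}) and that an optimal-$x$ unfixed node remains $x$ in each worst-case SSS for $\alpha_i \geq \alphasys$ --- and must maintain connectivity and the feasibility $F \subsetneq \mcal{N}$ after deletions.
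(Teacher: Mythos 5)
Your proposal is correct and takes essentially the same route as the paper: both arguments excise $y$-coordination that is active at $\alpha_M$ (whose endpoints, by Lemma \ref{Py_stable}, play $y$ at every gain and in $a^*$), convert the affected nodes into fixed $y$ agents, and conclude that removing equal contributions from the SSS welfare and the gain-independent optimal welfare strictly lowers expected efficiency whenever risk is positive, contradicting worst-case optimality. The differences are organizational rather than substantive: you prove $a^*_{F^c}=\vec{y}$ first via the dual stability result (Lemma \ref{Px_stable}) and an analogous excision, where the paper simply cites Lemma \ref{Py_stable}, and you make the equal-reduction principle explicit; the surgery bookkeeping (SSS preservation, connectivity, feasibility of $F$) that you flag but do not fully resolve is left equally informal in the paper's own proof.
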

\begin{proof}
	A graph that has active $y$-links in $\alpha_M$ remain active for all $\alpha_1,\ldots,\alpha_{M-1}$. The efficiency corresponding to each gain can be reduced by removing all such links and keeping the border nodes as fixed $y$ agents. This preserves the stability properties of all other nodes. The claim $a_{F^c}^* = \vec{y}$ follows from Lemma \ref{Py_stable}.
\end{proof}
We are now ready to prove \eqref{eq:WC-EE} in Theorem \ref{mixing_bound}.
\begin{proof}[Proof of \eqref{eq:WC-EE} (Theorem \ref{mixing_bound})]
	Consider any graph $G = (\mcal{N},\mcal{E}) \in \mcal{G}_N$ and fixed nodes $F \in \mcal{F}(G)$. The $M$ stochastically stable states that correspond to each gain $\alpha_i$ are denoted $\hat{a}^i$. Define for each $k=1,\ldots,M$
	\begin{equation}
		%P^m = \{ i \in F^c: \hat{a}_i^k = y \ \forall k < m, \ \hat{a}_i^k = x \ \forall k \geq m\}
		P^k = \{ n \in F^c: \hat{a}_n^i = x \ \forall i \geq k, \ \hat{a}_n^i = y \ \forall i < k  \}
	\end{equation}
	as the set of unfixed nodes that play $x$ in the SSS for $\alpha_{k},\ldots,\alpha_M$ and play $y$  in $\alpha_1,\ldots,\alpha_{k-1}$ . Note that it is possible $P^k = \varnothing$ for all $k < \bar{m}$ for some $\bar{m} \in \{2,\ldots,M-1\}$. That is, $a^k_{F^c} = \vec{y}$ for $k=1,\ldots,\bar{m}-1$. We first consider the case when $P^k \neq \varnothing$ for every $k=1,\ldots,M$. 
	
	Let $Q^k = \{n \in F^c : \hat{a}_n^k = y\} = \bigcup_{i=k}^M P^i$.  Consider the gain $\alpha_i$ with $i\geq k$. Then the local efficiency $\frac{W(e(P^k,\mcal{N}), \hat{a}^k) }{W(e(P^k,\mcal{N}),(a^*_{P^k},\hat{a}^k_{-P^k})) }$ of $P^k$ is %(second claim of Lemma \ref{WC_y})
	\begin{equation}
		\begin{aligned}
			&\frac{(1+\alphasys)(|e(P^k,P^k)| \!+ \! |e(P^k,(Q^{k-1})^c)| \!+\! |e(P^k,F_x)|)}{|e(P^k,P^k)| + |e(P^k,Q^k)| + |e(P^k,F_y)|} \\ 
			&\quad\quad> \JFstar(\alpha_i).
		\end{aligned}
	\end{equation}
	Here, we use the convention $|e(P^1,(Q^0)^c)|= 0$. For gains $\alpha_i$ with $i<k$, the local efficiency of $P^k$ is
	\begin{equation}
		\frac{|e(P^k,P^k)| + |e(P^k,Q^k)| + |e(P^k,F_y)|}{|e(P^k,P^k)| + |e(P^k,Q^k)| + |e(P^k,F_y)|} =1.
	\end{equation}
	Hence the overall system efficiency under $\alpha_i$ is the mediant sum of the local efficiencies of the $P^k$. An application of Fact \ref{planes_reversed} gives the result. The case when $P^k = \varnothing$ for $k>\bar{m} \in \{2,\ldots,M-1\}$ also follows directly from Fact \ref{planes_reversed}. 
\end{proof}

\section{Summary}\label{sec:conclusion}
In this paper, we framed graphical coordination games as a distributed system subject to two types of adversarial influences.  The focus of our study concerned the performance of a class of distributed algorithms against the associated worst-case risks. We identified fundamental tradeoffs between ensuring security against one type of risk and vulnerability to the other, and vice versa. Furthermore, our analysis shows randomized algorithmic designs significantly improves the available tradeoffs.  Our work highlights the design challenges a system operator 
faces in maintaining the efficiency of networked, distributed systems.
%In this paper, we investigated  a system operator's capability of mitigating two types of worst-case adversarial attacks in graphical coordination games. We found that ensuring security against one type of attack leaves the system vulnerable to the other, and vice versa.

%The effect of alternative defense measures, as opposed to the single parameter strategy $\alpha$, is a subject of future study. For instance, how should the system operator add or remove links in the network to improve security? Another interesting avenue to explore is the interplay as a zero-sum game by characterizing optimal mixing strategies for both system operator and adversary.

%%%%%%%%%%%%%%%%%%%%%%%%%%%%%%%%%%%%%%%%%%%%%%%%%%%%%%%%%%%%%%%%%%%%%%%%%%%
%%%%%%%%%%%%%%%%%%%%%%%%%%%%%  A P P E N D I X %%%%%%%%%%%%%%%%%%%%%%%%%%%%%%%%%%%%%%
%%%%%%%%%%%%%%%%%%%%%%%%%%%%%%%%%%%%%%%%%%%%%%%%%%%%%%%%%%%%%%%%%%%%%%%%%%%
\appendix

\begin{figure*}[t]
    \centering
    \includegraphics[scale=1.1]{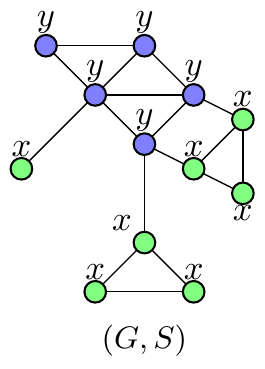}
    \includegraphics[scale=1.1]{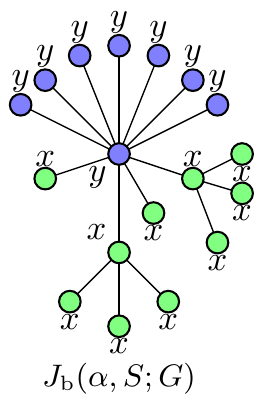}
    \includegraphics[scale=1.1]{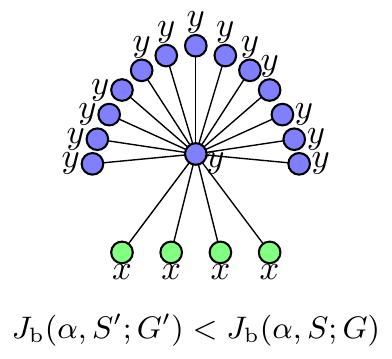}
    \caption{\small An illustration of the constructive process (proof of Lemma \ref{star_reduction}) that generates a member $(S',G')\in\mbb{S}_m$ from any graph $(S,G)$ with one $y$-partition, and $\alpha < 1$. Here, the labels on each node indicate the type of imposter influence. Green (blue) nodes play $x$ ($y$) in the SSS. (Left) Start with an arbitrary graph-adversary pair $(S,G)$. (Center) The partitions of $(S,G)$ are re-cast as star subgraphs with the same number of edges. When there is more than one edge between the $y$ and an $x$-partition, new nodes are created for the excess outgoing edges. This re-casting preserves the original efficiency $\JB(\alpha,S;G)$. (Right) The active $x$-links are converted into $y$-links by redirecting them to the center of the $y$-partition. This results in a graph $(S',G') \in \mbb{S}_m$. }
    \label{fig:proof_diagram}
\end{figure*}

%\subsection{Proof of Lemma \ref{star_reduction}: Constructive process for star networks}\label{sec:constructive}

\begin{proof}[Proof of Lemma \ref{star_reduction}:]
	This proof outlines a  procedure to transform any $(S,G)$ into a star graph with lower efficiency if $\alpha < 1$. We split into  two cases - either $(S,G)$ induces a single $y$-partition or more than one. First, assume $(S,G)$ induces a single $y$-partition $\mcal{P}_y$. An illustration of the constructive process is shown in Figure \ref{fig:proof_diagram}. 

Construct a star subgraph $\Gamma_y$ that has $1+|e(\mcal{P}_y,\mcal{P}_y)|$ nodes, each having a $y$ imposter attached. Call the center node $i_y$. Construct similar star configurations $\Gamma_x^k$ for each $x$-partition $\mcal{P}_x^k$. Call their center nodes $i_x^k$. Connect $\Gamma_y$ to each $\Gamma_x^k$ with a  link between $i_x^k$ and $i_y$. If there are multiple edges between $\mcal{P}_x^k$ and $\mcal{P}_y$ ($|e(\mcal{P}_x^k,(\mcal{P}_x^k)^c)| \geq 2$), create $|e(\mcal{P}_x^k,(\mcal{P}_x^k)^c)|  - 1$ new isolated nodes with a single $x$ imposter attached, and connect each to $i_y$ with a single link. At this point, $\Gamma_y$ and $\Gamma_x^k$ are stable $y$ and $x$-partitions, and  the isolated nodes are stable playing $x$. We have obtained a graph of $N' \geq N$ nodes with identical efficiency to $(S,G)$ since the number and type of edges are preserved. 

We can further reduce efficiency if there are active $x$ links, i.e. if $|e(\Gamma_x^k,\Gamma_x^k)| \geq 1$ for at least one $\Gamma_x^k$. If there are none, then the graph belongs to $\mbb{S}_{N'}$ and we are done. Otherwise for each leaf node $j \in \mcal{P}_x^k$, redirect the edge $(j,i_x^k)$ to $(j,i_y)$, and replace $j$'s $x$ imposter with a $y$ imposter. Call $m_x$ the total number of such converted nodes. The resulting graph-target pair $(S',G')$ belongs to $\mbb{S}_{N'}$. We claim the resulting (larger) $y$-partition $\Gamma_y'$ is stable.  For this claim to hold, \eqref{eq:star_CY} requires that
\begin{equation}
	2|e(\mcal{P}_y,\mcal{P}_y)| + 2m_x + 1 \geq (1+\alpha)(|e(\mcal{P}_y,\mcal{N})|+m_x).
\end{equation}
From the original $\mcal{P}_y$, it holds that
\begin{equation}
	\begin{aligned}
		&|e(\mcal{P}_y,\mcal{P}_y)| + |\mcal{P}_y| \geq (1+\alpha)|e(\mcal{P}_y,\mcal{N})| \\ 
		&\Rightarrow 2|e(\mcal{P}_y,\mcal{P}_y)| + 2m_x + 1 > (1+\alpha)(|e(\mcal{P}_y,\mcal{N})|+m_x)
	\end{aligned}
\end{equation}
due to $|\mcal{P}_y| \leq 1 + |e(\mcal{P}_y,\mcal{P}_y)|$ and $\alpha<1$.   All $x$-partitions  in $(S',G')$ , now just a collection of single nodes connected to $i_y$ with an $x$-imposter, are stable. The efficiency is less than the original because active $x$-links increase efficiency more than active $y$-links do. Hence,  %from \eqref{eq:JI_partitions},
\begin{equation}
		\JB(\alpha,S;G) > \JB(\alpha,S';G').
\end{equation}

Now, we consider the remaining case when $(S,G)$ induces $k_y > 1$ $y$-partitions $\{\mcal{P}_y^k\}_{k=1}^{k_y}$ and $k_x \geq 1$ $x$-partitions $\{\mcal{P}_x^k\}_{k=1}^{k_x}$. Consider $k_y$ such star subgraphs $\{\Gamma_y^k\}_{k=1}^{k_y}$ with center nodes $i_y^k$. Recast the $x$-partitions into similar star subgraphs $\{\Gamma_x^k\}_{k=1}^{k_x}$ with center nodes $i_x^k$. We first connect each $\Gamma_x^k$ to some $\Gamma_y^j$ with a single link $(i_x^k,i_y^j)$ in any manner as long as a link between the original $\mcal{P}_x^k$ and $\mcal{P}_y^j$ exists. For each excess outgoing edge, we create an isolated node with an $x$-imposter attached. Each isolated node is attached to a corresponding $i_y^k$ such that the original number of outgoing edges for each $\mcal{P}_y^k$ is satisfied. At this point, there are $k' \leq k_y$ connected components $G_k$ in the construction, and the efficiency of this construction is identical to the original. Lastly, we apply the efficiency reduction procedure from before for each $G_k$ to obtain  $(S_k',G_k') \in \mbb{S}_{m_k}$. From \eqref{eq:JI_partitions} and \eqref{eq:mediant}, we have
\begin{equation}
	\begin{aligned}
		\JB(\alpha,S;G) &> \frac{\sum_{k=1}^{k'} |e(\Gamma_y^k,\Gamma_y^k)| }{(1+\alpha_{\text{sys}})\sum_{k=1}^{k'} |e(\Gamma_y^k,\mcal{N})|} \\
		&\geq \min_{k=1,\ldots,k'} \JB(\alpha,S_k';G_k').
	\end{aligned}
\end{equation}
\end{proof}

%\subsection{Proof of Lemma \ref{planes_lemma}: Technical result for expected risks}\label{sec:planes}

\begin{proof}[Proof of Lemma \ref{planes_lemma}: Technical result for expected risks]
	Let us define $f(\bm{p}):= \sum_{i=1}^M p_i s_i$ and $f_i(\bm{p}) :=  \left( \sum_{j=1}^k p_j \right) (\nu_k-1) + 1$. The set of probability vectors such that $k = \argmin{i=1,\ldots,M} f_i(\bm{p})$ can be written as the set
	\begin{equation}\label{eq:min_regions}
		\begin{aligned}
			V_k &:= \left\{ \bm{p} \in \Delta_M : f_k(\bm{p}) \leq f_\ell(\bm{p}) \ \forall \ell \neq k \right\} \\
			&= \bigcap_{\ell \neq k} \left\{ \bm{p} \in \Delta_M : \frac{\sum_{j=1}^k p_j}{\sum_{j=1}^\ell p_j} \geq \frac{1-\nu_\ell}{1-\nu_k} \right\}.
		\end{aligned}
	\end{equation} 
	
	Define $\lambda_k = \frac{\sum_{j=1}^k d_j }{\sum_{j=1}^{k+1} d_j}$ for each $k=1,\ldots,M-1$. With some algebra, we can express each $s_i$ as
	\begin{equation}
		\begin{aligned}
			s_i = \left[ \! \sum_{j=1}^{M-i+1} \frac{n_j}{d_j}(1\!-\! \lambda_{j-1}) \left( \prod_{k=j}^{M-1} \lambda_k \right) \! \right] +  \left( 1\!-\!\! \prod_{j=M-i+1}^{M-1} \lambda_j \right) \\
		\end{aligned}
	\end{equation}
	Using the identities $\sum_{k=1}^M (1 \!-\! \lambda_{k-1})\left( \prod_{j=k}^{M-1} \lambda_j \right) = 1$ and $\sum_{k=1}^\ell (1\!-\! \lambda_{k-1})\left( \prod_{j=k}^{M-1} \lambda_j \right) = \prod_{j=\ell}^{M-1} \lambda_j$ for $\ell \!=\! 1,\ldots,M-1$, we obtain (omitting the algebraic steps)
	\begin{equation}
		\begin{aligned}
			f(\bm{p}) &= \sum_{i=1}^M (1\!-\!\lambda_{i-1}) \!\! \left(\prod_{j=i}^{M-1}\lambda_j\right)\!\!\left[ \left(\frac{n_i}{d_i}-1 \right) \!\! \left( \sum_{j=1}^{M-i+1} p_j\right) \!+\! 1\right] \\
			%&\geq \sum_{i=1}^M (1\!-\!\lambda_{i-1}) \!\! \left(\prod_{j=i}^{M-1}\lambda_j\right) \! \!\left[ \left(r_i-1 \right) \!\! \left( \sum_{j=1}^{M-i+1} p_j\right) + 1\right]
		\end{aligned}
	\end{equation}
	Now, suppose $\bm{p} \in V_k$ for $k \in \{1,\ldots,M\}$.  Using \eqref{eq:min_regions} and $\nu_i \leq n_i/d_i \leq 1$, we then have
	\begin{equation}
		\begin{aligned}
			f(\bm{p}) &\geq \sum_{i=1}^M \lambda_{i-1}\left(\prod_{j=i}^{M-1}\lambda_j\right)f_k(\bm{p}) = f_k(\bm{p}) .
		\end{aligned}
	\end{equation}
\end{proof}

%\section*{Acknowledgements}
%Funding agencies go here.

% \addtolength{\textheight}{-12cm}   % This command serves to balance the column lengths
                                  % on the last page of the document manually. It shortens
                                  % the textheight of the last page by a suitable amount.
                                  % This command does not take effect until the next page
                                  % so it should come on the page before the last. Make
                                  % sure that you do not shorten the textheight too much.

%%%%%%%%%%%%%%%%%%%%%%%%%%%%%%%%%%%%%%%%%%%%%%%%%%%%%%%%%%%%%%%%%%%%%%%%%%%%%%%%

%%%%%%%%%%%%%%%%%%%%%%%%%%%%%%%%%%%%%%%%%%%%%%%%%%%%%%%%%%%%%%%%%%%%%%%%%%%%%%%%

%%%%%%%%%%%%%%%%%%%%%%%%%%%%%%%%%%%%%%%%%%%%%%%%%%%%%%%%%%%%%%%%%%%%%%%%%%%%%%%%

%%%%%%%%%%%%%%%%%%%%%%%%%%%%%%%%%%%%%%%%%%%%%%%%%%%%%%%%%%%%%%%%%%%%%%%%%%%%%%%%

\bibliographystyle{IEEEtran}
\bibliography{sources}

\end{document}